\documentclass{LMCS}

\def\doi{9(2:06)2013}
\lmcsheading%
{\doi}
{1--31}
{}
{}
{Sep.~10, 2012}
{Jun.~18, 2013}
{}

\usepackage{amsfonts}
\usepackage{amsmath}
\usepackage{amsthm}
\usepackage{amssymb}
\usepackage{subfig}
\usepackage{graphicx}
\usepackage{enumerate,hyperref}

\usepackage{tikz}
\usetikzlibrary{shapes,arrows,decorations.markings}
\tikzstyle{vertex}=[draw,minimum size=0.8cm,inner sep=0cm]
\tikzstyle{even}=[vertex,regular polygon, regular polygon sides=4]
\tikzstyle{odd}=[vertex,regular polygon, regular polygon sides=3]
\tikzstyle{separator}=[vertex,circle,minimum size=0.6cm]
\tikzstyle{node}=[vertex,circle,minimum size=2.4cm]
\tikzset{-<-/.style={decoration={
  markings,
  mark=at position .6 with {\arrow{triangle 45 reversed}}},postaction={decorate}}}
\tikzset{->-/.style={decoration={
  markings,
  mark=at position .8 with {\arrow{triangle 45}}},postaction={decorate}}}

\DeclareMathOperator{\pri}{pri}
\DeclareMathOperator{\io}{Inf}
\DeclareMathOperator{\play}{Play}
\DeclareMathOperator{\maxpri}{MaxPri}
\DeclareMathOperator{\winner}{Winner}
\DeclareMathOperator{\paths}{Paths}
\DeclareMathOperator{\exit}{Exit}
\DeclareMathOperator{\profile}{Profile}
\DeclareMathOperator{\splitgame}{Split}

\DeclareMathOperator{\subtree}{Subtree}
\DeclareMathOperator{\direction}{Direction}
\DeclareMathOperator{\guarded}{Guarded}
\DeclareMathOperator{\minpath}{MinPath}
\DeclareMathOperator{\maxpath}{MaxPath}
\DeclareMathOperator{\update}{\mathsf{Update}}

\DeclareMathOperator{\nextfun}{\mathsf{Next}}
\DeclareMathOperator{\ncnext}{\mathsf{Next_{NC}}}
\DeclareMathOperator{\twnext}{\mathsf{Next_{TW}}}
\DeclareMathOperator{\history}{\mathsf{Hist}}
\DeclareMathOperator{\nchistory}{\mathsf{Hist_{NC}}}
\DeclareMathOperator{\twhistory}{\mathsf{Hist_{TW}}}
\DeclareMathOperator{\point}{Point}
\DeclareMathOperator{\slice}{\mathtt{Slice}}
\DeclareMathOperator{\reduce}{\mathtt{Reduce}}
\DeclareMathOperator{\sr}{\mathtt{SR}}
\DeclareMathOperator{\first}{First}

\DeclareMathOperator{\simulate}{\mathsf{Simulate}}

\newcommand{\nats}{\mathbb N}

\newcommand{\acc}{\texttt{accept}}
\newcommand{\rej}{\texttt{reject}}
\newcommand{\follow}{\mathtt{follow}}

\begin{document}

\title[Time and Parallelizability Results for Parity Games with Bounded Width]{Time and Parallelizability Results for Parity Games with Bounded Tree and DAG Width}
\author[J.~Fearnley]{John Fearnley}
\address{Department of Computer Science, University of Liverpool, Liverpool, UK}
\email{\{john.fearnley,\,sven.schewe\}@liverpool.ac.uk}

\author[S.~Schewe]{Sven Schewe}
\address{\vskip-6 pt}

\keywords{Parity games, treewidth} \subjclass{F.2.2} \ACMCCS{[{\bf
      Theory of computation}]: Logic---Verification by model checking;
  Design and analysis of algorithms---Parameterized complexity and
  exact algorithms}

\begin{abstract}
Parity games are a much researched class of games in NP~$\cap$ CoNP that are not
known to be in P. Consequently, researchers have considered specialised
algorithms for the case where certain graph parameters are small. In this paper,
we study parity games on graphs with bounded treewidth, and graphs with bounded
DAG width. We show that parity games with bounded DAG width can be solved in
$O(n^{k+3} \cdot k^{k + 2} \cdot (d + 1)^{3k + 2})$ time, where~$n$,~$k$,
and~$d$ are the size, treewidth, and number of priorities in the parity game.
This is an improvement over the previous best algorithm, given by Berwanger et
al., which runs in $n^{O(k^2)}$ time. We also show that, if a tree
decomposition is provided, then parity games with bounded treewidth can be
solved in $O(n \cdot k^{k + 5} \cdot (d + 1)^{3k + 5})$ time. This improves over
previous best algorithm, given by Obdr\v{z}\'alek, which runs in $O(n \cdot
d^{2(k+1)^2})$ time. Our techniques can also be adapted to show that the problem
of solving parity games with bounded treewidth lies in the complexity class
NC$^2$, which is the class of problems that can be efficiently parallelized.
This is in stark contrast to the general parity game problem, which is known to
be P-hard, and thus unlikely to be contained in NC.
\end{abstract}

\maketitle

\section{Introduction}
A parity game is a two player game that is played on a finite directed graph.
The problem of solving a parity game is known to lie in
NP~$\cap$~CoNP~\cite{McNaughton/93/Games}, and a sub-exponential time algorithm
is known for the problem~\cite{JPZ08}. However, despite much effort, this
problem is not known to be in P. Due to the apparent difficulty of solving
parity games, recent work has considered special cases, where the input is
restricted in some way. In particular, people have studied parity games where
the input graph is restricted by a graph parameter. For example, parity games
have been shown to admit polynomial time algorithms whenever the input graph has
bounded treewidth~\cite{Obdrzalek/03/TreeWidth}, DAG width
\cite{Berwanger+all/06/ParityDAG}, clique width~\cite{O07}, Kelly
width~\cite{HK08}, or entanglement \cite{Berwanger+Graedel/04/entanglement}. In
this paper we study the parity game problem for graphs of bounded treewidth and
graphs of bounded DAG width.

Parity games are motivated by applications in model checking. The problem of
solving a parity game is polynomial-time equivalent to the modal $\mu$-calculus
model checking problem~\cite{Emerson+all/93/mu,stirling95}. In model checking,
we typically want to check whether a large system satisfies a much smaller
formula. It has been shown that many practical systems have bounded treewidth.
For example, it has been shown that the control flow graphs of goto free Pascal
programs have treewidth at most~$3$, and that the control flow graphs of goto
free C programs have treewidth at most~$6$~\cite{Mikkel98}. The same paper also
shows that tree decompositions, which are costly to compute in general, can
be generated in linear time with small constants for these control flow graphs.
Moreover, Obdr\v{z}\'alek  has shown that, if the input system has treewidth
$k$, and if the $\mu$-calculus formula has $m$ sub-formulas, then the modal
$\mu$-calculus model checking problem can be solved by determining the winner of
a parity game that has treewidth at most $k \cdot
m$~\cite{Obdrzalek/03/TreeWidth}. Since~$m$ is usually much smaller than the
size of the system, we have a strong motivation for solving parity games with
bounded treewidth.

Parity games with bounded treewidth were first studied by
Obdr\v{z}\'alek~\cite{Obdrzalek/03/TreeWidth}. He gave an algorithm that runs in
$O(n \cdot k^2 \cdot d^{2(k+1)^2})$ time, while using $d^{O(k^2)}$ space, where
$d$ is the number of priorities in the parity game, and~$k$ is the treewidth of
the game. This result shows that, if the treewidth of a parity game is bounded,
then there is a polynomial time algorithm for solving the game. However, since
the degree of this polynomial depends on $k^2$, this algorithm is not at all
practical. For example, if we wanted to solve the model checking problem for the
control flow graph of a C program with treewidth $6$, and a $\mu$-calculus
formula with a single sub-formula, the running time of the algorithm will
already be $O(n \cdot d^{98})$.

DAG width is a generalisation of treewidth to directed graphs. It was defined by
Berwanger et al.\ ~\cite{Berwanger+all/06/ParityDAG,BDHKO12}, and independently
by Obdr\v{z}\'alek~\cite{O06}. Since parity games are played on
directed graphs, it is natural to ask whether Obdr\v{z}\'alek's algorithm can be
generalised to parity games with bounded DAG width. Berwanger et al.\
~\cite{Berwanger+all/06/ParityDAG,BDHKO12} showed that the algorithm can indeed
be generalised. Their work is a generalisation of Obdr\v{z}\'alek's techniques
to parity games with bounded DAG width, but since every tree decomposition can
also be viewed as a DAG decomposition, their algorithm can also be applied to
graphs with bounded treewidth. In comparison to Obdr\v{z}\'alek's original
algorithm, they achieve an improved space complexity of $d^{O(k)}$. However,
some operations of their algorithm still take $d^{O(k^2)}$ time, and thus their
algorithm is no faster than the original.

More recently, an algorithm has been proposed for parity games with ``medium''
tree width~\cite{Fearnley+Lachish/11/treewidth}. This algorithm runs in time
$n^{O(k \cdot \log n)}$, and it is therefore better than Obdr\v{z}\'alek's
algorithm whenever $k \in \omega(\log n)$. On the other hand, this algorithm
does not provide an improvement for parity games with small treewidth.

\subsection{Our contribution.} 

In this paper, we take a step towards providing a practical algorithm for these
problems. We show how parity games with bounded treewidth and DAG width can be
solved solved in logarithmic space on an alternating Turing machine. We then use
different variants of this technique to show three different results. Our first
result is for parity games with bounded DAG width. We show that these games can
be solved by an alternating Turing machine while using $(k + 3) \cdot \log n +
(k + 2) \cdot \log(k) + (3k + 2) \cdot \log(d + 1)$ bits of storage. This
then implies that there is a deterministic algorithm that runs in $O(n^{k+3}
\cdot k^{k + 2} \cdot (d + 1)^{3k + 2})$ time. Since the exponent of $n$ is
$k+3$, rather than $k^2$, this result improves over the $n^{O(k^2)}$ algorithm
given by Berwanger et al.

We then turn our attention to parity games with bounded treewidth. Using similar
techniques, we are able to obtain an algorithm that solves parity games with
bounded treewidth in 
$O(n \cdot (k+1)^{k+5} \cdot (d+1)^{3k + 5})$ 
time on a
deterministic Turing machine. Again, since the exponent of $d$ does not depend
on $k^2$, this is an improvement over the original algorithm of Obdr\v{z}\'alek,
which runs in $O(n \cdot k^2 \cdot d^{2(k+1)^2})$ time.

For the treewidth result, we must pay attention to the complexity of computing a
tree decomposition. In the DAG width result, this did not concern use, because a
DAG decomposition can be computed in $O(n^{k+2})$ time, and this is therefore
absorbed by the running time of the algorithm. In the treewidth case, however,
things are more complicated. From a theoretical point of view, we could apply
the algorithm of Bodlaender~\cite{bodlaender96}, which runs in $O(n \cdot f(k))$
time. However, the function $f$ lies in $2^{O(k^3)}$, and this has the potential
to dwarf any improvement that our algorithm provides. As we have
mentioned, in some practical cases, a tree decomposition can be computed
cheaply. Otherwise, we suggest that the approximation algorithm of
Amir~\cite{amir01} should be used to find a $4.5$-approximate tree
decomposition. Our algorithm runs in  $O(n \cdot k^{4.5k + 5} \cdot (d +
1)^{13.5k + 5})$ time when a $4.5$-approximate tree decomposition is used. Note
that Obdr\v{z}\'alek's algorithm also requires a tree decomposition as input.
Thus, in this case, the gap between the two algorithms is even wider, as using a
$4.5$-approximate tree decomposition causes Obdr\v{z}\'alek's algorithm to run
in $O(n \cdot d^{40.5k^2 + 18k + 2})$ time.

Finally, we are able to adapt these techniques to show a parallelizability
result for parity games with bounded treewidth. We are able to provide an
alternating Turing machine that solves the problem in $O(k^2 \cdot (\log n)^2)$
time while using $O(k \cdot \log n)$ space. This version of the algorithm does
not require a precomputed tree decomposition. Hence, using standard results in
complexity theory~\cite{Allender}, we have that the problem lies in the
complexity class NC$^2 \subseteq$ NC, which is the class of problems that can be
efficiently parallelized. 

This result can be seen in stark contrast to the complexity of parity games on
general graphs: parity games are known to be P-hard by a reduction from
reachability games, and P-hardness is considered to be strong evidence that an
efficient parallel algorithm does not exist. Our result here shows that, while
we may be unable to efficiently parallelize the $\mu$-calculus model checking
problem itself, we can expect to find efficient parallel algorithms for the
model checking problems that appear in practice.

\section{Preliminaries}

\subsection{Parity games}

A parity game is a tuple $(V, V_\text{0}, V_\text{1}, E, \pri)$,
where~$V$ is a set of vertices and~$E$ is a set of edges, which together form a
finite directed graph. The sets~$V_\text{0}$ and $V_\text{1}$ partition~$V$ into
vertices belonging to player Even and player Odd, respectively. The function
$\pri : V \rightarrow D$ assigns a \emph{priority} to each vertex from the set
of priorities~$D \subseteq \nats$. It is required that the game does not contain
any dead ends: for each vertex $v \in V$ there must exist an edge $(v,
u) \in E$.

We define the \emph{significance} ordering~$\prec$ over $D$. This ordering
represents how attractive each priority is to player Even. For two priorities
$a, b \in \nats$, we have $a \prec b$ if one of the following conditions holds:
(1) $a$ is odd and $b$ is even, (2) $a$ and $b$ are both even and $a < b$, or
(3) $a$ and $b$ are both odd and $a > b$. We say that $a \preceq b$ if either $a
\prec b$ or $a = b$.

At the beginning of the game, a token is placed on a starting vertex~$v_0$. In
each step, the owner of the vertex that holds the token must choose one outgoing
edge from that vertex and move the token along it. In this fashion, the two
players form an infinite path $\pi = \langle v_0, v_1, v_2, \dots \rangle$,
where~$(v_i, v_{i+1}) \in E$ for every~$i \in \nats$. To determine the winner of
the game, we consider the set of priorities that occur \emph{infinitely often}
along the path. This is defined to be: $\io(\pi) = \{ d \in \nats \; : \;
\text{For all } j \in \nats \text{ there is an } i > j \text{ such that
}\pri(v_i) = d \}.$ Player Even wins the game if the highest priority occurring
infinitely often is even, and player Odd wins the game if it is odd. In other
words, player Even wins the game if and only if $\max(\io(\pi))$ is even.

A positional strategy for Even is a function that chooses one outgoing edge for
every vertex in $V_\text{0}$. A strategy is denoted by $\sigma :
V_\text{0} \rightarrow V$, with the condition that $(v, \sigma(v)) \in E$
for every Even vertex~$v$. Positional strategies for player Odd are defined
analogously. The sets of positional strategies for Even and Odd are denoted by
$\Sigma_\text{0}$ and $\Sigma_\text{1}$, respectively.  Given two positional
strategies~$\sigma$ and~$\tau$, for Even and Odd, respectively, and a starting
vertex~$v_0$, there is a unique path $\langle v_0, v_1, v_2 \dots \rangle$,
where~$v_{i+1} = \sigma(v_i)$ if~$v_i$ is owned by Even, and~$v_{i+1} =
\tau(v_i)$ if~$v_i$ is owned by Odd. This path is known as the \emph{play}
induced by the two strategies~$\sigma$ and~$\tau$, and will be denoted by
$\play(v_0, \sigma, \tau)$.

For each $\sigma \in \Sigma_0$, we define $G \restriction \sigma$ to be
the modification of $G$ where Even is forced to play~$\sigma$. That
is, an edge $(v, u) \in E$ is included in $G \restriction \sigma$ if either $v
\in V_1$, or $v \in V_0$ and $\sigma(v) = u$. We define $G \restriction \tau$
for all $\tau \in \Sigma_1$ analogously.

An infinite path $\langle v_0, v_1, \dots \rangle$ is said to be
\emph{consistent} with an Even strategy $\sigma \in \Sigma_\text{0}$ if $v_{i+1}
= \sigma(v_i)$ for every $i$ such that $v_i \in V_\text{0}$. If~$\sigma \in
\Sigma_\text{0}$ is a strategy for Even, and~$v_0$ is a starting vertex, then we
define $\paths(v_0, \sigma)$ to give every path starting
at $v_0$ that is consistent with $\sigma$. An Even strategy $\sigma \in
\Sigma_0$ is called a \emph{winning strategy} for a vertex $v_0 \in V$ if
$\max(\io(\pi))$ is even for all $\pi \in \paths_\text{0}(v_0, \sigma)$.
The strategy~$\sigma$ is said to be winning for a set of vertices $W \subseteq
V$ if it is winning for all~$v \in W$. Winning strategies for player
Odd are defined analogously. 

A game is said to be \emph{positionally determined} if one of the two players
always has a positional winning strategy. We now give a fundamental theorem,
which states that parity games are positionally determined.

\begin{thm}[\cite{Emerson+Jutla/91/Memoryless,mostowski91}]
\label{paritydeterminacy}
In every parity game, the set of vertices~$V$ can be partitioned into
\emph{winning sets} $(W_\text{0}, W_\text{1})$, where Even has a positional
winning strategy for $W_\text{0}$, and Odd has a positional winning strategy for
$W_\text{1}$.
\end{thm}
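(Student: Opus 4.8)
The plan is to prove the theorem by the classical attractor-based induction due to Zielonka and McNaughton, establishing determinacy and the existence of \emph{positional} winning strategies simultaneously. The central tool is the \emph{attractor}. For a player $i \in \{0,1\}$ and a target set $U \subseteq V$, define $\mathrm{Attr}_i(U)$ as the least set containing $U$ and closed under adding any $v \in V_i$ with some successor already in the set, and any $v \in V_{1-i}$ all of whose successors are already in the set. First I would prove the \emph{attractor lemma}: player $i$ has a single positional strategy that, from every vertex of $\mathrm{Attr}_i(U)$, forces the token into $U$ within finitely many steps; this follows by tracking the rank at which each vertex enters the least fixed point and always moving to a strictly smaller rank. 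I would also observe that the complement $V \setminus \mathrm{Attr}_i(U)$ induces a well-defined subgame (it has no dead ends, since a $V_i$-vertex outside the attractor keeps all its successors outside, while a $V_{1-i}$-vertex outside must retain at least one successor outside) and that it is a \emph{trap} for player $1-i$, in the sense that player $i$ can force the play to remain inside it.

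With these tools I would argue by induction on $|V|$. In the inductive step, let $d$ be the maximum priority occurring in~$G$, let $p = d \bmod 2$ be the player favoured by $d$, and let $\bar{p} = 1 - p$. Set $U = \{v : \pri(v) = d\}$, $A = \mathrm{Attr}_p(U)$, and consider the subgame $G' = G \setminus A$, which has fewer vertices and to which the induction hypothesis applies, yielding a positional partition $(W_0', W_1')$. I would then split into two cases. If $W_{\bar p}' = \emptyset$, I claim player $p$ wins all of $G$: combining the attractor strategy on $A$ (plus an arbitrary choice on $U \cap V_p$) with the inductive winning strategy on $G'$ yields a positional strategy under which every play either remains eventually within $G'$ and is won by $p$, or visits priority $d$ infinitely often and is again won by $p$, because $d$ is the maximum priority and has the parity of $p$.

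If instead $W_{\bar p}' \neq \emptyset$, I would form $B = \mathrm{Attr}_{\bar p}(W_{\bar p}')$, the $\bar p$-attractor computed in the full game, and recurse on the subgame $G'' = G \setminus B$, which is strictly smaller because $B \supseteq W_{\bar p}' \neq \emptyset$. The induction hypothesis gives a positional partition $(W_0'', W_1'')$ of $G''$. I would then verify that $W_p = W_p''$ and $W_{\bar p} = W_{\bar p}'' \cup B$ form the desired partition of $G$: on $W_p''$ player $p$ uses the inductive strategy, which wins because $G''$ is a trap for $\bar p$ and so the play never escapes $G''$; on $B$ player $\bar p$ attracts into $W_{\bar p}'$ and then plays the positional winning strategy inherited from $G'$, while on $W_{\bar p}''$ player $\bar p$ plays the inductive strategy for $G''$.

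I expect the main obstacle to be the bookkeeping in the second case: showing that the strategies for $\bar p$ on the distinct regions $B \cup W_{\bar p}'$ and $W_{\bar p}''$ fuse into a single positional winning strategy, and in particular that any play consistent with this combined strategy cannot escape the intended region and genuinely satisfies the parity condition. This requires careful use of the trap property in both directions, checking that plays which migrate between regions can only do so in the direction favourable to the intended winner. Once these closure-and-parity verifications are in place, the partition $(W_0, W_1)$ together with the positionality of both winning strategies follows, which is exactly the statement of Theorem~\ref{paritydeterminacy}.
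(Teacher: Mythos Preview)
The paper does not supply its own proof of Theorem~\ref{paritydeterminacy}: it is stated as a classical background result, attributed to Emerson--Jutla and Mostowski, and used without argument. There is therefore nothing in the paper to compare your attempt against.

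That said, your proposal is a correct and standard proof. What you outline is essentially Zielonka's attractor-based induction (often attributed to McNaughton--Zielonka), which is the usual modern route to positional determinacy of parity games. The structure---induct on $|V|$, peel off the attractor to the top priority, recurse, and in the nontrivial case peel off the opponent's attractor to the inductive winning region and recurse again---is exactly right, and your identification of the delicate point (gluing $\bar p$'s strategies across $W_{\bar p}'$, $B\setminus W_{\bar p}'$, and $W_{\bar p}''$ and verifying the trap properties so that plays cannot escape in an unfavourable direction) is accurate. One small sharpening worth making explicit when you write it out: the reason $\bar p$'s $G'$-winning strategy on $W_{\bar p}'$ still wins in $G$ is that $V\setminus A$ is closed under $p$'s moves (every $V_p$-vertex outside a $p$-attractor has \emph{all} successors outside), so $p$ cannot escape $W_{\bar p}'$ into $A$, and within $V\setminus A$ the game coincides with $G'$ from $p$'s point of view. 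With that observation spelled out, the bookkeeping you flag as the main obstacle goes through cleanly.
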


In this paper we study the following computational problem for parity games:
given a starting vertex $s$, determine whether $s \in W_0$ or $s \in W_1$.

\subsection{Treewidth.}

Treewidth originated from the work of
Robertson and Seymour~\cite{RS84}. Treewidth is a complexity measure for undirected graphs.
Thus, to define the treewidth of a parity game, we will use the treewidth of the
undirected graph that is obtained when the orientation of the edges is ignored.
We begin by defining tree decompositions.

\begin{defi}[Tree Decomposition]
\label{def:treedecomp}
For each game~$G = (V, V_\text{0}, V_\text{1}, E, \pri)$, the pair $(T, X)$,
where $T = (I, J)$ is an undirected tree and $X = \{X_i \; : \; i \in I\}$ is a
family of subsets of~$V$, is a tree decomposition of~$G$ if all of the following
hold:
\begin{enumerate}[(1)]
\item $\bigcup_{i \in I} X_i = V$.
\item For every $(v, u) \in E$ there is an $i \in I$ such that $v \in X_i$
and $u \in X_i$.
\item For every $i, j \in I$, if $k \in I$ is on the unique path from $i$ to $j$
in $T$, then $X_i \cap X_j \subseteq X_k$.
\end{enumerate}
\end{defi}

\noindent The \emph{width} of a tree decomposition $(T, X)$ is $\max\{|X_i| \; :
\; i \in I\}$. The \emph{treewidth} of a game~$G$ is the smallest width of a
tree decomposition of~$G$. Note that this is a slightly unusual definition,
because the width of a tree decomposition is usually defined to be $\max\{|X_i|
- 1\; : \; i \in I\}$. However, we use our definition in order to keep the
definitions the same between treewidth and DAG width.

Let $(T = (I, J), X)$ be a tree decomposition for a parity game $G = (V,
V_\text{0}, V_\text{1}, E, \pri)$. Let $i \in I$ be a node in the tree
decomposition, and let $v \in V$ be a vertex in the parity game with $v \notin
X_i$. Let $k \in I$ be the closest node to $i$ in $T$ such that $x \in X_k$.
Since $T$ is a tree decomposition, the node $k$ is well defined. Let $(i, j) \in
J$ be the first edge on the path from $i$ to $k$. We define $\direction(X_i, v)$
to be the function that outputs the node $j$.

There are multiple approaches for computing tree decompositions. From a
theoretical point of view, the best known algorithm is the algorithm of
Bodlaender~\cite{bodlaender96}. If the treewidth is bounded, then this is a
linear time algorithm: it runs in $O(n \cdot f(k))$ time. However, the constant
factor hidden by the function $f(k)$ is in the order of $2^{O(k^3)}$, which
makes the algorithm impractical. For a more practical approach, we can apply the
algorithm of Amir~\cite{amir01} to approximate the treewidth of the graph. This
algorithm takes a graph~$G$ and an integer~$k$, and in $O(2^{3k} \cdot n^2 \cdot k^{3/2})$
time either finds a tree decomposition of width at most~$4.5k$ for~$G$, or
reports that the tree-width of~$G$ is larger than~$k$.

In this paper, we will assume that the size of the tree decomposition is linear
in the size of the parity game. More precisely, we assume that, if $(T = (I, J),
X)$ is a tree decomposition of a parity game $(V, V_\text{0}, V_\text{1}, E,
\pri)$, then we have $|I| \le |V|$. It has been shown that every tree
decomposition can be modified, in polynomial time, to satisfy this
property~\cite[Lemma 2.2]{bodlaender96}\footnote{Specifically, we refer to Lemma
2.2 in the SIAM Journal on Computing version of this paper.}. Therefore we can
make this assumption without loss of generality.

\subsection{DAG width.}
\label{sec:dagwidth}

As opposed to treewidth, which is a measure for undirected graphs, DAG
width~\cite{BDHKO12} is a measure for directed graphs. A directed graph $G = (V,
E)$ is a DAG if it contains no directed cycles. If $G$ is a DAG, then we define
$\sqsubseteq_D$ to be the reflexive transitive closure of the edge relation of~$G$.
A \emph{source} in the DAG is a vertex $v \in V$ that is minimal in the
$\sqsubseteq_D$ ordering, and a \emph{sink} in the DAG is a vertex that is maximal
in the $\sqsubseteq_D$ ordering. Furthermore, given two sets $U, W \subseteq V$, we
say that $W$ \emph{guards} $U$ if, for every edge $(v, u) \in E$, where $v \in
U$, we have $u \in W \cup U$. In other words, $W$ guards $U$ if the only way to
leave $U$ is to pass through a vertex in $W$. We can now define a DAG
decomposition. 

\begin{defi}[DAG Decomposition]
\label{def:dagwidth}
Let $G = (V, V_\text{0}, V_\text{1}, E, \pri)$ be a parity game. A DAG
decomposition of $G$ is a pair $(\mathcal{D} = (I, J), X)$, where $\mathcal{D}$ is a DAG, and
$X = \{X_i \; : \; i \in I\}$ is a family of subsets of~$V$, which satisfies the
following conditions:
\begin{enumerate}[(1)]
\item $\bigcup_{i \in I} X_i = V$.
\item For every edge $(i, j) \in J$, the set $X_i \cap X_j$ guards $(\bigcup_{j
\sqsubseteq_D k} X_k) \setminus X_i$.
\item For every $i, j, k \in I$, if $i \sqsubseteq_D k \sqsubseteq_D j$, then $X_i \cap X_j \subseteq X_k$.
\end{enumerate}
\end{defi}

\noindent The width of a DAG decomposition $(D, X)$ is $\max\{|X_i| \; : \; i
\in I\}$. The \emph{DAG width} of a game~$G$ is the smallest width of a DAG
decomposition of~$G$. In accordance with the second condition in
Definition~\ref{def:dagwidth}, for each $X_i \in X$ we define:
\begin{equation*}
\guarded(X_i) =
(\bigcup_{(i, j) \in J} (\bigcup_{j \sqsubseteq_D k} X_k)) \setminus X_i.
\end{equation*}

We can also define $\direction$ for DAG decompositions. Suppose that
$(\mathcal{D} = (I,
J), X)$ is a DAG decomposition of a parity game $G = (V, V_\text{0}, V_\text{1},
E, \pri)$. Let $i \in I$ be a node in the DAG decomposition, and let $v \in V$
be a vertex in the parity game with $v \in \guarded(X_i)$. From the properties
of a DAG decomposition, there must be at least one $j \in I$ such that $(i, j)
\in J$ and either $v \in X_j$ or $v \in \guarded(X_j)$. We define
$\direction(X_i, v)$ to arbitrarily select a node $j \in I$ that satisfies this
property.

The only algorithm for computing DAG decompositions was given by
Berwanger et. al.\ ~\cite{BDHKO12}. They showed that, if a graph has DAG width
$k$, then a DAG decomposition can be computed in $O(n^{k+2})$ time. In the case
of DAG width, we cannot assume that the size of the DAG decomposition is linear
in the size of the graph. In fact, the best known upper bound on the number of
nodes in a DAG decomposition is $n^{k+1}$, and the number of edges is $n^{k+2}$,
where $n$ is the number of vertices in the graph, and $k$ is the width of the
DAG decomposition~\cite[Proof of Theorem 16]{BDHKO12}. 



\section{Strategy Profiles}

In this section we define strategy profiles, which are data structures that
allow a player to give a compact representation of the relevant properties of
their strategy. 
Our algorithms in Sections~\ref{sec:time},~\ref{sec:complexity},
and~\ref{sec:space} will use strategy profiles to allow the players to declare
their strategies in a small amount of space. Throughout this section we will
assume that there is a \emph{starting vertex} $s \in V$ and a set of \emph{final
vertices} $F \subseteq V$. Let $\sigma \in \Sigma_0$ be a strategy for Even. The
strategy profile of~$\sigma$ describes the outcome of a modified parity game
that starts at $s$, terminates whenever a vertex $u \in F$ is encountered, and
in which Even is restricted to only play $\sigma$.

For each $u \in F$, we define $\paths(\sigma, s, F, u)$ to be the set of paths
from~$s$ to~$u$ that are consistent with~$\sigma$ and that do not visit a vertex
in~$F$. More formally, $\paths(\sigma, s, F, u)$ contains every path of the form
$\langle v_0, v_1, v_2, \dots v_k \rangle$ in $G \restriction \sigma$, for which
both of the following conditions hold: 
\begin{iteMize}{$\bullet$}
\item the vertex $v_0 = s$ and the vertex $v_k = u$, and 
\item for all $i$ in the range $0 \le i \le k-1$ we have $v_i \notin F$.
\end{iteMize}
For each strategy $\tau \in \Sigma_1$, we define the functions $\paths(\tau, s,
F, u)$ analogously.

Recall that $\preceq$ is the significance ordering over priorities. For each $u \in F$, the function $\exit(\sigma, s, F, u)$, gives the best
possible priority, according to $\preceq$, that Odd can visit when Even plays
$\sigma$ and Odd chooses to move to~$u$. This function either gives a priority
$p \in D$, or, if Odd can never move to~$u$ when Even plays $\sigma$, the
function gives a special symbol $-$, which stands for ``unreachable''. We will
also define this function for Odd strategies $\tau \in \Sigma_1$. Formally, for
every finite path $\pi = \langle v_1, v_2, \dots, v_k \rangle$, we define
$\maxpri(\pi) = \max \{\pri(v_i) \; : \; 1 \le i \le k\}$. Furthermore, we
define, for $\sigma \in \Sigma_0$ and $\tau \in \Sigma_1$:
\begin{align*}
\minpath(\sigma, s, F, u) &= \min_{\preceq} \{\maxpri(\pi) : \pi \in
\paths(\sigma,
s, F, u)\}, \\
\maxpath(\tau, s, F, u) &= \max_{\preceq} \{\maxpri(\pi) : \pi \in \paths(\tau,
s, F, u)\}.
\end{align*}
For every strategy $\chi \in \Sigma_0 \cup \Sigma_1$ and every $u
\in F$ we define:
\begin{equation*}
\exit(\chi, s, F, u) = \begin{cases}
- & \text{if $\paths(\chi, s, F, u) = \emptyset$,} \\
\minpath(\chi, s, F, u)&
\text{if $\paths(\chi, s, F, u) \ne \emptyset$ and $\chi \in \Sigma_0$,} \\
\maxpath(\chi, s, F, u)&
\text{if $\paths(\chi, s, F, u) \ne \emptyset$ and $\chi \in \Sigma_1$.}
\end{cases}
\end{equation*}

We can now define the strategy profile for each strategy $\chi \in \Sigma_0 \cup
\Sigma_1$. We define $\profile(\chi, s, F)$ to be a function $F \rightarrow D
\cup \{-\}$ such that $\profile(\chi, s, F)(u) = \exit(\chi, s, F, u)$ for each
$u \in F$. 

\section{Outline}

In this section we give an outline of the rest of the paper. We start by
describing simulated parity games, which are the foundation upon which all our
results are based. Then we describe how this simulation game can be used to
prove our three results.

\subsection{Simulated Parity Games}

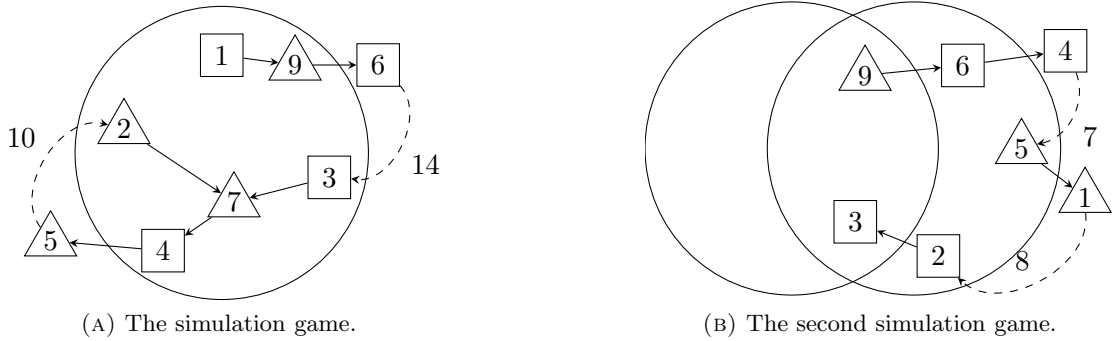
\begin{figure}
\subfloat[The simulation game.]
{
	\label{fig:sim}
	\begin{tikzpicture}[>=stealth,scale=0.65]
	\draw (0, 0) circle (3cm);
	\node (2) at (-2,0.5) [odd] {2};
	\node (5) at (-3.5,-1.8) [odd] {5}
		edge[->, bend left=70,dashed] node[auto] {10} (2);
	\node (4) at (-1.2,-2) [even] {4}
		edge[->] (5);
	\node (1p) at (0.25,-1) [odd] {7}
		edge[->] (4)
		edge[<-] (2);
	\node (3) at (2.2,-0.5) [even] {3}
		edge[->] (1p);
	\node (6) at (3.2,1.8) [even] {6}
		edge[->, bend left = 70,dashed] node[auto] {14} (3);
	\node (9) at (1.5,1.8)  [odd] {9}
		edge[->] (6);
	\node (e1) at (0,2)  [even] {1}
		edge[->] (9);
	\end{tikzpicture}
}
\hfill
\subfloat[The second simulation game.]
{
	\label{fig:verification}
	\begin{tikzpicture}[>=stealth,scale=0.65]
	\clip (-3.0,3.1) rectangle (6.6,-3.1);
	\draw (0, 0) circle (3cm);
	\draw (2.5, 0) circle (3cm);
	\node (3) at (1.3,-1.5) [even] {3};
	\node (2) at (3,-2.2) [even] {2}
		edge[->] (3);
	\node (1) at (6.0, -1.0) [odd] {1}
		edge[->, bend left=70,dashed] node[auto,swap] {8} (2);
	\node (5) at (4.7,0) [odd] {5}
		edge[->] (1);
	\node (4) at (5.6,2) [even] {4}
		edge[->, bend left=50,dashed] node[auto] {7} (5);
	\node (6) at (3.5,1.7) [even] {6}
		edge[->] (4);
	\node (9) at (1.5,1.5) [odd] {9}
		edge[->] (6);
	\end{tikzpicture}
}
\caption{Example runs of the simulation game.}
\label{fig:example}
\end{figure}

Suppose that we have a DAG or tree decomposition of width~$k$ for our parity
game. Suppose further that we want to determine the winner of some vertex $s \in
V$ in the parity game. Our approach is to find some set of vertices $S \subseteq
V$, where $S = X_i$ for some node $i$ in our decomposition, such that $s \in S$.
We then play a \emph{simulation game} on $S$, which simulates the whole parity
game using only the vertices in $S$.

The general idea behind the simulation game is shown in
Figure~\textsc{\ref{fig:sim}}. The large circle depicts the decomposition
node~$S$, the boxes represent Even vertices, and the triangles represent Odd
vertices. Since no two vertices share the same priority in this example, we will
use the priorities to identify the vertices. As long as both players choose to
remain in~$S$, the parity game is played as normal. However, whenever one of the
two players chooses to move to a vertex $v$ with~$v \notin S$ we simulate the
parity game using a strategy profile: Even is required to declare a strategy in
the form of a strategy profile $P$ for~$v$ and~$S$. Odd then picks some vertex
$u \in S$, and moves there with priority $P(u)$\footnote{The assignment of
players here is arbitrary: none of our results would change if we had Odd
propose a strategy profile, and Even pick a vertex in response.}. In the
diagram, the dashed edges represent these simulated decisions. For example, when
the play moved to the vertex~$6$, Even gave a strategy profile~$P$ with $P(3) =
14$, and Odd decided to move to the vertex~$3$. Together, the simulated and real
edges will eventually form a cycle, and the winner of the game is the winner of
this cycle. In our example, Even wins the game because the largest priority on
the cycle is $10$.

If Even always gives strategy profiles that correspond to some strategy $\sigma
\in \Sigma_0$, then the outcome of our simulated game will match the outcome
that would occur in the real parity game. On the other hand, it is possible that
Even could lie by giving a strategy profile~$P$ for which he has no strategy. To
deal with this, Odd is allowed to reject~$P$, which causes the two players to
move to a second simulation game, where Even is required to prove that he
does have a strategy for~$P$. 

Suppose that Odd rejected Even's first strategy profile in the game shown in
Figure~\textsc{\ref{fig:sim}}. Hence, Even must show that his strategy
profile~$P$, that contains $P(3) = 14$, is correct. To do this, we select a
second node $S' \subseteq V$ from the decomposition, and play a simulation game
on $S'$. An example run of this game is shown in
Figure~\textsc{\ref{fig:verification}}. The left circle represents $S$, and the
right circle represents $S'$. The game proceeds as before, by simulating a
parity game on $S'$. However, we add the additional constraint that, if a vertex
$u \in S$ is visited, then the game ends: Even wins the game if the largest
priority~$p$ seen on the path to~$u$ has $p \succeq P(u)$, and Odd wins
otherwise. In this example, Even loses the game because the largest priority
seen during the path is~$8$, and Even's strategy profile in
Figure~\textsc{\ref{fig:sim}} claimed that the largest priority $p$ should
satisfy $p \succeq 14$. We also use rejections to deal with the case where
Even's strategy profile never returns to $S$. If Even gives a strategy profile
$P$ with $P(u) = -$ for every $u \in S$, then Odd will reject $P$, and the game
will continue on $S'$ as before.

The simulation game will be formally defined in Section~\ref{sec:game}. We will
then go on to prove three different results using three distinct versions of the
simulation game. These versions differ in the way that the set~$S'$ is selected
whenever Odd rejects Even's strategy profile.

\subsection{The Time Complexity Results}
\label{sec:outline1}

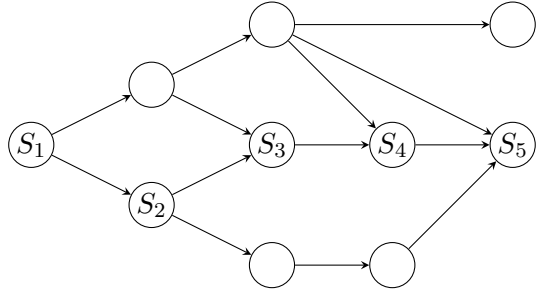
\begin{figure}
\begin{center}
\begin{tikzpicture}[>=stealth,scale=0.8]
\node (1) at (0, 0) [separator] {$S_1$};
\node (2) at (2, 1) [separator] {}
	edge[<-] (1);
\node (3) at (2, -1) [separator] {$S_2$}
	edge[<-] (1);
\node (4) at (4, 2) [separator] {}
	edge[<-] (2);
\node (5) at (4, 0) [separator] {$S_3$}
	edge[<-] (2)
	edge[<-] (3);
\node (6) at (4, -2) [separator] {}
	edge[<-] (3);
\node (7) at (6, -2) [separator] {}
	edge[<-] (6);
\node (8) at (6, 0) [separator] {$S_4$}
	edge[<-] (5)
	edge[<-] (4);
\node (9) at (8, 0) [separator] {$S_5$}
	edge[<-] (4)
	edge[<-] (8)
	edge[<-] (7);
\node (10) at (8, 2) [separator] {}
	edge[<-] (4);
\end{tikzpicture}
\end{center}
\caption{An example run of the simulation game for the time complexity result.}
\label{fig:timec}
\end{figure}

In our first result, to be shown in Section~\ref{sec:time}, we will show
that parity games with DAG width $k$ can be solved in $O(|V|^{k+3} \cdot k^{k +
2} \cdot (|D| + 1)^{3k + 2})$ time. Figure~\ref{fig:timec} gives an example run
of the simulation game that will be used in this result. The figure shows the
DAG decomposition of the parity game, and each circle depicts one of the nodes
in the decomposition. At the start of the game, we find a source node~$i$ of the
DAG decomposition, and we play a simulation game on the set of vertices $X_i$.
This node is shown as $S_1$ in the figure.

Recall, from Figure~\textsc{\ref{fig:example}}, that if the game on $S_1$ ends
with Odd rejecting Even's strategy profile, then we must pick a new set of
vertices, and play a second simulation game. The rule for picking this new set
of vertices will make use of the DAG decomposition. Suppose that the game on
$S_1$ ends when Odd rejects Even's strategy profile for $v$ and $S_1$. The next
set of vertices is chosen to be~$X_i$, where $i = \direction(S_1, v)$. This set
is shown as~$S_2$ in the diagram.

Thus, as we can see in the diagram, as Odd keeps rejecting Even's strategy
profiles, we walk along a path in the DAG decomposition. This means that the
game must eventually end. To see this, note that when we play a simulation game
on~$S_5$, the properties of a DAG decomposition ensure that we either form a
cycle in~$S_5$, or that we visit some vertex in~$S_4$, because~$S_4$ is a guard
of~$S_5$. Recall that the simulation game ends whenever we move back to a set
that we have already seen. Thus, when we play a simulation game on a sink in the
DAG decomposition, there cannot be any simulated moves, and the game will either
end in a cycle on the vertices in $S_5$, or when one of the two players moves to
a vertex in $S_4$.

Our plan is to implement the simulation game on an alternating Turing machine.
We will use the non-deterministic and universal states in the machine to
implement the moves of the two players. Our goal is to show that this
implementation uses at most $O(k \cdot \log |V|)$ space,  which would then
immediately imply our desired result.

There is one observation about DAG decompositions that is important for
obtaining the $O(k \cdot \log |V|)$ alternating space bound. Recall from
Figure~\textsc{\ref{fig:verification}} that if Odd rejects a strategy profile of Even,
then we must remember the strategy profile so that we can decide the winner in
subsequent simulation games. Our observation is that, if we are playing a
simulation game using a DAG decomposition, then we only ever have to
remember one previous strategy profile. For example, suppose that we are playing
the simulation game on $S_3$ in Figure~\ref{fig:timec}. Since $S_3 \subseteq
\guarded(S_2)$, we know that we cannot reach a vertex in $S_1$ without passing
through a vertex in $S_2$. However, the simulation game on $S_3$ ends
immediately when a in $S_2$ is visited. Hence, we can forget the strategy
profile on $S_1$. This observation is crucial for showing the $O(k \cdot \log
|V|)$ alternating space bound.

Our second result, which will be shown in Section~\ref{sec:complexity}, uses the
same techniques, but applies them to parity games with bounded treewidth. We
will show how the amount of space used by the alternating Turing machine can be
significantly reduced for the treewidth case, and from this we derive a $O(|V|
\cdot (k+1)^{k+5} \cdot (|D|+1)^{3k + 5})$ time algorithm for parity games with
bounded treewidth.

\subsection{The Parallelizability Result}
\label{sec:outline2}

In our third result, which will be proved in Section~\ref{sec:space}, we 
show that the problem of solving  parity games with bounded tree-width lies in
the complexity class NC$^2$. To do this, we will construct a rather different
version of the simulation game, which can be solved by an alternating Turing
machine in $O(k^2 \cdot (\log |V|)^2)$ time and $O(k \cdot \log |V|)$ space.
This then immediately implies that our problem lies in the class
NC$^2$~\cite[Theorem 22.15]{Allender}.

In the first and second results, we were able to compute a
DAG decomposition or tree decomposition, and then chose the sets $S$ to be nodes
in this decomposition. However, we do not take this approach here, because we do
not know of a way to compute a tree decomposition in $O(k \cdot \log |V|)$ space
on an alternating Turing machine. Instead, we will allow player Odd to chose
these sets.

However, giving Odd this freedom comes at a price. In the first result, we only
ever had to remember the last strategy profile that was rejected by Odd, and
this was critical for showing the required space bounds for the alternating
Turing machine. In this result, since Odd could potentially select any set, we may have to
remember all of the previous strategy profiles to ensure that the simulation
game eventually terminates. But, if we remember too many strategy profiles, then
we will be unable to show the required space bound for the alternating Turing
machine.

We resolve this by showing the following property: if the parity game has
treewidth~$k$, then Odd has a strategy for selecting the sets~$S$
such that:
\begin{iteMize}{$\bullet$}
\item The simulation game always terminates after $k \cdot \log |V|$ many
rounds.
\item We never have to remember more than $3$ previous strategy profiles at
the same time.
\end{iteMize}
\noindent These two properties are sufficient to show that our alternating
Turing machine meets the required time and space bounds.

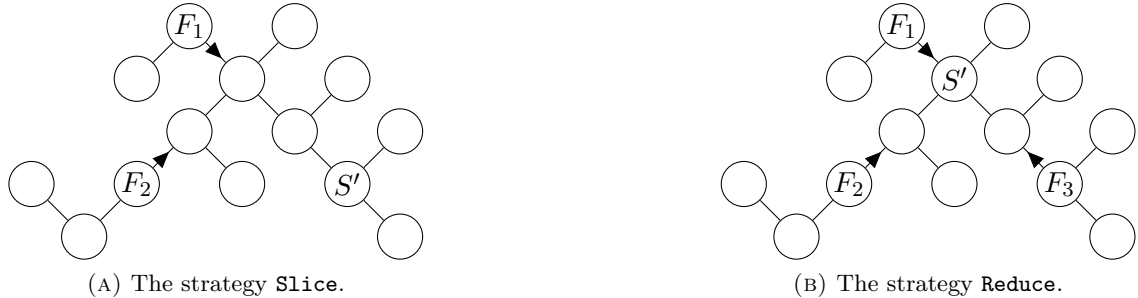
\begin{figure}
\subfloat[The strategy $\slice$.]
{
	\label{fig:slice}
	\begin{tikzpicture}[>=stealth,scale=0.7]
	\node (1) at (0, 0) [separator] {};
	\node (2) at (-1,1) [separator] {$F_1$}
		edge[->-] (1);
	\node (3) at (-2,0) [separator] {}
		edge (2);
	\node (4) at (-1,-1) [separator] {}
		edge (1);
	\node (5) at (0,-2) [separator] {}
		edge (4);
	\node (6) at (-2,-2) [separator] {$F_2$}
		edge[->-] (4);
	\node (7) at (-3,-3) [separator] {}
		edge (6);
	\node (8) at (-4,-2) [separator] {}
		edge (7);
	\node (9) at (1,1) [separator] {}
		edge (1);
	\node (10) at (1,-1) [separator] {}
		edge (1);
	\node (11) at (2,0) [separator] {}
		edge (10);
	\node (12) at (2,-2) [separator] {$S'$}
		edge (10);
	\node (13) at (3,-1) [separator] {}
		edge (12);
	\node (15) at (3,-3) [separator] {}
		edge (12);
	\end{tikzpicture}
}
\hfill
\subfloat[The strategy $\reduce$.]
{
	\label{fig:reduce}
	\begin{tikzpicture}[>=stealth,scale=0.7]
	\node (1) at (0, 0) [separator] {$S'$};
	\node (2) at (-1,1) [separator] {$F_1$}
		edge[->-] (1);
	\node (3) at (-2,0) [separator] {}
		edge (2);
	\node (4) at (-1,-1) [separator] {}
		edge (1);
	\node (5) at (0,-2) [separator] {}
		edge (4);
	\node (6) at (-2,-2) [separator] {$F_2$}
		edge[->-] (4);
	\node (7) at (-3,-3) [separator] {}
		edge (6);
	\node (8) at (-4,-2) [separator] {}
		edge (7);
	\node (9) at (1,1) [separator] {}
		edge (1);
	\node (10) at (1,-1) [separator] {}
		edge (1);
	\node (11) at (2,0) [separator] {}
		edge (10);
	\node (12) at (2,-2) [separator] {$F_3$}
		edge[->-] (10);
	\node (13) at (3,-1) [separator] {}
		edge (12);
	\node (15) at (3,-3) [separator] {}
		edge (12);
	\end{tikzpicture}
}
\caption{Odd's strategies for choosing $S$ in the simulation game.}
\label{fig:spaceexample}
\end{figure}

We now outline the strategy for Odd that achieves these two properties. In fact,
this strategy consists of two different strategies. The first strategy is called
$\slice$, and is shown in Figure~\textsc{\ref{fig:slice}}. The figure shows the nodes in
the tree decomposition, which means that each circle represents a set of
vertices in the parity game. The nodes $F_1$ and $F_2$ represent two previous
strategy profiles that we have remembered. The strategy $\slice$ is required to
select one of the nodes between $F_1$ and $F_2$. We do so
using the following well known lemma about trees.
\begin{lem}
\label{lem:splittree}
For every tree $T = (I, J)$ with $|I| \ge 3$, there is an $i \in I$ such
that removing~$i$ from~$T$ splits~$T$ into parts, where each part has at most
$\frac{2}{3}|I|$ vertices.
\end{lem}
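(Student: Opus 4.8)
The plan is to locate a centroid-like vertex by a single descent through the tree. First I would root $T$ at an arbitrary vertex $r$ and, for each $v \in I$, write $s(v)$ for the number of vertices in the subtree hanging below $v$ (including $v$ itself), so that $s(r) = |I| =: n$. The key structural observation is that any vertex can have \emph{at most one} child whose subtree contains strictly more than $\frac{2}{3} n$ vertices: two such subtrees are disjoint, and would together account for more than $\frac{4}{3} n > n$ vertices, which is impossible. This makes the following descent well defined.

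Starting from $r$, I repeatedly move from the current vertex $v$ to the unique child $c$ with $s(c) > \frac{2}{3} n$ whenever such a child exists, and I stop at the first vertex $i$ that has no such child. Since $s$ strictly decreases along any root-to-leaf path, the descent terminates; let $i$ be the vertex at which it halts. I claim $i$ is the required separator. Removing $i$ produces two kinds of components: one subtree for each child of $i$, and, if $i \neq r$, one \emph{upward} component containing the parent of $i$ together with everything above it.

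For each child $c$ of $i$ the corresponding component is exactly the subtree below $c$, and by the stopping condition $s(c) \le \frac{2}{3} n$. It remains to bound the upward component, whose size is $n - s(i)$. If $i = r$ there is no upward component and we are done; otherwise $i$ was reached by descending into it, which happens only when $s(i) > \frac{2}{3} n$, so the upward component has size $n - s(i) < n - \frac{2}{3} n = \frac{1}{3} n \le \frac{2}{3} n$. Hence every part of $T$ obtained by removing $i$ has at most $\frac{2}{3} n$ vertices, as required; the hypothesis $|I| \ge 3$ merely guarantees that the statement is non-vacuous.

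The main thing to get right is the analysis at the stopping vertex: one must argue simultaneously that every downward component is small (immediate from the stopping rule) and that the single upward component is small (which relies on the fact that we descend only into vertices whose subtree is heavy). I expect no serious obstacle beyond this bookkeeping. As an alternative I could instead take $i$ to minimise the size of the largest component of $T - i$ and run an exchange argument, moving toward any oversized component to contradict minimality; this yields the stronger bound of $\frac{1}{2} n$, but it requires slightly more care, and the $\frac{2}{3}$ bound stated here is all that the surrounding argument needs.
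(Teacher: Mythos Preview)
Your argument is correct: the descent into the unique heavy child is well defined by the disjointness observation, and the two cases at the stopping vertex (downward components bounded by the stopping rule, upward component bounded by the descent condition) cover everything needed. The paper itself does not supply a proof of this lemma; it is quoted as a ``well known lemma about trees'' and used as a black box, so there is no authorial proof to compare against. Your centroid-descent argument is a standard way to establish the result and would serve perfectly well if the paper wished to be self-contained.
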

We define the strategy $\slice$ to select a separator that satisfies
Lemma~\ref{lem:splittree}. For example, in Figure~\textsc{\ref{fig:slice}} the
separator $S'$ splits the tree into~$3$ parts, having $6$, $1$, and $1$
vertices, respectively. Since there were originally $9$ nodes
between $F_1$ and $F_2$, we have that $S'$ satisfies Lemma~\ref{lem:splittree}. 

The second strategy is called $\reduce$, and it is shown in
Figure~\textsc{\ref{fig:reduce}}. It is used whenever we have remembered three strategy
profiles. It selects the unique vertex that lies on the paths between them. It
can be seen in Figure~\textsc{\ref{fig:reduce}} that the set $S'$ lies on the unique
vertex that connects $F_1$, $F_2$, and $F_3$. The purpose of this strategy is to
reduce the number of strategy profiles that we must remember. It can be seen
that, no matter how the game on $S'$ ends, we will be able to forget at least
two of the three strategy profiles, while adding only one new strategy profile
for $S'$.

Odd's overall strategy combines these two sub-strategies: we use $\slice$ until
three strategy profiles have been remembered, and then we switch between
$\reduce$ and $\slice$. Applying this strategy ensures that we use $\slice$ at
least half of the time, and so the game must end after at most $O(\log |V|)$
simulation games have been played. The use of $\reduce$ ensures that we never
have to remember more than three strategy profiles.

Once these properties have been established, it is then fairly straightforward
to show that the game can be implemented by an alternating Turing machine in
$O(k^2 \cdot (\log |V|)^2)$ time and $O(k \cdot \log |V|)$ space, which then
immediately gives containment in NC$^2$.

\section{Simulated Parity Games}
\label{sec:game}

In this section we describe the simulation game, which will be used in
all of our subsequent results. We begin by formally defining the simulation
game. Since each of our results requires a slightly different version of the
simulation game, our definitions will be parameterized by two functions:
$\nextfun$ and $\history$, and each of our results will provide their own
versions of these functions. In the second part of this section, we will show
that, no matter which functions are chosen, if the simulation game terminates,
then it determines the correct winner. Then, for each of our results, we 
provide a proof of termination for the relevant $\nextfun$ and $\history$ functions.

We begin by defining \emph{records}, which allow us to remember the outcome of
previous games. A record is a triple $(F, p, P)$, where $F \subseteq V$ is a set
of vertices,~$P$ is a strategy profile for~$F$, and~$p \in D$ is the largest
priority that has been seen since~$P$ was rejected. When a record is created, we
will use $(F, -, P)$ to indicate that no priority has been seen since $P$ was
rejected. Given a record $(F, p, P)$, and a priority $p'$, we define
$\update((F, p, P), p') = (F, \max(p, p'), P)$, where we have $\max(-, p') =
p'$. A \emph{history} is an ordered sequence of records. Given a history
$\mathcal{F}$ and a priority $p'$, we define: $\update(\mathcal{F}, p') =
\{\update((F, p, P), p') \; : \; (F, p, P) \in \mathcal{F}\}$.

We can now define the format for the two functions $\nextfun$ and $\history$.
The function $\nextfun(S, v, \mathcal{F})$ takes a set of vertices $S \subseteq
V$, a vertex $v \in V$, and a history $\mathcal{F}$. It is required to return a
set of vertices $S' \subseteq V$. The history function $\history(\mathcal{F})$
allows us to delete certain records from the history $\mathcal{F}$. More
formally, the function is required to return a history $\mathcal{F'}$ with
$\mathcal{F'} \subseteq \mathcal{F}$.

We can now formally define the simulation game. Let $G$ be a parity game. Given
$S \subseteq V$,  $s \in V$, a history $\mathcal{F}$, and two functions
$\nextfun$ and $\history$, we define the game $\simulate_G(S, \mathcal{F}, s,
\nextfun, \history)$ as follows. The game maintains a variable~$c$ to store the
current vertex. It also maintains a sequence of triples $\Pi$, where each entry
is of the form $(v, p, u)$ with $v, u \in V$ and $p \in D$. The sequence $\Pi$
represents the simulated path that the two players form during the game.
 
To define our game, we will require notation to handle the simulated path $\Pi$.
We extend $\maxpri$ for paths of the form $\Pi = \langle (v_1, p_1, v_2), (v_2,
p_2, v_3), \dots, (v_{j-1}, p_{j-1}, v_j) \rangle$ by defining $\maxpri(\Pi) =
\max\{p_i \; : \; 1 \le i \le j-1\}$. Furthermore, if a path~$\Pi$ consists of
an initial path $\langle (v_1, p_1, v_2), \dots, (v_{j-1}, p_{j-1}, v_j)\rangle$
followed by a cycle $\langle (v_{j}, p_j, v_{j+1}), \dots, (v_{c-1}, p_{c-1},
v_j)$, then we define $\winner(\Pi)$ to be Even if $\max(\{p_i \; : \; j \le i
\le c - 1\})$ is even, and Odd otherwise.

We are now able to define $\simulate_G(S, \mathcal{F}, s, \nextfun, \history)$.
The game is played in rounds. The first round of the game is slightly different,
because it requires a special initialization procedure that will be introduced
later. Every other round proceeds as follows.

\begin{enumerate}[(1)]
\item \label{lbl:game1} An edge $(c, v)$ is selected by Even if $c \in V_0$ or
by Odd if $c \in V_1$. 

\item \label{lbl:game2} If $v \in S$ or if $v \in F$ for some $(F, p, P) \in
\mathcal{F}$, then the tuple $(c, \pri(v), v)$ is added to $\Pi$, the vertex~$c$
is set to~$v$, and the game moves to Step~\ref{lbl:game5}.

\item \label{lbl:game3} Even gives a strategy profile~$P'$ for the vertex~$v$
and the set~$S$.

\item \label{lbl:game4} Odd can either play \acc\ for some vertex $u \in S$
with $P'(u) \ne -$, or play \rej. 
\begin{iteMize}{$\bullet$}
\item If Odd plays \acc, then $(c, \max(\pri(v), P'(u)), u)$ is appended to
$\Pi$, and~$c$ is set to~$u$. 

\item If Odd plays \rej, then:
\begin{iteMize}{$-$}
\item The history $\mathcal{F'}$ is obtained by computing $\update(\mathcal{F},
\maxpri(\Pi))$.
\item The history $\mathcal{F''}$ is obtained by adding  $(S, -, P')$ to
the end of $\mathcal{F'}$.
\item The history $\mathcal{F'''}$ is obtained by computing
$\history(\mathcal{F''})$.

\item The set $S'$ is obtained by computing $\nextfun(S, v, \mathcal{F'''})$.
\item The winner of the game is the winner of $\simulate_G(S', \mathcal{F'''},
v, \nextfun, \history)$.
\end{iteMize}
\end{iteMize}

\item \label{lbl:game5} If $c \in F$ for some $(F, p, P) \in \mathcal{F}$, then
the game stops. Let $(F, p, P)$ be the final record in $\mathcal{F}$ such that
$c \in F$. If $P(c) = -$ then Odd wins the game. Otherwise, let $p' =
\max(\maxpri(\Pi), p)$. Even wins the game if~$p' \succeq P(c)$ and Odd wins
if~$p' \prec P(c)$.

\item \label{lbl:game6} If $\Pi$ ends in a cycle, then the winner of the
game is $\winner(\Pi)$.
\end{enumerate}

\noindent Note that, whenever we move to a new simulation game in
Step~\ref{lbl:game4}, the variables $c$ and $\Pi$ are reset to their initial
values. We will omit the subscripted parity game~$G$ from $\simulate_G(S,
\mathcal{F}, s, \nextfun, \history)$ when it is clear from the context.

As we have mentioned, the first round is slightly different. This is because we
allow the starting vertex~$s$ to be any vertex in $V$. Thus, we need a procedure
to initialize the variable $c$. If we happen to have $s \in S$, then we can set
$c = s$, and start the game in Step~\ref{lbl:game1} as normal. If $s \notin S$,
then we start the game in Step~\ref{lbl:game3} with $v = s$. In other words, the
game begins by allowing Even to give a strategy profile for $v$ and $S$. It can
be seen that, if this profile is accepted by Odd, then $c$ is set to some vertex
$u \in S$. Thus, after this initialization procedure, the game can continue on
as normal.

\subsection{Strategies for the simulation game}

Let $\sigma \in \Sigma_0$ be an Even strategy for the original parity game. We
define a strategy for Even in $\simulate$ called $\follow_0(\sigma)$, which
follows the moves made by $\sigma$. More formally, $\follow_0(\sigma)$ does the
following:
\begin{iteMize}{$\bullet$}
\item If Even is required to select an edge in Step~\ref{lbl:game1}, then the
edge $(c, \sigma(c))$ is selected.
\item In Step~\ref{lbl:game3}, the strategy selects $P' = \profile(\sigma, v, S)$.
\end{iteMize}

\noindent On the other hand, let $\tau \in \Sigma_1$ be a strategy for Odd in the original
parity game. We define a strategy for Odd in $\simulate$ called
$\follow_1(\tau)$, which is analogous to the strategy $\follow_0(\sigma)$ that
we have just defined. Odd's strategy is more complex, because it must decide
whether the strategy profile proposed by Even in Step~\ref{lbl:game3} should be
accepted. To aid in this, we use the following definition. Let $P_\tau =
\profile(\tau, s, S)$ for some $s \in V$ and $S \subseteq V$. If~$P$ is a
strategy profile for $S$ given by Even in Step~\ref{lbl:game3}, then $\tau$
\emph{refutes}~$P$ if one of the following conditions holds.

\begin{iteMize}{$\bullet$}
\item For all $u \in S$, we have $P(u) = -$.
\item For all $u \in S$, we have either $P_\tau(u) \ne -$ and $P(u) = -$, or
$P_\tau(u) \prec P(u)$.
\end{iteMize}

\noindent 
The first condition is necessary to deal with cases where Even gives a strategy
profile $P$ with $P(u) = -$ for all $u \in S$. In this case, Odd can not play
\acc\, and therefore has no choice but to reject $P$ in Step~\ref{lbl:game4}.
The second condition detects whether Even gives a false strategy profile: the
condition ensures that if Odd rejects $P$, and if in a subsequent simulation
game we arrive back at a vertex $u \in S$, then Odd will win in
Step~\ref{lbl:game5}.

The strategy $\follow_1(\tau)$ follows~$\tau$ for the vertices in~$S$, and
whenever Even gives a strategy profile~$P$, Odd plays \rej\ only when~$\tau$
refutes $P$. If $\tau$ does not refute $P$, then there must be at least one $u
\in S$ such that $P_\tau(u) \succeq P(u)$. Odd selects one such~$u$ and plays
\acc\ for it. Formally, we define $\follow_1(\tau)$ as follows:

\begin{iteMize}{$\bullet$}
\item If Odd selects an edge in Step~\ref{lbl:game1}, then Odd selects $(c,
\tau(c))$.
\item In Step~\ref{lbl:game4}, Odd's decision is based on the strategy profile
$P'$. If $\tau$ refutes $P'$, then odd plays \rej. Otherwise, Odd selects
a vertex $u \in F$ such that $\profile(\tau, v, S)(u) \succeq
P'(u)$ and plays \acc\ for $u$.
\end{iteMize}

\subsection{Plays in the simulation game}
We now describe the outcome when the simulation game
$\simulate(S, \mathcal{F}, s, \nextfun, \history)$ is played. The
players begin by playing $G_1 = \simulate(S, \mathcal{F}, s, \nextfun,
\history)$. This game either ends in
Step~\ref{lbl:game5}, Step~\ref{lbl:game6}, or when Odd plays \rej\ in
Step~\ref{lbl:game4} and causes the players to move to a new game $G_2$. The
outcome can be described as a (potentially infinite) sequence of games:
\begin{equation*}
\langle \mathcal{S}_1 = \simulate(S_1, \mathcal{F}_1, s_1, \nextfun, \history),
\mathcal{S}_2 =
\simulate(S_2, \mathcal{F}_2, s_2, \nextfun, \history), \dots \rangle.
\end{equation*}
We will call this a \emph{play} of the simulation game.

We define some notation for working with simulation game plays. Each of the
games $\mathcal{S}_i$ maintained a variable $\Pi$. At the start of the game
$\mathcal{S}_i$, the variable $\Pi$ was empty, and at the end of the game, $\Pi$
contained a sequence of tuples, which represented a sequence of real and
simulated moves in $S_i$. For each $i$, we define~$\Pi_i$ to be the state of the
variable~$\Pi$ at the end of the game~$\mathcal{S}_i$. Furthermore, for each
$i$, we will represent $\Pi_i$ as:
\begin{equation*}
\langle (v_{i, 1}, p_{i, 1}, u_{i, 1}), (v_{i, 2}, p_{i, 2}, u_{i, 2}), \dots,
(v_{i, |\Pi_i|}, p_{i, |\Pi_i|}, u_{i, |\Pi_i|}) \rangle.
\end{equation*}

We are interested in plays of the simulation game that arise when either Even
plays $\follow_0(\sigma)$ for some strategy $\sigma \in \Sigma_0$, or Odd plays
$\follow_1(\tau)$ for some strategy $\tau \in \Sigma_1$. For the sake of
exposition, we focus on player Even, but we will prove our results for both
players. We want to show that, if Even uses $\follow_0(\sigma)$, then we can translate the paths
$\Pi_i$ in the simulation game to paths in the original parity game. More
precisely, for each tuple $(v_{i, j}, p_{i,j}, u_{i,j})$ in $\Pi_i$, we want to
construct a path $\pi_{i, j}$ in $G \restriction \sigma$. For the portions of
$\Pi_i$ that correspond to real edges in the parity game, this is easy.
Formally, if $(v_{i, j}, p_{i,j}, u_{i,j})$ was not added to $\Pi_i$ by a
simulated move, then we can define $\pi_{i, j} = \langle v_{i, j}, u_{i, j}
\rangle$. This path is obviously a path in $G \restriction \sigma$. The
following lemma shows that paths $\pi_{i, j}$ can be defined for the simulated
moves in $\Pi_i$.



\begin{lem}
\label{lem:pitopi}
We have two properties:
\begin{iteMize}{$\bullet$}
\item If there is a strategy $\sigma \in \Sigma_0$ such that:
\begin{equation*}
\profile(\sigma, v_{i, j}, S_i)(u_{i,j}) = p_{i, j} \ne -,
\end{equation*}
for some~$i$ and~$j$, then there exists a path $\pi_{i, j} \in \paths(\sigma,
v_{i,j}, S_i, u_{i,j})$ such that 
\[\maxpri(\pi_{i, j}) = p_{i, j}.\]
\item
If there is a strategy $\tau \in \Sigma_1$ such that: \begin{equation*}
\profile(\tau, v_{i, j}, S_i)(u_{i,j}) \succeq p_{i, j} \ne -, \end{equation*}
for some~$i$ and~$j$, then there exists a path $\pi_{i, j} \in \paths(\tau,
v_{i,j}, S_i, u_{i,j})$ such that 
\[\maxpri(\pi_{i, j}) \succeq p_{i, j}.\]
\end{iteMize}
\end{lem}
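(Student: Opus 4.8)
The plan is to observe that both parts of the lemma are essentially a direct unfolding of the definitions of $\profile$, $\exit$, $\minpath$, and $\maxpath$, together with the single nontrivial remark that a $\preceq$-extremum taken over a collection of paths is really an extremum over the finite set of \emph{values} that $\maxpri$ attains on those paths.

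For the first item I would start from the hypothesis $\profile(\sigma, v_{i,j}, S_i)(u_{i,j}) = p_{i,j} \neq -$. By definition $\profile(\sigma, v_{i,j}, S_i)(u_{i,j}) = \exit(\sigma, v_{i,j}, S_i, u_{i,j})$, and since this value is not $-$ and $\sigma \in \Sigma_0$, the definition of $\exit$ gives $p_{i,j} = \minpath(\sigma, v_{i,j}, S_i, u_{i,j}) = \min_{\preceq} \{\maxpri(\pi) : \pi \in \paths(\sigma, v_{i,j}, S_i, u_{i,j})\}$; in particular the path set is nonempty. It then suffices to argue that this $\preceq$-minimum is attained by an actual path, which yields a $\pi_{i,j} \in \paths(\sigma, v_{i,j}, S_i, u_{i,j})$ with $\maxpri(\pi_{i,j}) = p_{i,j}$, exactly as claimed.

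The second item follows the same route with two small adjustments. First, from $\profile(\tau, v_{i,j}, S_i)(u_{i,j}) \succeq p_{i,j} \neq -$ I would conclude that the profile value is itself not $-$: under the natural reading in which the unreachability symbol $-$ lies strictly below every priority of $D$, a value of $-$ could not be $\succeq$ the genuine priority $p_{i,j}$. With the profile value lying in $D$ and $\tau \in \Sigma_1$, the definition of $\exit$ gives $\profile(\tau, v_{i,j}, S_i)(u_{i,j}) = \maxpath(\tau, v_{i,j}, S_i, u_{i,j}) = \max_{\preceq} \{\maxpri(\pi) : \pi \in \paths(\tau, v_{i,j}, S_i, u_{i,j})\}$. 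Attaining this $\preceq$-maximum gives a path $\pi_{i,j}$ with $\maxpri(\pi_{i,j}) = \profile(\tau, v_{i,j}, S_i)(u_{i,j}) \succeq p_{i,j}$, which is the desired inequality.

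The only point that needs genuine care, and hence the main (if modest) obstacle, is the attainment claim used in both items. The set $\paths(\chi, v_{i,j}, S_i, u_{i,j})$ may be infinite, since a consistent path that avoids $S_i$ can revisit vertices and so contain cycles, and a literal $\min_{\preceq}$ or $\max_{\preceq}$ over an infinite family need not be well defined. I would therefore make explicit that these extrema are taken over the image $\{\maxpri(\pi) : \pi \in \paths(\chi, v_{i,j}, S_i, u_{i,j})\}$, which is a nonempty subset of the finite set $D$ and is hence totally ordered and finite; a $\preceq$-extremum of such a set is always realised by some element of the set, i.e.\ by some path. Everything else in the argument is routine bookkeeping with the definitions.
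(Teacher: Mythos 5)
Your proposal is correct and follows essentially the same route as the paper's own proof: unfold the definitions of $\profile$, $\exit$, $\minpath$, and $\maxpath$, note the path set is nonempty since the profile value is not $-$, and conclude that the $\preceq$-extremum is attained by some witnessing path. Your extra care in observing that the extremum is really taken over a nonempty subset of the finite set $D$ (so attainment is genuine even though the path set may be infinite) is a small refinement the paper leaves implicit, but it does not change the argument.
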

\begin{proof}
The first part of this lemma follows directly from the definition of a strategy
profile. Since we know that $\profile(\sigma, v_{i, j}, S_i)(u_{i,j}) \ne -$, we
have that $\paths(\sigma, v_{i, j}, S_i, u_{i, j}) \ne \emptyset$. Furthermore,
since $\profile(\sigma, v_{i, j}, S_i)(u_{i,j})$ considers the minimum path
according to the $\preceq$-ordering, there must exist a path $\pi_{i, j}$ in
$\paths(\sigma, v_{i, j}, S_i, u_{i,j})$ such that $\maxpri(\pi_{i,j}) = p_{i,
j}$. 

The second part is very similar to the first. Since we know that $\profile(\tau,
v_{i, j}, S_i)(u_{i,j}) \ne -$, we have that $\paths(\tau, v_{i, j}, S_i, u_{i,
j}) \ne \emptyset$. Furthermore, since $\profile(\tau, v_{i, j}, S_i)(u_{i,j})$
considers the maximum path according to the $\preceq$-ordering, there must exist
a path $\pi_{i, j}$ in $\paths(\tau, v_{i, j}, S_i, u_{i,j})$ such that
$\maxpri(\pi_{i,j}) = \profile(\tau, v_{i, j}, S_i)(u_{i,j}) \succeq p_{i, j}$. 
\end{proof}

For the rest of this section, we will assume that either Even always plays
$\follow_0(\sigma)$ for some strategy $\sigma \in \Sigma_0$, or that Odd always
plays $\follow_1(\tau)$ for some strategy $\tau \in \Sigma_1$. We will prove
properties about the paths~$p_{i, j}$ that are obtained from
Lemma~\ref{lem:pitopi}.

Firstly, we note that if $\pi_{i, j}$ is concatenated with $\pi_{i, j+1}$, then
the result is still a path in either $G \restriction \sigma$ or $G \restriction
\tau$. This is because Steps~\ref{lbl:game2} and~\ref{lbl:game4} of the
simulation game ensure that the final vertex in $\pi_{i, j}$ is the first vertex
in $\pi_{i, j+1}$. Similarly, the concatenation of $\pi_{i, |\pi_i|}$ and
$\pi_{i+1, 1}$ always produces a path in either $G \restriction \sigma$, or $G
\restriction \tau$,  because Step~\ref{lbl:game4} of the simulation game ensures
that the final vertex in $\pi_{i, |\Pi_i|}$ must have an edge to the first
vertex in $\pi_{i+1, 1}$.

With these properties in mind, we define $\pi_{(i, j) \rightarrow (i', j')}$,
for two pairs of integers where, either $i' > i$, or both $i = i'$ and $j' > j$,
to be the concatenation of $\pi_{i, j}$ through $\pi_{i', j'}$. More formally,
let $\circ$ be the operator that concatenates two paths.
We define:
\begin{equation*}
\pi_i = \pi_{i, 1} \circ \pi_{i, 2} \circ \dots \circ \pi_{i, |\pi_i|},
\end{equation*}
to be the path from the first vertex in $\pi_{i, 1}$ to the final vertex in
$\pi_{i, |\pi_i|}$. We then define
\begin{equation*}
\pi_{(i, j) \rightarrow (i', j')} = \pi_{i, j} \circ \dots \circ \pi_{i, |\pi_i|}
\circ \pi_{i+1} \circ \dots \circ \pi_{i'-1} \circ \pi_{i', 1} \circ \dots \circ
\pi_{i', j'},
\end{equation*}
to be the concatenation of the paths between $\pi_{i, j}$ and $\pi_{i',j'}$.

The following lemma regarding $\maxpri(\pi_{(i, j) \rightarrow (i',
j')})$ follows easily from Lemma~\ref{lem:pitopi}.
\begin{lem}
\label{lem:concatpri}
If $\sigma \in \Sigma_0$ is an Even strategy, and $\langle G_1 \dots G_k
\rangle$ is obtained when Even plays $\follow_0(\sigma)$, then we have:
\begin{multline*}
\maxpri(\pi_{(i, j) \rightarrow (i', j')}) = \max\{p_{x, y} \; : \;
i \le x \le i' \text{ and } \\ (x = i) \Rightarrow j \le y \text{ and } (x = i')
\Rightarrow y \le j'\}.
\end{multline*}
On the other hand, if $\tau \in \Sigma_1$ is an Odd strategy, and $\langle G_1
\dots G_k \rangle$ is obtained when Odd plays $\follow_1(\tau)$, then we have:
\begin{multline*}
\maxpri(\pi_{(i, j) \rightarrow (i', j')}) \succeq \max\{p_{x, y} \; : \;
i \le x \le i' \text{ and } \\ (x = i) \Rightarrow j \le y \text{ and } (x = i')
\Rightarrow y \le j'\}.
\end{multline*}
\end{lem}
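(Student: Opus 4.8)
The plan is to prove both parts by induction on the number of pieces in the concatenation $\pi_{(i, j) \rightarrow (i', j')}$, resting on a single structural fact: $\maxpri$ distributes over concatenation as a \emph{numerical} maximum, i.e.\ $\maxpri(\pi \circ \pi') = \max(\maxpri(\pi), \maxpri(\pi'))$ for any two paths that share their common endpoint. Once this is available, $\maxpri(\pi_{(i, j) \rightarrow (i', j')})$ is exactly the numerical maximum of the per-piece values $\maxpri(\pi_{x, y})$, and the lemma reduces to (a) computing each piece's contribution and (b) reassembling these contributions into the stated bound.

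For each individual piece I would split on how it was created. If $(v_{x, y}, p_{x, y}, u_{x, y})$ came from a simulated move, Lemma~\ref{lem:pitopi} applies verbatim: in the Even case it gives $\maxpri(\pi_{x, y}) = p_{x, y}$, and in the Odd case $\maxpri(\pi_{x, y}) \succeq p_{x, y}$. If instead it came from a real move, then $\pi_{x, y} = \langle v_{x, y}, u_{x, y} \rangle$ and, by Step~\ref{lbl:game2}, $p_{x, y} = \pri(u_{x, y})$, so $\maxpri(\pi_{x, y}) = \max(\pri(v_{x, y}), p_{x, y})$. The extra term $\pri(v_{x, y})$ is the priority of the \emph{source} of the step, and I would dispose of it with the bookkeeping observation that the source of each step is the destination of the previous one, together with the fact that a step's recorded priority always numerically dominates the priority of its own destination (this holds both for real moves, where they are equal, and for simulated moves, where the recorded value is a $\minpath$/$\maxpath$ of maxpri-values over paths terminating at that destination). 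Hence every source priority occurring in the concatenation is already subsumed by a recorded priority inside the range, with the sole exception of the very first vertex $v_{i, j}$, which requires a small separate check at the start of the range.

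For the Even part this reassembly is immediate: each piece contributes exactly $p_{x, y}$ to the numerical maximum (the real-move source terms being dominated by neighbouring recorded priorities), so $\maxpri(\pi_{(i, j) \rightarrow (i', j')}) = \max\{p_{x, y}\}$ with numerical maxima on both sides. The Odd part is where I expect the genuine work, and the step I expect to be the main obstacle. There each piece only yields $\maxpri(\pi_{x, y}) \succeq p_{x, y}$ in the \emph{significance} order, while the outer maxima on both sides are numerical, and the two orders really do disagree (for instance $2 \succ 3$ yet $2 < 3$). The bridge is a short monotonicity lemma: if $a_\ell \succeq b_\ell$ for every $\ell$, then the numerical maximum of $\{a_\ell\}$ is $\succeq$ the numerical maximum of $\{b_\ell\}$. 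I would establish this by ruling out $\max\{a_\ell\} \prec \max\{b_\ell\}$ through the three clauses defining $\prec$: taking $B$ to be the numerically largest $b_\ell$, the hypothesis at that index forces a companion $a_\ell$ that is numerically at least $B$ and of the matching parity, and in each clause this collides with the assumed $\prec$ (the odd-versus-even clause additionally using the index realising $\max\{a_\ell\}$ to force a same-parity contradiction). Plugging this lemma into the reassembly step converts the pointwise $\succeq$-bounds on the pieces into the desired $\maxpri(\pi_{(i, j) \rightarrow (i', j')}) \succeq \max\{p_{x, y}\}$.
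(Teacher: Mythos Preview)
Your proposal is correct and follows the same approach the paper intends: the paper's own proof is the single sentence ``follows easily from Lemma~\ref{lem:pitopi}'', and you have unpacked exactly that, treating real and simulated pieces separately and taking numerical maxima over the concatenation. The monotonicity bridge you isolate for the Odd case (pointwise $\succeq$ on the pieces implies $\succeq$ on their numerical maxima) is a genuine detail the paper elides, and your case analysis on the parity of $\max\{b_\ell\}$ is the right way to establish it; the first-vertex bookkeeping you flag is likewise a real wrinkle the paper does not spell out.
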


\subsection{Winning strategies and the simulation game} 
\label{sec:wingame}

In this section, we show that if one of the two players follows a winning
strategy for the original parity game, then that player cannot lose the
simulation game. More formally, we show that, if $\sigma \in \Sigma_0$ is a
winning strategy for Even for some vertex $s \in V$, and if Even plays
$\follow_0(\sigma)$,  then Even cannot lose $\simulate(S, \mathcal{F}, s,
\nextfun, \history)$, for all $S$, $\mathcal{F}$, $\nextfun$, and $\history$. We
also show the analogous result for Odd. 



We begin by showing that, if Even never lies about his strategy profile in
Step~\ref{lbl:game3}, then Even can never lose the game in Step~\ref{lbl:game5}.

\begin{lem}
\label{lem:evenwinspace}
If Even plays $\follow_0(\sigma)$ for some $\sigma \in \Sigma_0$, then Even
can never lose the simulation game in Step~\ref{lbl:game5}.
\end{lem}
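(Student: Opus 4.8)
Lemma \ref{lem:evenwinspace} asserts that if Even plays $\follow_0(\sigma)$ for any strategy $\sigma \in \Sigma_0$ (not necessarily winning), then Even never loses in Step \ref{lbl:game5}. Step \ref{lbl:game5} triggers when the current vertex $c$ lands in some $F$ with a record $(F,p,P) \in \mathcal{F}$. Even loses there in one of two ways: either $P(c) = -$, or $p' = \max(\maxpri(\Pi), p) \prec P(c)$. So I must rule out both failure modes. The key insight is that the records $(F,p,P)$ in the history were created in earlier rounds when Odd rejected a strategy profile $P$ that *Even himself* proposed via $\follow_0(\sigma)$. Since $\follow_0(\sigma)$ always plays $P = \profile(\sigma, v, S)$ honestly (per the definition of the strategy), the recorded profile $P$ is exactly $\profile(\sigma, \cdot, F)$, and this honesty is what protects Even.

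\begin{proof}[Proof proposal]
The plan is to trace each record in the history back to the round in which it was created and exploit the honesty of $\follow_0(\sigma)$. First I would observe that every record $(F, p, P) \in \mathcal{F}$ was added in Step~\ref{lbl:game4} when Odd rejected a profile that Even had just proposed in Step~\ref{lbl:game3}. Because Even plays $\follow_0(\sigma)$, that proposed profile was $P = \profile(\sigma, w, F)$ for the vertex $w$ that triggered the rejection. This immediately handles the first failure mode: if $P(c) = -$, then by Lemma~\ref{lem:pitopi} (or directly by the definition of $\profile$) there is no $\sigma$-consistent path from $w$ to $c$ avoiding $F$, so $c$ could not have been reached from $w$ while staying consistent with $\sigma$; I would argue this contradicts the fact that the current path $\Pi$ together with the recorded paths forms a genuine $\sigma$-consistent path in $G \restriction \sigma$ arriving at $c$.
\end{proof}

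\noindent For the second and harder failure mode, where $P(c) = q \ne -$ but $p' = \max(\maxpri(\Pi), p) \prec q$, the plan is to use Lemma~\ref{lem:concatpri} to build a single $\sigma$-consistent path witnessing the true exit priority. Since $P = \profile(\sigma, w, F)$ and $q = P(c) = \minpath(\sigma, w, F, c)$, the profile value $q$ is the \emph{minimum} (under $\preceq$) over all $\sigma$-consistent paths from $w$ to $c$ of their $\maxpri$. I would therefore reconstruct the actual path that the play has traversed since the record was created—using the concatenated paths $\pi_{(i,j) \to (i',j')}$ supplied by Lemma~\ref{lem:pitopi} for the simulated segments—to obtain a concrete $\sigma$-consistent path $\pi$ from $w$ to $c$. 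The priority $p'$ equals (via the first case of Lemma~\ref{lem:concatpri}) exactly $\maxpri(\pi)$. But $\maxpri(\pi) \succeq \minpath(\sigma, w, F, c) = q$ by definition of the minimum under $\preceq$, so $p' \succeq q$, directly contradicting $p' \prec q$.

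\noindent The main obstacle, and the step deserving the most care, is the bookkeeping that links $p' = \max(\maxpri(\Pi), p)$ to $\maxpri(\pi)$ across the $\update$ operations: the recorded priority $p$ accumulates the maxima seen in all intermediate rounds (through repeated $\update(\mathcal{F}, \maxpri(\Pi))$ calls), while $\maxpri(\Pi)$ captures the current round, and I must verify that their combination is precisely the $\maxpri$ of the full concatenated $\sigma$-consistent path from $w$ back to $c$. I would prove this by induction on the number of rounds elapsed since the record $(F,-,P)$ was created, showing that after processing rounds up to the current one, the stored priority satisfies $\max(\maxpri(\Pi), p) = \maxpri(\pi_{(i_0,j_0) \to (i,j)})$, where $(i_0,j_0)$ marks the point just after the record's creation. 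Once this identity is established, the inequality $\maxpri(\pi) \succeq \minpath$ closes the argument; the rest is the routine definitional unpacking of $\profile$ and $\exit$.
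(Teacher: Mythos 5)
Your proposal is correct and follows essentially the same route as the paper's proof: trace the record $(F,p,P)$ back to the round where Odd rejected it, use the honesty of $\follow_0(\sigma)$ to identify $P$ with $\profile(\sigma, s_{i+1}, F)$, invoke Lemma~\ref{lem:pitopi} to build a concrete path in $G \restriction \sigma$ from $s_{i+1}$ to $c$, and refute both failure modes ($P(c)=-$ via non-emptiness of $\paths(\sigma, s_{i+1}, F, c)$, and $p' \prec P(c)$ via the minimality of $\minpath$ under $\preceq$). The priority-bookkeeping identity you propose to prove by induction is exactly what the paper delegates to Lemma~\ref{lem:concatpri}, so no new content is needed there.
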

\begin{proof}
Assume, for the sake of contradiction, that Even loses the simulation game in
Step~\ref{lbl:game5} while playing $\follow_0(\sigma)$. Let
\begin{equation*}
\langle \mathcal{S}_1 = \simulate(S_1, \mathcal{F}_1, s_1, \nextfun, \history),
\dots, 
\mathcal{S}_k =
\simulate(S_k, \mathcal{F}_k, s_k, \nextfun, \history) \rangle
\end{equation*}
be the play of the simulation game, where Even loses $\mathcal{S}_k$ in
Step~\ref{lbl:game5}.

Let $(F, p, P)$ be the record used to decide the winner in Step~\ref{lbl:game5}.
The fact that Odd wins in Step~\ref{lbl:game5} implies that $\Pi_k$ ends at some
vertex $u \in F$, and that either $P(u) = -$ or $\max( \maxpri(\Pi), p) \prec
P(u)$. Let $i$ be the index of the game $\mathcal{S}_i$ in which $(F, -, P)$ is
contained in $\mathcal{F}_{i+1} \setminus \mathcal{F}_{i}$. 

We begin with the case where $P(u) = -$. In this case, the fact that
$\follow_0(\sigma)$ only ever gives $\profile(\sigma, s, S)$ for $P'$ in
Step~\ref{lbl:game3} allows us to invoke
Lemma~\ref{lem:pitopi} to conclude that the path $\pi_{(i + 1, 1) \rightarrow
(k, |\Pi_k|)}$ is a path in $G \restriction \sigma$. Note that the path starts
at~$s_{i+1}$, and that~$u$ is the first vertex in $F$ that is visited by
$\pi_{(i+1, 1) \rightarrow (k, |\Pi_k|)}$. Hence we must have $\pi_{(i+1, 1)
\rightarrow (k, |\Pi_k|)} \in \paths(\sigma, s_{i+1}, F, u)$. Therefore, the
fact that the set $\paths(\sigma, s_{i+1}, F, u)$ is non-empty contradicts the
fact that $P(u) = \profile(\sigma, s_{i+1}, F)(u) = -$.

We now consider the case where Odd wins in Step~\ref{lbl:game5} because $\Pi_k$
ends at some vertex $u \in F$ and $\max(\maxpri(\Pi), p) \prec P(u)$. In this
case we can again invoke Lemma~\ref{lem:pitopi} to argue that there is a path
$\pi_{(i+1, 1) \rightarrow (k, |\Pi_k|)}$ in $G \restriction \sigma$, and we can
use the same arguments as in the previous case to argue that this path is
contained in $\paths(\sigma, s_{i+1}, F_i, u)$. Moreover,
Lemma~\ref{lem:concatpri} implies that 
\begin{equation*}
p' = \max(p, \maxpri(\Pi)) = \maxpri(\pi_{(i+1, 1) \rightarrow (k, |\pi_{k}|)}).
\end{equation*}
Since $p' \prec P(u)$, and $P(u) = \profile(\sigma, s_{i+1}, F)$, we have the
following inequality:
\begin{equation*}
\maxpri(\pi_{(i+1, 1) \rightarrow (k, |\Pi_k|)}) \prec \min_{\preceq}
\{\maxpri(\pi) \: : \: \pi \in \paths(\sigma, s_{i+1}, F, u)\}.
\end{equation*}
Since~$\pi_{(i+1, 1) \rightarrow (k, |\Pi_k|)}$ is contained in $\paths(\sigma,
s_{i+1}, F, u)$ this inequality is impossible, which yields the required
contradiction.
\end{proof}

We now show a corresponding property for Odd. 

\begin{lem}
\label{lem:oddwinspace}
If Odd plays $\follow_1(\tau)$ for some strategy $\tau \in \Sigma_1$, then Odd
can never lose the simulation game in Step~\ref{lbl:game5}.
\end{lem}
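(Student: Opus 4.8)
The plan is to mirror the proof of Lemma~\ref{lem:evenwinspace}, but with the inequalities flipped to reflect Odd's perspective. I would argue by contradiction: suppose Odd plays $\follow_1(\tau)$ for some $\tau \in \Sigma_1$ and yet loses the simulation game in Step~\ref{lbl:game5}. Let $\langle \mathcal{S}_1, \dots, \mathcal{S}_k \rangle$ be the resulting play, where Odd loses $\mathcal{S}_k$ in Step~\ref{lbl:game5}, and let $(F, p, P)$ be the record used to decide the winner there. Since Odd loses this step, $\Pi_k$ ends at some $u \in F$ with $P(u) \ne -$ and $\max(\maxpri(\Pi_k), p) \succeq P(u)$. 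Let $i$ be the index of the game in which the record for $F$, namely $(F, -, P)$, was added to the history, i.e. it lies in $\mathcal{F}_{i+1} \setminus \mathcal{F}_i$.

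The heart of the argument is to translate the simulated path back into the original game using the second part of Lemma~\ref{lem:pitopi}. The key subtlety is why $\maxpri$ of the translated path bounds $P(u)$ from below. Because Odd plays $\follow_1(\tau)$, whenever Even proposed a profile that $\tau$ did not refute, Odd chose some $u' \in S$ with $\profile(\tau, v, S)(u') \succeq P'(u')$; this is exactly the hypothesis of the second clause of Lemma~\ref{lem:pitopi}, so every simulated tuple $(v_{x,j}, p_{x,j}, u_{x,j})$ admits a path $\pi_{x,j} \in \paths(\tau, v_{x,j}, S_x, u_{x,j})$ with $\maxpri(\pi_{x,j}) \succeq p_{x,j}$. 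Concatenating these as in the definition of $\pi_{(i+1,1) \rightarrow (k, |\Pi_k|)}$ yields a genuine path in $G \restriction \tau$ that starts at $s_{i+1}$ and first meets $F$ at $u$, so it lies in $\paths(\tau, s_{i+1}, F, u)$. By the Odd version of Lemma~\ref{lem:concatpri}, this concatenated path satisfies $\maxpri(\pi_{(i+1,1) \rightarrow (k, |\Pi_k|)}) \succeq \max(p, \maxpri(\Pi_k)) \succeq P(u)$.

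I would then derive the contradiction from the definition of $\exit$ for Odd strategies. Since $P(u) = \profile(\tau, s_{i+1}, F)(u) = \maxpath(\tau, s_{i+1}, F, u) = \max_{\preceq}\{\maxpri(\pi) : \pi \in \paths(\tau, s_{i+1}, F, u)\}$, the path we constructed would have $\maxpri$ strictly above the $\preceq$-maximum over that very set, which is impossible. I expect the main obstacle to be bookkeeping rather than conceptual: one must verify that $u$ really is the first vertex of $F$ visited after $s_{i+1}$, so that the translated path indeed belongs to $\paths(\tau, s_{i+1}, F, u)$ and does not prematurely pass through $F$ earlier — this relies on Step~\ref{lbl:game5} selecting the final record and on the structure of the play between $\mathcal{S}_{i+1}$ and $\mathcal{S}_k$. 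Apart from this, the proof is a direct dualization of Lemma~\ref{lem:evenwinspace}, with $\min_{\preceq}$ replaced by $\max_{\preceq}$ and every $\prec$ replaced by $\succeq$.
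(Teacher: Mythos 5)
Your setup, use of Lemma~\ref{lem:pitopi} and Lemma~\ref{lem:concatpri}, and the observation about $u$ being the first vertex of $F$ reached after $s_{i+1}$ all match the paper's proof. However, the final step contains a genuine error: you assert $P(u) = \profile(\tau, s_{i+1}, F)(u)$, but $P$ is the strategy profile proposed by \emph{Even} in Step~\ref{lbl:game3} (and subsequently rejected by Odd), not the profile induced by $\tau$. The identification $P = \profile(\chi, \cdot, \cdot)$ of the record's profile with the player's own profile is only available in Lemma~\ref{lem:evenwinspace}, where the player under consideration is the one who \emph{wrote} the profile. Here the two objects are different, and in fact they must be: the whole point is that Odd only plays \rej\ when $\tau$ \emph{refutes} $P$.

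This is not merely a labelling slip, because without the correct identification your chain of inequalities fails to produce a contradiction. You derive $\maxpri(\pi_{(i+1,1) \rightarrow (k,|\Pi_k|)}) \succeq P(u)$ and then claim the path's priority is ``strictly above'' the $\preceq$-maximum over $\paths(\tau, s_{i+1}, F, u)$; but if $P(u)$ really equalled that maximum, the inequality $\succeq$ is perfectly consistent with the path attaining the maximum, and no contradiction follows. The missing ingredient is the refutation condition: since Odd loses in Step~\ref{lbl:game5} we have $P(u) \ne -$, so the second clause of the definition of refutation forces $P_\tau(u) \prec P(u)$, i.e.\ $P(u) \succ \profile(\tau, s_{i+1}, F)(u) = \max_{\preceq}\{\maxpri(\pi) : \pi \in \paths(\tau, s_{i+1}, F, u)\}$. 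Chaining this \emph{strict} inequality with $\maxpri(\pi_{(i+1,1) \rightarrow (k,|\Pi_k|)}) \succeq P(u)$ gives a path in $\paths(\tau, s_{i+1}, F, u)$ whose maximal priority strictly exceeds the maximum over that set, which is the contradiction the paper actually uses.
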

\begin{proof}
Assume, for the sake of contradiction, that Odd loses the simulation game in
Step~\ref{lbl:game5} while playing $\follow_1(\tau)$. Let
\begin{equation*}
\langle \mathcal{S}_1 = \simulate(S_1, \mathcal{F}_1, s_1, \nextfun, \history),
\dots, 
\mathcal{S}_k =
\simulate(S_k, \mathcal{F}_k, s_k, \nextfun, \history) \rangle
\end{equation*}
be the play of the simulation game, where Odd loses $\mathcal{S}_k$ in
Step~\ref{lbl:game5}.

Let $(F, p, P)$ be the record that is used to determine the winner in
Step~\ref{lbl:game5}. The fact that Even wins in Step~\ref{lbl:game5} implies
that $\Pi_k$ ends at some vertex $u \in F$, and that $\max(\maxpri(\Pi), p)
\succeq P(u)$.  Let $i$ be the index of the game $\mathcal{S}_i$ in which $(F,
-, P)$ is contained in $\mathcal{F}_{i+1} \setminus \mathcal{F}_{i}$.
Since~$\follow_1(\tau)$ only ever plays~$\rej$ in the case where~$\tau$
refutes~$P'$, we can apply Lemma~\ref{lem:pitopi} to obtain a path $\pi_{(i + 1,
1) \rightarrow (k, |\Pi_k|)}$ in $G \restriction \tau$. Note that the path
starts at the vertex $s_{i+1}$, and that~$u$ is the first vertex in $F$ that is
visited by $\pi_{(i+1, 1) \rightarrow (k, |\Pi_k|)}$. Hence we must have
$\pi_{(i+1, 1) \rightarrow (k, |\Pi_k|)} \in \paths(\tau, s_{i+1}, F, u)$.
Furthermore, we can use Lemma~\ref{lem:concatpri} to conclude:
\begin{equation*}
\maxpri(\pi_{(i+1, 1) \rightarrow (k, |\Pi_k|)}) \succeq \max(\maxpri(\Pi), p)
\succeq P(u).
\end{equation*}
Since~$\tau$ refutes~$P$, we must have $P(u) \succ \profile(\tau, s_{i+1},
F)(u)$. Combining all of these facts yields:
\begin{align*}
\maxpri(\pi_{(i+1, 1) \rightarrow (k, |\Pi_k|)}) &\succeq P(u) \\
&\succ \profile(\tau, s_{i+1}, F)(u) \\
&= \max_{\preceq} \{\maxpri(\pi) \; : \; \pi \in \paths(\tau, s_{i+1}, F, u)\}.
\end{align*}
However, since $\pi_{(i+1, 1) \rightarrow (k, |\Pi_k|)} \in \paths(\tau,
s_{i+1}, F_i, u)$, this inequality is impossible. 
\end{proof}

We now turn our attention to Step~\ref{lbl:game6} of the simulation game. The
following pair of lemmas will be used to show that, if one of the two players
follows a winning strategy, then that player cannot lose the simulation game in
Step~\ref{lbl:game6}.

\begin{lem}
\label{lem:cycle2}
Let $\sigma \in \Sigma_0$ be an Even winning strategy for $s \in V$. If Even
plays $\follow_0(\sigma)$, then Even cannot lose $\simulate(S, \mathcal{F}, s,
\nextfun, \history)$ is Step~\ref{lbl:game6}, for all choices of $S$,
$\mathcal{F}$, $\nextfun$, and $\history$.
\end{lem}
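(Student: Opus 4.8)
The plan is to lift the cycle that appears in the final game back into a genuine cycle of $G \restriction \sigma$, and then to invoke the fact that $\sigma$ is a winning strategy for $s$ to argue that Even must win this cycle. Suppose, for the sake of contradiction, that the play $\langle \mathcal{S}_1, \dots, \mathcal{S}_k \rangle$ obtained when Even plays $\follow_0(\sigma)$ ends with Odd winning $\mathcal{S}_k$ in Step~\ref{lbl:game6}. By the definition of that step, the path $\Pi_k$ ends in a cycle beginning at some index $j$, and $\max\{p_{k, y} \; : \; j \le y \le |\Pi_k|\}$ is odd.

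First I would translate the entire play into $G \restriction \sigma$. Since Even plays $\follow_0(\sigma)$, every strategy profile submitted in Step~\ref{lbl:game3} is exactly $\profile(\sigma, v, S)$, so Lemma~\ref{lem:pitopi} supplies a path $\pi_{i,j}$ in $G \restriction \sigma$ for each simulated tuple, while each real tuple is already a single edge of $G \restriction \sigma$. Concatenating these as in the construction preceding Lemma~\ref{lem:concatpri} yields the path $\pi_{(1,1) \rightarrow (k, |\Pi_k|)}$ in $G \restriction \sigma$; inspecting the initialization procedure (including the case $s \notin S_1$, where the first move is simulated from $s$) shows that this path starts at $s$. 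In particular, the cycle portion of $\Pi_k$ from index $j$ onwards translates into a closed walk $C$ in $G \restriction \sigma$ that returns to the vertex $v_{k, j}$.

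I would then assemble an infinite play consistent with $\sigma$: follow $\pi_{(1,1) \rightarrow (k, j-1)}$ from $s$ to $v_{k, j}$, and then traverse $C$ repeatedly forever. Because every edge of this infinite path lies in $G \restriction \sigma$, the resulting play is consistent with $\sigma$ and begins at $s$, and the priorities occurring infinitely often along it are exactly the priorities appearing on $C$. Since $\sigma$ is a winning strategy for $s$, the highest priority occurring infinitely often is even, so the highest priority on $C$ in $G \restriction \sigma$ is even. Finally, Lemma~\ref{lem:concatpri} identifies the highest priority on $C$ with $\max\{p_{k, y} \; : \; j \le y \le |\Pi_k|\}$, which is precisely the quantity that determines $\winner(\Pi_k)$; this contradicts the assumption that the quantity is odd, establishing that Even cannot lose in Step~\ref{lbl:game6}.

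The main obstacle is the second step: one must verify carefully that the concatenated walk $C$ is an honest cycle of $G \restriction \sigma$, so that repeating it forever is a legal play, and that the full path really does begin at $s$. Both points hinge on a careful treatment of the initialization procedure and of the vertex-matching guaranteed by Steps~\ref{lbl:game2} and~\ref{lbl:game4}, which ensure that consecutive translated paths join up correctly and that the last vertex of the cycle coincides with its first.
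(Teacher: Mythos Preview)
Your proposal is correct and follows essentially the same route as the paper: lift the cycle in $\Pi_k$ to a closed walk in $G \restriction \sigma$ via Lemma~\ref{lem:pitopi}, use Lemma~\ref{lem:concatpri} to argue its maximum priority is odd, prefix it with the path $\pi_{(1,1) \rightarrow (k,j-1)}$ from $s$, and derive a contradiction with $\sigma$ being winning. Your explicit construction of the infinite play and your care in calling $C$ a closed walk rather than a simple cycle are slightly more detailed than the paper's argument, but the underlying idea is identical.
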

\begin{proof}
Assume, for the sake of contradiction, that Even loses the simulation game in
Step~\ref{lbl:game6} while playing $\follow_0(\sigma)$. Let
\begin{equation*}
\langle \mathcal{S}_1 = \simulate(S_1, \mathcal{F}_1, s_1, \nextfun, \history),
\dots, 
\mathcal{S}_k =
\simulate(S_k, \mathcal{F}_k, s_k, \nextfun, \history) \rangle
\end{equation*}
be the play of the simulation game, where Even loses $\mathcal{S}_k$ in
Step~\ref{lbl:game6}.

Since Even loses in Step~\ref{lbl:game6}, there must be some index~$i$ such
that~$\langle \Pi_{k, i}, \dots, \Pi_{k, |\Pi_k|} \rangle$ forms a cycle
in~$S_k$, and that the highest priority on this cycle is odd. Since Even plays
$\follow_0(\sigma)$, we can use Lemma~\ref{lem:pitopi} to conclude that
$\pi_{(k, i) \rightarrow (k, |\Pi_k|)}$ is a cycle in $G \restriction \sigma$,
and we can apply Lemma~\ref{lem:concatpri} to conclude that $\maxpri(\pi_{(k, i)
\rightarrow (k, |\Pi_k|)})$ is odd. Furthermore, we have that $\pi_{(1,1)
\rightarrow k, (i-1)}$ is a path from $s$ to the cycle $\pi_{(k, i) \rightarrow
(k, |\Pi_k|)}$ in $G \restriction \sigma$. This implies that~$\sigma$ is not a
winning strategy for~$s$, which provides the required contradiction. 
\end{proof}

\begin{lem}
\label{lem:cycle3}
Let $\tau \in \Sigma_1$ be an Odd winning strategy for $s \in V$. If Odd
plays $\follow_1(\tau)$, then Odd cannot lose $\simulate(S, \mathcal{F}, s,
\nextfun, \history)$ is Step~\ref{lbl:game6}, for all choices of $S$,
$\mathcal{F}$, $\nextfun$, and $\history$.
\end{lem}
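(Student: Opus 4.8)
The plan is to mirror the proof of Lemma~\ref{lem:cycle2}, interchanging the roles of the two players and carrying the significance ordering $\preceq$ through the argument in place of the equalities available in the Even case. First I would assume, for the sake of contradiction, that Odd loses $\simulate(S, \mathcal{F}, s, \nextfun, \history)$ in Step~\ref{lbl:game6} while playing $\follow_1(\tau)$, and write the resulting play as $\langle \mathcal{S}_1, \dots, \mathcal{S}_k \rangle$, where Odd loses $\mathcal{S}_k$ in Step~\ref{lbl:game6}. Since the winner in Step~\ref{lbl:game6} is $\winner(\Pi_k)$, and since $\winner$ declares Even the winner exactly when the highest priority on the cycle is even, Odd losing means there is an index~$i$ for which $\langle \Pi_{k, i}, \dots, \Pi_{k, |\Pi_k|} \rangle$ forms a cycle in $S_k$ whose highest priority $p^*$ is even.

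Next I would lift this simulated cycle to the real game. Because $\follow_1(\tau)$ only ever plays $\rej$ when $\tau$ refutes the proposed profile, and when accepting selects a vertex $u$ with $\profile(\tau, v, S_k)(u) \succeq P'(u)$, each simulated tuple meets the hypothesis of the second part of Lemma~\ref{lem:pitopi}. Thus every tuple $(v_{k, j}, p_{k, j}, u_{k, j})$ in the cycle yields a path $\pi_{k, j} \in \paths(\tau, v_{k, j}, S_k, u_{k, j})$ in $G \restriction \tau$. Concatenating these paths produces a genuine cycle $\pi_{(k, i) \rightarrow (k, |\Pi_k|)}$ in $G \restriction \tau$, its first and last vertices coinciding because they did in the simulated cycle, and $\pi_{(1, 1) \rightarrow (k, i-1)}$ is a path from $s$ to this cycle in $G \restriction \tau$.

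The key step is to argue that this real cycle is won by Even. By the part of Lemma~\ref{lem:concatpri} for Odd strategies, we have $\maxpri(\pi_{(k, i) \rightarrow (k, |\Pi_k|)}) \succeq p^*$. Here I would use the crucial observation that $p^*$ is even: any priority $q$ with $q \succeq p^*$ and $p^*$ even must itself be even, by the definition of $\preceq$. Hence the maximum priority occurring on $\pi_{(k, i) \rightarrow (k, |\Pi_k|)}$ is even. Traversing this cycle repeatedly therefore yields an infinite play consistent with $\tau$, starting from $s$ via the prefix $\pi_{(1, 1) \rightarrow (k, i-1)}$, whose highest priority occurring infinitely often is even, and so is won by Even. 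This contradicts the assumption that $\tau$ is a winning strategy for Odd at~$s$.

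I expect the main obstacle to be exactly this significance-ordering step. Unlike the Even case of Lemma~\ref{lem:cycle2}, where Lemma~\ref{lem:concatpri} supplies an equality, here we obtain only $\maxpri(\pi_{(k, i) \rightarrow (k, |\Pi_k|)}) \succeq p^*$, so one must verify carefully that dominating an \emph{even} priority in the $\preceq$ order forces the realised maximum to be even — the implication would fail were $p^*$ odd. Once this is in place, the remaining bookkeeping, namely checking that the lifted paths concatenate into a bona fide cycle reachable from $s$, is routine and identical to the Even case.
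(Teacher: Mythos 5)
Your proposal is correct and follows essentially the same route as the paper's proof: contradiction, lifting the simulated cycle to $G \restriction \tau$ via Lemma~\ref{lem:pitopi}, applying the Odd half of Lemma~\ref{lem:concatpri}, and observing that any priority $\succeq$ an even priority is itself even. The subtlety you flag about only having $\succeq$ rather than equality is exactly the point the paper's proof also handles, and your treatment of it is if anything slightly more carefully stated.
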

\begin{proof}
Assume, for the sake of contradiction, that Odd loses the simulation game in
Step~\ref{lbl:game6} while playing $\follow_1(\tau)$. Let
\begin{equation*}
\langle \mathcal{S}_1 = \simulate(S_1, \mathcal{F}_1, s_1, \nextfun, \history),
\dots, 
\mathcal{S}_k =
\simulate(S_k, \mathcal{F}_k, s_k, \nextfun, \history) \rangle
\end{equation*}
be the play of the simulation game, where Odd loses $\mathcal{S}_k$ in
Step~\ref{lbl:game6}.

Since Odd loses in Step~\ref{lbl:game6}, there must be some index~$i$ such
that~$\langle \Pi_{k, i}, \dots, \Pi_{k, |\Pi_k|} \rangle$ forms a cycle
in~$S_k$, and that that the highest priority on this cycle is even. Since Odd
plays $\follow_1(\tau)$, we can use Lemma~\ref{lem:pitopi} to conclude that
$\pi_{(k, i) \rightarrow (k, |\Pi_k|)}$ is a cycle in $G \restriction \tau$, and
we can apply Lemma~\ref{lem:concatpri} to conclude that $\maxpri(\pi_{(k, i)
\rightarrow (k, |\Pi_k|)}) \succeq \maxpri(\Pi_k)$. Since  $\maxpri(\Pi_k)$ is
even, every priority $p$ with $p \succeq \maxpri(\Pi_k)$ must also be even.
Hence, we can conclude that $\maxpri(\pi_{(k, i) \rightarrow (k, |\Pi_k|)})$ is
even. Furthermore, we have that $\pi_{(1,1) \rightarrow (k, i-1)}$ is a path
from $s_1$ to the cycle $\pi_{(k, i) \rightarrow k, (|\Pi_k|)}$ in $G
\restriction \tau$. This implies that~$\tau$ is not a winning strategy for~$s$,
which provides the required contradiction. 
\end{proof}

When combined, Lemmas~\ref{lem:evenwinspace} through~\ref{lem:cycle3} imply the
following property, which is the main result of this section.

\begin{lem}
\label{lem:correct}
If $\simulate_G(S, \mathcal{F}, s, \nextfun, \history)$ terminates, then it
correctly determines the winner of $s$.
\end{lem}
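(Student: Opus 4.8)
The plan is to prove Lemma~\ref{lem:correct} by combining the four preceding lemmas into a single statement that covers every way the simulation game can terminate. The key observation is that the simulation game can only end in one of two places: Step~\ref{lbl:game5} (when the current vertex lands in a remembered set~$F$) or Step~\ref{lbl:game6} (when the simulated path closes a cycle). Every round either continues the game, moves to a fresh sub-game via $\rej$, or terminates in one of these two steps; so if the game terminates, it terminates in Step~\ref{lbl:game5} or Step~\ref{lbl:game6}.

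First I would fix a vertex $s$ and appeal to positional determinacy (Theorem~\ref{paritydeterminacy}): exactly one player has a positional winning strategy for~$s$. Suppose without loss of generality that Even wins, and let $\sigma \in \Sigma_0$ be an Even winning strategy for~$s$. The plan is then to have Even play $\follow_0(\sigma)$ throughout the simulation game and to argue that Even cannot lose. If the game terminates in Step~\ref{lbl:game5}, then Lemma~\ref{lem:evenwinspace} tells us Even does not lose. If it terminates in Step~\ref{lbl:game6}, then Lemma~\ref{lem:cycle2} tells us Even does not lose. Since these are the only two terminating steps, Even wins whenever the game terminates, which shows the simulation game declares Even the winner, matching the true winner of~$s$.

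The symmetric argument handles the case where Odd wins~$s$: Odd plays $\follow_1(\tau)$ for an Odd winning strategy~$\tau \in \Sigma_1$, and Lemmas~\ref{lem:oddwinspace} and~\ref{lem:cycle3} respectively guarantee that Odd cannot lose in Step~\ref{lbl:game5} or Step~\ref{lbl:game6}. Thus the winner of the simulation game always agrees with the genuine winner of~$s$. The one point requiring a little care, and the place I expect to be the main obstacle, is verifying that Step~\ref{lbl:game5} and Step~\ref{lbl:game6} really are the only terminating steps: I would trace through the round structure to confirm that Steps~\ref{lbl:game1}--\ref{lbl:game4} either loop back, advance to the next sub-game, or route control into Step~\ref{lbl:game5}, so that no terminating behavior is unaccounted for. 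Once this case analysis is pinned down, the lemma follows immediately by quoting the four earlier results against the two possible termination points for each player.
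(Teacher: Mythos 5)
Your proposal is correct and matches the paper's intended argument: the paper gives no explicit proof, stating only that Lemmas~\ref{lem:evenwinspace} through~\ref{lem:cycle3} combine to give the result, and your write-up (positional determinacy, the winner plays the corresponding $\follow$ strategy, and the two termination points Step~\ref{lbl:game5} and Step~\ref{lbl:game6} are each covered by one of the four lemmas) is exactly that combination spelled out.
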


\section{Time Complexity Results for Parity Games}
\label{sec:time}

Let~$G$ be a parity game with a DAG decomposition $(\mathcal{D} = (I, J), X)$ of
width~$k$. In this section we give a version of the simulation game  that can be
solved in $(k + 3) \cdot \log |V| + (k + 2) \cdot \log(k) + (3k + 2) \cdot
\log(|D| + 1)$ space on an alternating Turing machine, and hence 
$O(|V|^{k+3} \cdot k^{k + 2} \cdot (|D| + 1)^{3k + 2})$ 
time on a deterministic Turing
machine.

\subsection{The functions \texorpdfstring{$\nextfun$}{Next} and \texorpdfstring{$\history$}{Hist}}
 Recall that the simulation
game can be customized by specifying the $\nextfun$ and $\history$ functions. We
begin by defining the versions of these functions that will be used for this
result. We start by defining the $\nextfun$ function. Recall, from
Section~\ref{sec:outline1}, that we begin by playing a simulation game on a
source in the DAG decomposition. To define this formally, we must be more
specific: if we want to determine the winner of some vertex $v \in V$, then we
must pick a source $i$ from the DAG decomposition such that $v \in \guarded(X_i)
\cup X_i$, and play a simulation game on $X_i$. 

The definition of $\nextfun$ follows the intuition that was outlined in
Section~\ref{sec:outline1}. Suppose that we are required to compute $\nextfun(S,
v, \mathcal{F})$, for some set $S = X_i$ with $i \in I$, some $v \in V$, and some
history $\mathcal{F}$. We find the DAG decomposition node $j \in I$ such that
$(i, j) \in J$, and $v \in \guarded(X_j) \cup X_j$, and set $\nextfun(S, v,
\mathcal{F}) = X_j$. After defining $\history$, we will prove
Lemma~\ref{lem:vinxi}, which shows that such a~$j$ always exists.

To define the $\history$ function, we use the observation that we made in
Section~\ref{sec:outline1}: we only need to remember at most one record. More
precisely, we only need to remember the record corresponding to the last
simulation game that was played. Formally, we can define the $\history$ function
as follows.
\begin{iteMize}{$\bullet$}
\item The first time that $\history(\mathcal{F})$ is called, $\mathcal{F}$ will
contain only one record. We do not delete this record, so we set
$\history(\mathcal{F}) = \mathcal{F}$.
\item Every other time that $\history(\mathcal{F})$ is called, $\mathcal{F}$
will contain exactly two records. Due to the definition of $\nextfun$, we know
that the two records in $\mathcal{F}$ have the form $(X_i, v_1,
P_1)$ and $(X_j, v_2, P_2)$ for some $i, j \in I$, some
$v_1, v_2 \in V$, and some strategy profiles $P_1$ and $P_2$.
Furthermore, we know that $(i, j) \in J$ is a directed edge in the DAG
decomposition. We set $\history(\mathcal{F}) = \{(X_j, v_2, P_2)\}$.
\end{iteMize}

\noindent We now show the correctness of the $\nextfun$ function.

\begin{lem}
\label{lem:vinxi}
Let $S = X_i$ for some $i \in I$, and $s \in \guarded(X_i) \cup X_i$. Suppose
that $\simulate(S, \emptyset, s, \nextfun, \history)$ ends when Odd rejects a
strategy profile for some vertex $v \in V$. There exists a $j \in I$ with $(i,
j) \in J$ and $v \in \guarded(X_j) \cup X_j$.
\end{lem}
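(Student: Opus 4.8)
The plan is to prove that the rejected vertex $v$ lies in $\guarded(X_i)$, and then to read off the required child $j$ by unfolding the definition of $\guarded$. First I would record a simple invariant about the play of $\simulate(X_i, \emptyset, s, \nextfun, \history)$: the current vertex $c$ is always a member of $S = X_i$. This holds after the initialization (if $s \in X_i$ then $c = s$; otherwise the game opens at Step~\ref{lbl:game3}, and any profile accepted by Odd sets $c$ to some $u \in X_i$ at Step~\ref{lbl:game4}), and every round preserves it, because $c$ is only ever reassigned to a vertex of $X_i$ --- either at Step~\ref{lbl:game2}, where $v \in S$, or in the accepting branch of Step~\ref{lbl:game4}, where $u \in S$.

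Because $\mathcal{F} = \emptyset$, the play can only terminate through a rejection at Step~\ref{lbl:game4}, which requires $v \notin S = X_i$. There are two ways the rejected vertex $v$ can arise: either the rejection happens during the initialization, in which case $v = s$; or it happens in a later round, in which case $v$ is obtained from the real edge $(c, v) \in E$ selected at Step~\ref{lbl:game1}, with $c \in X_i$ by the invariant.

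Write $R_i = X_i \cup \guarded(X_i) = \bigcup_{i \sqsubseteq_D k} X_k$. I claim $v \in R_i$ in both cases. For the initialization case this is immediate, since $s \in \guarded(X_i) \cup X_i$ by hypothesis. For the later case I would use that $i$ is a source --- the empty history identifies this as the initial game of the simulation, which by construction is played on a source --- together with the standard property of DAG decompositions that a source guards its reachable bags with the empty set, i.e.\ that $R_i$ is closed under outgoing edges. Since $c \in X_i \subseteq R_i$ and $(c, v) \in E$, this gives $v \in R_i$. In either case $v \in R_i$, and as $v \notin X_i$ we obtain $v \in R_i \setminus X_i = \guarded(X_i)$. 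I expect this closure property of sources to be the main obstacle, since it is the one genuinely structural input; notably it does not follow from conditions~(1)--(3) of Definition~\ref{def:dagwidth} alone, so the argument must appeal to the source being guarded by $\emptyset$, as in the standard definition of a DAG decomposition.

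It remains to extract the child. Unfolding $\guarded(X_i) = (\bigcup_{(i, j) \in J} \bigcup_{j \sqsubseteq_D k} X_k) \setminus X_i$ produces a node $j$ with $(i, j) \in J$ and some $k$ with $j \sqsubseteq_D k$ and $v \in X_k$. If $v \in X_j$, this $j$ already satisfies $v \in \guarded(X_j) \cup X_j$. Otherwise $v \notin X_j$ forces $j \sqsubset_D k$ strictly; taking the first decomposition edge $(j, j')$ on a directed path from $j$ to $k$, we have $j' \sqsubseteq_D k$, so $v \in X_k$ witnesses membership of $v$ in $\bigcup_{(j, j') \in J} \bigcup_{j' \sqsubseteq_D k'} X_{k'}$, and since $v \notin X_j$ this places $v$ in $\guarded(X_j)$. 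Either way the child $j$ delivered by $\guarded(X_i)$ is exactly the node the lemma requires.
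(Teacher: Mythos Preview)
Your proof is correct and follows the same line as the paper: show that the rejected vertex satisfies $v \in \guarded(X_i)$, then extract the child $j$ by unfolding the definition of $\guarded$. Your treatment is in fact tidier on two points the paper glosses over --- you handle the initialization case $v = s$ explicitly, and you correctly flag that the closure of a source's reachable bags is not derivable from conditions~(1)--(3) of Definition~\ref{def:dagwidth} alone but needs the standard ``sources are guarded by $\emptyset$'' axiom.

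One difference worth knowing about: the paper's proof also treats a second case, where the current game is played on a non-source $X_i$ with history $\mathcal{F} = \{(X_k, p, P)\}$ and $(k,i) \in J$. This goes beyond the lemma as literally stated (which fixes $\mathcal{F} = \emptyset$), but it is what actually justifies that $\nextfun$ is well-defined at \emph{every} recursive call, not just the first. In that case the guarding comes from condition~(2) of Definition~\ref{def:dagwidth} applied to the edge $(k,i)$: any edge out of $X_i \setminus X_k$ lands in $\guarded(X_i) \cup X_k$, and Step~\ref{lbl:game2} rules out $v \in X_k$ since $X_k \subseteq F$. Your inference that the empty history forces $i$ to be a source is the right reading of the statement as written, but if you want the lemma to do the work the paper needs it for, you would have to add this second case.
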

\begin{proof}
The definitions of $\nextfun$ and $\history$ imply that there are two cases to
consider:
\begin{iteMize}{$\bullet$}
\item The case where $X_i$ is a sink, and $\mathcal{F}$ is empty. 
\item The case where $X_i$ is not a sink, and $\mathcal{F} = \{(X_k, v', P')\}$,
with $(k, i) \in J$.
\end{iteMize}
In the first case, the property follows from the definition of a guarded set: no
matter which edges $(c, v)$ are chosen in Step~\ref{lbl:game1}, we can never
move to a vertex $v \notin \guarded(X_i)$. In the second case, we can either
choose an edge $(c, v)$ with $v \in \guarded(X_i)$, or an edge $(c, v)$ with $v
\in X_k$. However, Step~\ref{lbl:game2} ensures that, if $v \in X_k$, then the
game skips directly to Step~\ref{lbl:game5}. Thus, in both cases, if Odd rejects
a strategy profile for some vertex $v$, with $(c, v) \in E$, then we have $v \in
\guarded(X_i)$.

To prove our claim, we can apply the properties given in
Definition~\ref{def:dagwidth}. They imply that, if there is an edge $(c, v) \in
E$ with $c \in X_i$ and $v \in \guarded(X_i) \setminus X_i$, then there must
exist an edge $(i, j) \in J$ with $v \in \guarded(X_j) \cup X_j$. 
\end{proof}

\subsection{Termination and correctness}
We now show that this version of the simulation game terminates, and
that it can be used to correctly determine the winner of the parity game. 
Recall that, in Section~\ref{sec:wingame}, we showed that if the simulation game
terminates, then it correctly determines the winner of the parity game. Thus,
we must show that the simulation game terminates when it is equipped with
$\nextfun$ and $\history$.

\begin{lem}
\label{lem:terminates}
If $S = X_i$ for some source $i \in I$, and $s \in \guarded(X_i) \cup X_i$ is a
starting vertex, then
$\simulate(S, \emptyset, s, \nextfun, \history)$ terminates.
\end{lem}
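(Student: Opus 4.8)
The plan is to establish termination by combining two facts: first, that each individual sub-game in the play terminates, and second, that the play consists of only finitely many sub-games. Recall that a play of $\simulate(S, \emptyset, s, \nextfun, \history)$ is a sequence $\langle \mathcal{S}_1, \mathcal{S}_2, \dots \rangle$, in which each transition $\mathcal{S}_t \to \mathcal{S}_{t+1}$ arises exactly when Odd plays \rej\ in Step~\ref{lbl:game4}. I would first argue that each $\mathcal{S}_t$ terminates in isolation, and then show that the sequence of sets $S_1, S_2, \dots$ traces a directed path through the DAG decomposition, which must be finite because a DAG contains no directed cycle.

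For the first fact, I would observe that within a single game $\mathcal{S}_t = \simulate(S_t, \mathcal{F}_t, s_t, \nextfun, \history)$ the current vertex $c$ remains in $S_t$ so long as the game continues: a real move handled by Step~\ref{lbl:game2} keeps $c$ in $S_t$ (a move to a vertex of some record's $F$ instead triggers Step~\ref{lbl:game5} and stops the game), and an accepted simulated move in Step~\ref{lbl:game4} sets $c$ to some $u \in S_t$. Since $S_t$ is finite, if the game neither stops in Step~\ref{lbl:game5} nor transfers control via a rejection in Step~\ref{lbl:game4}, then the growing path $\Pi$ must eventually repeat a vertex, forming a cycle and triggering Step~\ref{lbl:game6}. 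Hence every sub-game terminates.

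For the second fact, I would set up an invariant, maintained by induction on the sub-games, that each $S_t$ has the form $X_{i_t}$ for some $i_t \in I$, that $s_t \in \guarded(X_{i_t}) \cup X_{i_t}$, and that $\mathcal{F}_t$ holds at most the single record $(X_{i_{t-1}}, \cdot, \cdot)$ with $(i_{t-1}, i_t) \in J$. The base case holds since $S_1 = X_{i_1}$ is a source, $s_1 \in \guarded(X_{i_1}) \cup X_{i_1}$, and $\mathcal{F}_1 = \emptyset$. For the inductive step, when Odd rejects a strategy profile for a vertex $v$, Lemma~\ref{lem:vinxi} guarantees a node $j \in I$ with $(i_t, j) \in J$ and $v \in \guarded(X_j) \cup X_j$; by definition $\nextfun$ selects $S_{t+1} = X_j$ with $i_{t+1} = j$ and new starting vertex $s_{t+1} = v$, preserving the first two parts of the invariant. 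The $\history$ function, which discards all but the newest record, keeps $\mathcal{F}_{t+1} = \{(X_{i_t}, \cdot, P')\}$, and the edge $(i_t, i_{t+1}) \in J$ maintains the third part. Consequently the indices $i_1, i_2, \dots$ form a directed path in the DAG $\mathcal{D} = (I, J)$.

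Since $\mathcal{D}$ is acyclic, no node can repeat along this path, so the sequence of sub-games has length at most $|I|$. Together with the termination of each sub-game, this shows the whole play is finite, which is the claim. The main obstacle I anticipate is verifying the inductive invariant cleanly: one must confirm that $\history$ and $\nextfun$ remain well-defined at each step — in particular that the two records passed to $\history$ always have the edge-related form its definition requires, and that Lemma~\ref{lem:vinxi} can be invoked because the rejected vertex $v$ genuinely lies in $\guarded(X_{i_t})$ rather than escaping the guarded region.
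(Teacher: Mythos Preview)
Your proposal is correct and follows essentially the same approach as the paper: both argue that each sub-game terminates because $\Pi$ must eventually repeat a vertex of the finite set $S_t$, and that the sequence of sub-games traces a directed path in the DAG decomposition, hence is finite. Your explicit inductive invariant is a slightly cleaner packaging than the paper's argument, which instead separately spells out why Odd cannot play \rej\ once a sink is reached; one small caveat is that Lemma~\ref{lem:vinxi} as literally stated concerns the game with empty history, but its proof (and evident intent) covers the general sub-game with a single record, so your invocation for the inductive step is justified.
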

\begin{proof}
We begin by showing that Odd cannot play \rej\ in Step~\ref{lbl:game4}
infinitely many times.  This follows from the definition of $\nextfun$. This
functions ensures two properties:
\begin{iteMize}{$\bullet$}
\item Each simulation game is played on a set of vertices $S = X_j$
for some $j \in I$.
\item If odd plays \rej\ in a game on $S = X_j$, then the next game
will be played on $S' = X_k$, with $(j, k) \in J$.
\end{iteMize} 
These two properties ensure that, every time that Odd plays \rej, we take one
step along a directed path in the DAG decomposition. Since the DAG decomposition
contains no cycles, we know that Odd can play \rej\ at most $|I|$ times before
we reach a sink in the DAG decomposition.

The $\history$ function ensures that, once we reach a sink in the
DAG decomposition, the game must end. Suppose that we have reached a simulation
game on $S = X_j$ where $j \in I$ is a sink. The history function ensures
that $\mathcal{F}$ contains a record $(F, p, P)$ with $F = X_k$,
where $(k, j) \in J$. Therefore, since $X_j$ guards $X_i$, every
edge $(v, w)$ with $v \in S$ and $w \notin S$ has $w \in X_j$. Thus,
Step~\ref{lbl:game2} of the simulation game ensures that Odd cannot play \rej.

We now show that if Odd only plays \acc\ in a simulation game on $S$, then the
game must terminate after $|S|+1$ steps. This is because each time Odd plays
\acc\ for some vertex $u \in S$, the vertex $u$ is appended to~$\Pi$. Therefore,
if the game continues for $|S|+1$ steps, then Odd will be forced to play \acc\
for some vertex $u$ that has been visited by~$\Pi$, and the game will terminate
in Step~\ref{lbl:game6}. 
\end{proof}

Then, we can apply Lemma~\ref{lem:terminates} along with
Lemma~\ref{lem:correct} to imply the correctness
of this version of the simulation game.

\begin{lem}
\label{lem:verifyparity}
Let $v \in V$ be a vertex, and let $X_i \in X$ be a set of vertices, where $i$
is a source, and $v \in \guarded(X_i) \cup X_i$. Even wins $\simulate(X_i,
\emptyset, v, \nextfun, \history)$ if and only if $v \in W_0$.
\end{lem}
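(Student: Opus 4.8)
The plan is to obtain this biconditional as a direct corollary of the two facts we have just established: that this instantiation of the simulation game always halts (Lemma~\ref{lem:terminates}), and that any halting run of the simulation game reports the true winner (Lemma~\ref{lem:correct}). Since the substantive work is already contained in those two lemmas, the proof reduces to checking that their hypotheses are met and to unpacking what ``correctly determines the winner'' means.

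First I would verify that the game $\simulate(X_i, \emptyset, v, \nextfun, \history)$ falls exactly under the hypotheses of Lemma~\ref{lem:terminates}. The set on which the game opens is $S = X_i$ with $i$ a source of the DAG decomposition, the history is the empty history $\emptyset$, and the starting vertex $v$ satisfies $v \in \guarded(X_i) \cup X_i$ --- these are precisely the conditions required by Lemma~\ref{lem:terminates}. Hence this game terminates.

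Next I would invoke Lemma~\ref{lem:correct}, which guarantees that whenever $\simulate_G(X_i, \emptyset, v, \nextfun, \history)$ terminates it correctly determines the winner of $v$. By Theorem~\ref{paritydeterminacy}, the vertex $v$ lies in exactly one of the winning sets $W_0$ or $W_1$; saying the game ``correctly determines the winner'' is exactly the statement that Even wins the simulation game precisely when $v \in W_0$, and Odd wins precisely when $v \in W_1$. This yields both directions of the desired equivalence.

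The proof therefore has no genuine obstacle of its own: the only points requiring care are confirming that the empty history together with the source and guard conditions line up exactly with the hypotheses of Lemma~\ref{lem:terminates}, and recognising that the biconditional is simply a restatement of ``correctly determines the winner'' once the determinacy partition of Theorem~\ref{paritydeterminacy} is invoked. All the combinatorial content --- the termination argument that walks down a directed path to a sink of the DAG decomposition, and the strategy-following arguments of Lemmas~\ref{lem:evenwinspace}--\ref{lem:cycle3} that underlie Lemma~\ref{lem:correct} --- has already been discharged.
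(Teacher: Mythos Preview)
Your proposal is correct and matches the paper's approach exactly: the paper states this lemma without a separate proof, noting only that it follows by applying Lemma~\ref{lem:terminates} together with Lemma~\ref{lem:correct}. Your write-up simply spells out that deduction (and the appeal to Theorem~\ref{paritydeterminacy} to unpack the biconditional), which is precisely the intended argument.
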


\subsection{Implementation on an alternating Turing machine}

In this section, we show how this version of the simulation game can be
implemented on an alternating Turing machine. We use the existential states of
the machine to simulate the moves of Even and the universal states of the
machine to simulate the moves of Odd. In this way, we ensure that the machine
will have an accepting run if and only if Even has a strategy for $\simulate_G(S,
\emptyset, s, \nextfun, \history)$ that forces a win no matter how Odd plays.

\begin{thm}
\label{thm:time}
There is an algorithm that, when given a parity game~$G$, a DAG decomposition
$(\mathcal{D} = (I, J), X)$ of width~$k$ for $G$, and a vertex $s$, terminates
in $O(|V|^{k+3} \cdot k^{k + 2} \cdot (|D| + 1)^{3k + 2})$ time and outputs the
winner of $s$.
\end{thm}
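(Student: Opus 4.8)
The plan is to build an alternating Turing machine that plays $\simulate_G(X_i, \emptyset, s, \nextfun, \history)$ for a suitable source $i$, and then invoke the standard translation from alternating space to deterministic time. First I would have the machine begin by existentially guessing a source $i \in I$ of the DAG decomposition with $s \in \guarded(X_i) \cup X_i$ and verifying this, after which it enters the main loop. Throughout, Even's choices in Steps~\ref{lbl:game1} and~\ref{lbl:game3} are realised by existential (nondeterministic) branching, and Odd's choices in Steps~\ref{lbl:game1} and~\ref{lbl:game4} by universal branching. The machine accepts exactly along the plays that Even wins in Step~\ref{lbl:game5} or Step~\ref{lbl:game6}. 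Correctness is then immediate from Lemma~\ref{lem:verifyparity}: the machine has an accepting computation precisely when Even has a winning strategy in the simulation game, that is, precisely when $s \in W_0$.

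The heart of the proof is the space analysis, so I would next enumerate everything a configuration must store and bound each piece. At any moment the machine holds: the current set $S = X_j$ and current vertex $c$; enough of the simulated path $\Pi$ to detect a repeated vertex and to maintain $\maxpri(\Pi)$ (by Lemma~\ref{lem:terminates} the current game lasts at most $|S|+1$ steps, so a bounded record of visited vertices together with one priority suffices); and the history $\mathcal{F}$. The key saving, justified by the definition of $\history$, is that $\mathcal{F}$ never stably contains more than one record, and at most two transiently, so only a constant number of records $(F, p, P)$ are ever present. I would encode each set by listing its at most $k$ vertices, refer to an element of a $k$-set by an index in $\{1,\dots,k\}$ (this produces the $\log k$ terms), and store each strategy profile and each remembered priority as vectors over $D \cup \{-\}$ (this produces the $\log(|D|+1)$ terms). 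Summing the contributions, and checking that the scratch space for the elementary operations ($\update$, comparisons under $\preceq$, and the winner tests of Steps~\ref{lbl:game5} and~\ref{lbl:game6}) is dominated, yields the bound $(k+3)\log|V| + (k+2)\log k + (3k+2)\log(|D|+1)$.

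Within this budget I would also need to show that $\nextfun$ can be evaluated, which is the step I expect to be the main obstacle. Computing $\nextfun(X_j, v, \mathcal{F})$ requires locating a child $j'$ with $(j,j')\in J$ and $v \in \guarded(X_{j'}) \cup X_{j'}$; Lemma~\ref{lem:vinxi} guarantees that such a $j'$ exists, but testing $v \in \guarded(X_{j'}) \cup X_{j'}$ amounts to a reachability query in the DAG decomposition, since this set equals $\bigcup_{j' \sqsubseteq_D l} X_l$. The delicate point is carrying out this reachability test, together with the search over children, within $O(k \log |V|)$ space and without disturbing the alternation structure of the surrounding game; I would handle it with a self-contained subroutine that reuses its work space, so that its transient cost is absorbed into the configuration bound rather than added permanently to the stored state.

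Finally, I would conclude by the standard result that a language recognised by an alternating Turing machine using space $S(n) \ge \log|V|$ lies in deterministic time $2^{O(S(n))}$~\cite{Allender}. Substituting the space bound above, the number of configurations, and hence the running time, is $2^{(k+3)\log|V| + (k+2)\log k + (3k+2)\log(|D|+1)} = |V|^{k+3}\cdot k^{k+2} \cdot (|D|+1)^{3k+2}$, which establishes the claimed $O(|V|^{k+3} \cdot k^{k+2} \cdot (|D|+1)^{3k+2})$ bound.
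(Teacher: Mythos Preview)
Your overall architecture matches the paper: simulate the game on an alternating Turing machine with existential branches for Even and universal branches for Odd, invoke Lemma~\ref{lem:verifyparity} for correctness, and finish with the $\mathrm{ASPACE}\subseteq\mathrm{DTIME}$ translation. The place where your argument does not go through is the space accounting for the sets $S$ and $F$.

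You say you will ``encode each set by listing its at most $k$ vertices.'' With that encoding, $S$ costs $k\log|V|$ bits and the single stored record $(F,p,P)$ contributes another $k\log|V|$ bits for $F$; together with the $\log|V|$ for $v$ this already gives $(2k+1)\log|V|$, not $(k+3)\log|V|$, so your summation cannot come out to the bound you quote and the theorem's exponent $k+3$ on $|V|$ is not established. The paper avoids this by \emph{not} listing the vertices of $S$ and $F$ separately: by the definitions of $\nextfun$ and $\history$, one always has $S=X_i$ and $F=X_j$ for a single edge $(j,i)\in J$, and since $|J|\le |V|^{k+2}$ (Section~\ref{sec:dagwidth}), storing the index of that one edge identifies both sets in $(k+2)\log|V|$ bits. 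This is the step you are missing; without it your encoding yields $|V|^{2k+1}$ rather than $|V|^{k+3}$ in the running time.

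As for your worry about evaluating $\nextfun$, note that the paper does not carry out a reachability test inside the alternating machine at all. The $\direction$ function is defined to pick \emph{any} child $j'$ with $v\in\guarded(X_{j'})\cup X_{j'}$; Lemma~\ref{lem:vinxi} guarantees such a $j'$ exists, and the correctness proofs (Lemmas~\ref{lem:correct} and~\ref{lem:verifyparity}) are insensitive to which one is chosen. So it suffices to iterate over the (at most $|V|^{k+2}$) edges out of the current node and look one up in the decomposition, which fits within the same $(k+2)\log|V|$ budget already allocated to storing the edge; there is no need for a general reachability subroutine or any concern about the alternation structure.
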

\begin{proof}
We show that $\simulate_G(S, \emptyset, s, \nextfun, \history)$, where $S$ is a
source in the DAG decomposition, and $s \in \guarded(S) \cup S$, can be
implemented on an alternating Turing machine in logarithmic space. We split our
analysis into two cases: the data that must be stored throughout the game, and
the data that must be stored to implement each iteration of the game.

We begin by considering the data that must be stored throughout. The simulation
game maintains the variables $S$, $\mathcal{F}$, $\Pi$ and $c$. Note
that, from the definitions of $\history$ and $\nextfun$, we have that $S = X_i$
for some $i \in I$. Moreover, we know that $\mathcal{F}$ contains exactly one
record $(F, p, P)$, where $F = X_j$ for some $j$ with $(j, i) \in J$. 
These variables can be stored using the following amount of space.
\begin{iteMize}{$\bullet$}
\item 
We claim that $S$ and $F$ can be remembered using $(k + 2) \log |V|$ bits.
This is because we always have $S = X_i$ and $F = X_j$ for some 
edge $(j, i) \in J$, and since
since $|J| \le |V|^{k+2}$, we need at most $(k+2) \log |V|$ bits to store $(i,
j)$. Remembering $(i, j)$ is sufficient to identify both $S$ and $F$.
\item Note that~$c$ is always a vertex in $S$. Since we know that $S = X_i$, we can represent $c$ as a number between $1$ and $k$. Thus, $c$ can be stored in at most $\log k$ bits.
\item The priority $p$ can be stored in $\log D$ bits.
\item The strategy profile $P$ contains $|F|$ mappings from the vertices of $F$
to either a priority $d$, or $-$. Thus, we can represent a strategy profile as a
list of length $k$, where each element of the list is a number between $1$ and
$|D|+1$: the numbers $1$ through $|D|$ encode their respective priorities, and
the number $|D|+1$ encodes $-$. Thus, $P$ can be stored using $k \cdot \log (|D|+1)$
bits.
\item Recall that the sequence $\Pi$ is reset to $\emptyset$ at the start of
each simulation game. This implies that $\Pi$ contains at most $|S|+1$ tuples of
the form $(v, p, u)$. Due to the initialization procedure of the simulation game
we can have $v \notin S$ and $u \in S$ in the first element of $\Pi$. However,
for every subsequent element, we always have $v \in S$ and $u \in S$. Suppose
that the first element of $\Pi$ does have $v \notin S$. We claim that we do not
need to remember the vertex $v$. Note that the simulation game only cares about
the largest priority along $\Pi$, and whether $\Pi$ forms a cycle. Since $v
\notin S$, Step~\ref{lbl:game2} prevents $\Pi$ from forming a cycle that
includes $v$. Thus, we do not need to store $v$.

Note also that, if $(v, p, u)$ appears in $\Pi$, then, the next element in the
sequence must be of the form $(u, p', u')$. Thus, for every tuple in $\Pi$ other
than the first element in the sequence, we only need to store the priority $p'$
and the vertex $u'$. In total, therefore, we can represent $\Pi$ by storing at
most $|S| + 1$ priorities, and at most $|S| + 1$ vertices in $S$. These can be
stored in $(k+1) \cdot \log |D|  + (k + 1) \cdot \log k$ bits.
\end{iteMize}

\noindent Next, we account for the space used during the execution of the game.
\begin{iteMize}{$\bullet$}
\item The vertex $v$ that is selected in Step~\ref{lbl:game1} requires $\log
|V|$ bits to store.
\item As with the strategy profile $P$, the strategy profile $P'$ given by Even
in Step~\ref{lbl:game3} can be stored in $k \cdot \log (|D|+1)$ bits.
\item We claim that no additional space is needed if Odd rejects the strategy
profile $P'$ in step \ref{lbl:game4}. To see why, note that $\update$ will
always delete the current record in $\mathcal{F}$. Thus, we can implement
Step~\ref{lbl:game4} by first deleting the existing record in $\mathcal{F}$, and
then reusing the space for the new record.
\item No extra space is needed by Steps~\ref{lbl:game2},~\ref{lbl:game5},
and~\ref{lbl:game6}.
\end{iteMize}

\noindent Thus, in total, we require 
\begin{equation*}
(k + 3) \cdot \log |V| + (k + 2) \cdot \log(k) + (3k + 2) \cdot \log(|D| +
1)
\end{equation*}
bits of storage for to implement the game on an alternating Turing machine.
Using the standard determinization techniques for alternating Turing machines,
we can obtain a deterministic algorithm that runs in $O(|V|^{k+3} \cdot k^{k
+ 2} \cdot (|D| + 1)^{3k + 2})$ time.
\end{proof}


\section{A Faster Algorithm For Parity Games With Bounded Treewidth}
\label{sec:complexity}

In Section~\ref{sec:time} we gave an 
$O(|V|^{k+3} \cdot k^{k + 2} \cdot (|D| + 1)^{3k + 2})$
 time algorithm for parity games with bounded DAG width. In this
section, we show that, if the parity game has bounded treewidth, then we can
produce a version of the simulation game that can be solved in 
$O(|V| \cdot (k+1)^{k+5} \cdot (|D|+1)^{3k + 5})$ 
time.

\subsection{Outline.} 

The version of the simulation game used in this section is
very similar to the version used in Section~\ref{sec:time}, but this time we
will be using a tree decomposition. In fact, we will convert the tree
decomposition to a \emph{rooted} tree decomposition.
Suppose that we want to determine the winner of some vertex $s$
in the parity game. We find a tree decomposition node $i$ such that $s \in
X_i$, and we declare $i$ to be the root of $T$. Then, we turn $T$ into a
directed tree, by orienting all edges away from the root $i$. This step is
necessary to obtain our desired complexity.

Recall that the factor of $|V|^{k+3}$ in Theorem~\ref{thm:time} arose from the
fact that we needed to use $(k+3) \cdot \log |V|$ bits of storage in the
alternating Turing machine:
\begin{iteMize}{$\bullet$}
\item We needed $(k+2) \cdot \log |V|$ bits to store an edge from the DAG
decomposition.
\item We needed $\log |V|$ bits to store the vertex $v$.
\end{iteMize}
\noindent The first observation that we make is that, when we consider parity
games with bounded treewidth, we can easily reduce the amount of space that is
required. Whereas a DAG decomposition can have up to $|V|^{k+2}$ edges, a tree
decomposition can have at most $|V| - 1$ edges. Thus, we can reduce the amount of
space needed by the alternating Turing machine to 
$2 \cdot \log |V| + (k + 2) \cdot \log(k) + (3k + 2) \cdot
\log(|D| + 1)$, and thereby obtain a $O(|V|^{2} \cdot k^{k + 2} \cdot (|D| +
1)^{3k + 2})$ time algorithm for solving parity games with bounded treewidth.

We can, however, do better than this. For the rest of this section, our goal is
to show that the $2 \cdot \log |V|$ term in our space bounds can be reduced to
$\log |V|$. This will lower the $|V|^2$ term in the complexity of the
algorithm to $|V|$. To do this, we concentrate on the amount of space used by
the variable $v$. The reason why we require $\log |V|$ bits to store~$v$ is that
the edge chosen in Step~\ref{lbl:game1} can potentially set $v$ to be any vertex
in the parity game. Figure~\ref{fig:subtree} illustrates this. It shows a
subtree of a tree decomposition: specifically, it shows the tree decomposition
nodes $i$ with $u \in X_i$. Each node in the tree decomposition is represented
as a circle. For each set $X_i$, the diagram displays the outgoing edges from
$u$ that are contained in that node. That is, for each set $X_i$, it displays
the edges of the form $(u, w)$ with both $u \in X_i$ and $w \in X_i$.

\begin{figure}
\label{fig:simulation}
\begin{tikzpicture}[>=stealth,scale=0.65]
\node (one) at (0, 0) [node] {};
\node (two) at (5, 0) [node] {}
	edge[<-] (one);
\node (three) at (10, 0) [node] {}
	edge[<-] (two);
\node(four) at (15, 2.5) [node] {}
	edge[<-] (three);
\node(five) at (15, -2.5) [node] {}
	edge[<-] (three);

\node (fake) at (-3, 0) {}
	edge[->, dashed] (one);
\node (fake2) at (18, 2.5) {}
	edge[<-, dashed] (four);
\node (fake3) at (18, -2.5) {}
	edge[<-, dashed] (five);

\node (onel) at (0, -2.5) {$X_j$};
\node (twol) at (5, -2.5) {$X_k$};
\node (threel) at (10, -2.5) {$X_l$};
\node (fourl) at (15, 0) {$X_m$};
\node (fourl) at (15, -5) {$X_n$};

\node (v1) at (0,1)  [even] {$u$};
\node (v2) at (5,1)  [even] {$u$};
\node (v3) at (10,1)  [even] {$u$};
\node (v4) at (15,3.5)  [even] {$u$};
\node (v5) at (15,-1.5)  [even] {$u$};

\node (b) at (0.8, -0.8) [odd] {$e$}
	edge[<-] (v1);
\node (a) at (-0.8, -0.8) [even] {$d$}
	edge[<-] (v1);

\node (b) at (5.8, -0.8) [odd] {$e$}
	edge[<-] (v2);

\node (d) at (10.8, -0.8) [even] {$g$}
	edge[<-] (v3);
\node (c) at (9.2, -0.8) [odd] {f}
	edge[<-] (v3);

\node (e) at (15, 1.7) [even] {$h$}
	edge[<-] (v4);

\node (f) at (15, -3.3) [odd] {$i$}
	edge[<-] (v5);

\end{tikzpicture}
\caption{The outgoing edges from the vertex $u$ shown in a tree decomposition.}
\label{fig:subtree}
\end{figure}
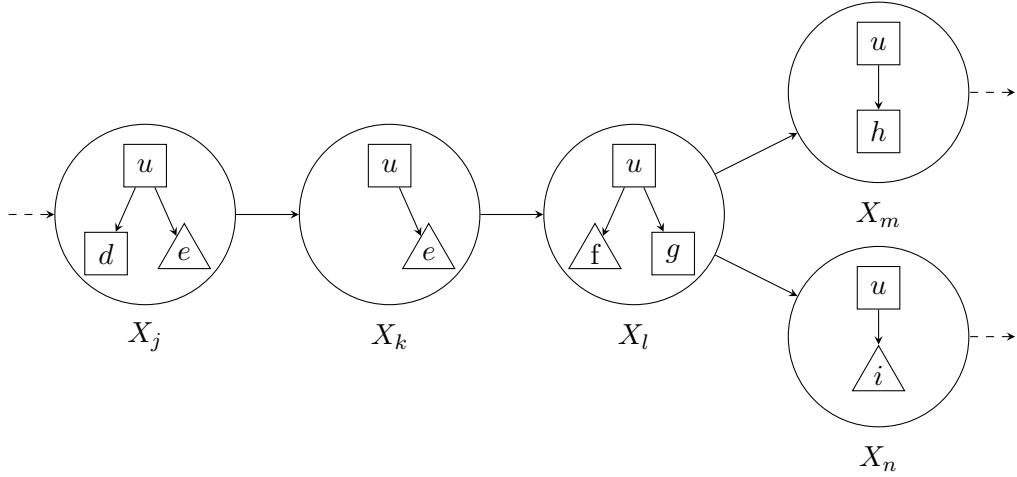

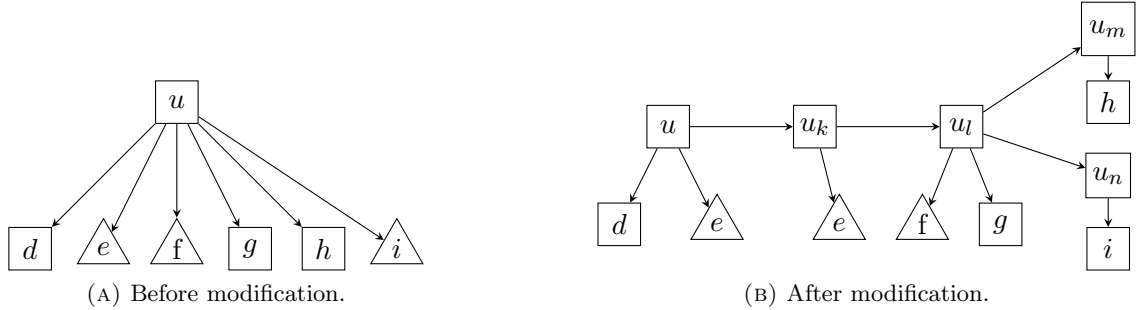
\begin{figure}
\subfloat[Before modification.]
{
	\label{fig:before}
	\begin{tikzpicture}[>=stealth,scale=0.65]
	\node (v) at (0, 0) [even] {$u$};
	\node (a) at (-3, -3) [even] {$d$}
		edge[<-] (v);
	\node (b) at (-1.5, -3) [odd] {$e$}
		edge[<-] (v);
	\node (c) at (0, -3) [odd] {f}
		edge[<-] (v);
	\node (d) at (1.5, -3) [even] {$g$}
		edge[<-] (v);
	\node (e) at (3, -3) [even] {$h$}
		edge[<-] (v);
	\node (f) at (4.5, -3) [odd] {$i$}
		edge[<-] (v);
	\end{tikzpicture}
}
\hfill
\subfloat[After modification.]
{
	\label{fig:after}
	\begin{tikzpicture}[>=stealth,scale=0.65]
	\node (v) at (0, 0) [even] {$u$};
	\node (a) at (-1, -2) [even] {$d$}
		edge[<-] (v);
	\node (b) at (1, -2) [odd] {$e$}
		edge[<-] (v);

	\node (v2) at (3, 0) [even] {$u_k$}
		edge[<-] (v);
	\node (b) at (3.5, -2) [odd] {$e$}
		edge[<-] (v2);

	\node (v3) at (6, 0) [even] {$u_l$}
		edge[<-] (v2);
	\node (c) at (5.2, -2) [odd] {f}
		edge[<-] (v3);
	\node (d) at (6.8, -2) [even] {$g$}
		edge[<-] (v3);

	\node (v4) at (9, 2) [even] {$u_m$}
		edge[<-] (v3);
	\node (e) at (9, 0.5) [even] {$h$}
		edge[<-] (v4);

	\node (v5) at (9, -1) [even] {$u_n$}
		edge[<-] (v3);
	\node (f) at (9, -2.5) [even] {$i$}
		edge[<-] (v5);

	\end{tikzpicture}
}
\caption{The modification procedure.}
\label{fig:beforeafter}
\end{figure}

Suppose that we are playing a simulation game on the set $X_j$, and that the
current vertex $c = u$. In Step~\ref{lbl:game1}, Even could select the edge $(u,
h)$. Since~$X_m$ is far from~$X_j$ in the tree decomposition, the best we can do
is to use $\log |V|$ bits to store the vertex $h$.

We will fix this problem by modifying the parity game. The modification
procedure is illustrated in Figure~\ref{fig:beforeafter}. Each vertex in the
graph will have its outgoing edges modified. Figure~\textsc{\ref{fig:before}}
shows an example vertex $u$, and Figure~\ref{fig:subtree} gives an example tree
decomposition, where only the nodes that contain $u$ are shown.
Figures~\textsc{\ref{fig:before}} shows the outgoing edges from $u$ in the
original parity game, and Figure~\textsc{\ref{fig:after}} shows the changes that
we make to the outgoing edges from $u$.

Suppose that we are playing a simulation game on $X_j$, and that we have arrived
at $u$. Now suppose that Even wants to move to some vertex not in $X_j$, for
example, the vertex $g$. The modification forces Even to a path in the tree
decomposition to a node that contains $h$, before moving to $h$.

Without this modification, we would be forced to use
$\log v$ bits to hold the variable $c$, because in Step~\ref{lbl:game1} of the
simulation game, Even could potentially pick any vertex in the graph. On the
other hand, suppose that we have made the modification, and that are playing a
simulation game on $X_j$ and that $c = u$. Let $(u, v)$ be the edge picked by
Even in Step~\ref{lbl:game1}.
\begin{iteMize}{$\bullet$}
\item If $v \in X_j$, then we can represent $v$ using $\log k$ bits.
\item If $v \notin X_j$, then we must have that $v = u_i$ for some 
tree decomposition edge $(j, i)$. In this case, we can store $v$ by remembering
$u$ and $(j, i)$.
\end{iteMize} 
In the second case, the vertex $u$ can be stored using $\log k$ bits. We claim
that the edge $(j, i)$ can be stored without using any additional memory. Recall
that we are already required to remember a tree decomposition edge of the form
$(l, j)$, in order to represent the current set $S$ and the set $F$ from the
history. Thus, to implement the simulation game on an alternating Turing
machine, we need to store two tree decomposition edges, and these edges form a
path of length two in the tree decomposition. Our overall space complexity
depends on the following simple lemma.

\begin{lem}
\label{lem:twopath}
Let $T = (I, J)$ be a rooted directed tree, with all edges oriented away from
the root. There are at most $|I|$ directed two-edge paths in $T$.
\end{lem}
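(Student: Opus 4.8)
The plan is to exhibit an injection from the set of directed two-edge paths into the edge set $J$, and then use the fact that a tree has one fewer edge than vertices. The key structural property I would exploit is that, since $T$ is a rooted tree with every edge oriented away from the root, each vertex other than the root has exactly one incoming edge (the edge from its parent), and the root has none.

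First I would fix notation: a directed two-edge path is a pair of edges of the form $(a, b), (b, c)$ with $(a,b) \in J$ and $(b,c) \in J$. My claim is that the map sending each such path to its \emph{second} edge $(b,c)$ is injective. To see this, suppose two two-edge paths share the same second edge $(b,c)$. Then both have $b$ as the source of their second edge, and hence both have $b$ as the target of their first edge. But $b$ is the target of at most one edge in $J$ (its unique parent edge, since $b$ cannot be the root because it has an outgoing edge $(b,c)$). Therefore the first edge $(a,b)$ is uniquely determined by $(b,c)$, and the two paths coincide. This establishes injectivity.

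Given the injection, the number of directed two-edge paths is at most $|J|$. Since $T$ is a tree on the vertex set $I$, we have $|J| = |I| - 1$, so the number of two-edge paths is at most $|I| - 1 \le |I|$, which is the desired bound.

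I expect no substantial obstacle here, as the statement is a simple counting fact; the only point requiring care is verifying that the correct choice of injection is toward the \emph{second} edge rather than the first. Mapping to the first edge would fail, since a vertex may have many children and hence many distinct two-edge paths sharing a common first edge; the unique-parent property is precisely what makes the second-edge map injective.
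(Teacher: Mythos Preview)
Your proof is correct and takes essentially the same approach as the paper: the paper maps each two-edge path to its terminal \emph{vertex} (noting that each non-root, non-root-successor vertex is the endpoint of exactly one such path), while you map to the terminal \emph{edge}; since edges and non-root vertices are in bijection, the two injections are equivalent. Your version yields the slightly sharper bound $|I|-1$, but the underlying idea---unique parent determines the path from its tail---is identical.
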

\begin{proof}
Let $i \in I$ be a node. If $i$ is the root, or a successor of the root, then
there are no two-edge paths ending at $i$. Otherwise, $i$ is the is the end of
exactly one two-edge path. Therefore, there are at most $|I|$ two-edge paths in
$T$.
\end{proof}

Lemma~\ref{lem:twopath} shows that a directed two-edge path in the tree
decomposition can be stored in $\log |V|$ bits. This fact then implies the
required space bounds for the alternating Turing machine.

The rest of this section proceeds as follows. We begin by formalising the
modified parity game. We then show how the simulation game can be used to solve
the modified parity game. To do this, we introduce two 
functions $\twnext$ and $\twhistory$, and show that, if the simulation game is
equipped with $\twnext$ and $\twhistory$, then it always terminates on the
modified parity game. Finally, we show that this version of the simulation game
can be solved on an alternating Turing machine, and from this we obtain the main
result of this section.



\subsection{A modified parity game.}

For the rest of this section, we will fix a parity game $G = (V, V_\text{0},
V_\text{1}, E, \pri)$, and a rooted tree decomposition $(T = (I, J), X)$. In
this subsection, we describe a modified parity game $G'$, which will be critical
in obtaining our result. For every vertex $v \in V$, we define $\first(v, T)$ to
give the node $i \in I$ such that $v \in X_i$, and $i$ is closest to the root
in~$T$.

We can now define our modified parity game. This definition follows the idea
laid out in Figure~\ref{fig:beforeafter}. The first step is, for each vertex $v
\in V$, to add extra vertices $v_i$ for every tree decomposition node $i \in I$
other than $\first(v, T)$. Then, we add edges according to the schema laid out
by Figure~\ref{fig:beforeafter}.

\begin{defi}
\label{def:modg}
We define $G' = (V', V'_\text{0}, V'_\text{1}, E', \pri')$ as follows:
\begin{iteMize}{$\bullet$}
\item $V' = V \cup \{v_i \; : \; v \in X_i \text{ for some } i \in I \text{ and } i \ne \first(v, T)\}$.
\item $V'_\text{0} = V_\text{0} \cup \{v_i \; : \; v \in V_0 \text{ and } v_i
\in V' \text{ for some } i \in I\}$ 
\item
$V'_\text{1} = V_\text{1} \cup \{v_i \; : \; v \in V_1 \text{ and } v_i
\in V' \text{ for some } i \in I\}$.
\item The set $E'$ is constructed by the following procedures:
\begin{iteMize}{$-$}
\item For every pair of vertices $v_i, v_j \in V'$, if there is an edge $(i, j)
\in J$, then we add the edge $(v_i, v_j)$.
\item For every $v_i \in V'$, if there is a vertex $u \in X_i$ and an edge $(v,
u) \in E$, then we add $(v_i, u)$ to $E'$.
\item For every vertex $v \in V$, if there is a vertex $u \in X_{\first(v, T)}$,
and an edge $(v, u) \in E$, then we add $(v, u)$ to $E'$.
\item For every vertex $v \in V$, if there exists a vertex $v_i \in V'$ with
$(\first(v, T), i) \in J$, then we add $(v, v_i)$ to $E'$.
\end{iteMize}
\item We set $\pri'(v) = \pri(v)$ for every $v \in V$, and for every $v_i \in
V'$ we set $\pri'(v_i) = \pri(v)$.
\end{iteMize}
\end{defi}

\noindent It is not difficult to see that this construction cannot change the
winner of the parity game. This is because, if there is an edge $(v, u) \in E$,
then there is a path $\langle v_1, v_2, \dots, v_k \rangle$ in $G'$ with $v_1 =
v$, $v_k = u$, and where $v_2$ through $v_{k-1}$ belong to the owner of $v_1$.
Thus, if Even has a winning strategy in $G$ that uses the edge $(v, u)$, then we
can construct a strategy for Even in $G'$ that follows the corresponding path
from $v$ to $u$. The same property also holds for Odd's winning strategies.
Thus, we have the following lemma.

\begin{lem}
Let $s \in V$ be a vertex. Even wins from $s$ in $G$ if, and only if, Even wins from $s$ in $G'$.
\end{lem}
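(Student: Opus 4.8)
The plan is to prove both directions of the equivalence by showing how winning strategies transfer between $G$ and $G'$. The key observation, already suggested in the text preceding the statement, is that the modification procedure replaces each single edge $(v,u) \in E$ with a directed path in $G'$, where all intermediate vertices are auxiliary copies $v_i$ that belong to the same player as $v$. Thus a single move in $G$ corresponds to a short detour in $G'$ that is entirely controlled by one player, and these detours preserve the priority structure because $\pri'(v_i) = \pri(v)$ for every copy.

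First I would make precise the claim that every edge $(v,u) \in E$ induces a path in $G'$ from $v$ to $u$ controlled by the owner of $v$. Given $(v,u) \in E$, property~(2) of a tree decomposition guarantees a node $i$ with $v,u \in X_i$. Starting from $v$, the edges $(v, v_{i_1}), (v_{i_1}, v_{i_2}), \dots$ added by the first and fourth bullets of Definition~\ref{def:modg} let the owner of $v$ walk along the (rooted) tree decomposition from $\first(v,T)$ towards $i$; once a copy $v_i$ with $u \in X_i$ is reached, the second (or third) bullet provides the edge $(v_i, u)$ into the real vertex $u$. Every vertex on this path is either $v$ itself or a copy $v_i$, all of which belong to the owner of $v$ by the definition of $V'_0$ and $V'_1$, and all of which carry priority $\pri(v)$. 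So the owner of $v$ can unilaterally force the token from $v$ to $u$, encountering only the priority $\pri(v)$ along the way.

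Next I would use this to transfer strategies in both directions. For the forward direction, suppose $\sigma$ is an Even winning strategy for $s$ in $G$. I would construct $\sigma'$ in $G'$ that, whenever the token sits on a real vertex $v \in V_0$ with $\sigma(v) = u$, plays the first step of the forcing path just described, and at each intermediate copy $v_i$ continues along that path; at real Odd vertices and intermediate Odd-owned detours the play is forced regardless of Odd's choices. Any infinite play consistent with $\sigma'$ projects onto an infinite play in $G$ consistent with $\sigma$ by collapsing each detour $\langle v, v_{i_1}, \dots, v_{i_m}, u\rangle$ back to the single edge $(v,u)$; since each detour contributes only the priority $\pri(v)$, which already appears at the start of the collapsed segment, the set of priorities occurring infinitely often is unchanged, so the highest such priority remains even and $\sigma'$ wins. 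The reverse direction is symmetric: collapsing detours in a play of $G'$ consistent with a winning $\sigma'$ yields a play of $G$, and one recovers a positional winning strategy for Even in $G$ by reading off, at each $v$, the real vertex $u$ at which its forcing detour terminates. The identical argument applied to Odd's strategies, together with positional determinacy (Theorem~\ref{paritydeterminacy}), gives the full equivalence.

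The main obstacle I anticipate is the bookkeeping needed to make the projection between plays rigorous, specifically verifying that $\io(\pi') = \io(\pi)$ under the detour-collapsing map. One must check that the auxiliary priorities cannot spuriously change the maximal priority seen infinitely often — which holds precisely because each copy $v_i$ inherits $\pri(v)$, so a detour never introduces a priority that the collapsed edge's source $v$ does not already witness — and that no infinite play of $G'$ gets ``stuck'' forever inside the finite detour gadgets, which follows since each detour has bounded length (at most the height of the tree decomposition) and is exited deterministically. Once these two points are settled, the strategy translation and the matching of winners is routine.
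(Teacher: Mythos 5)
Your proposal is correct and follows essentially the same approach as the paper, which itself only sketches the argument in the paragraph preceding the lemma: each edge $(v,u)\in E$ expands to a detour in $G'$ through copies $v_i$ owned by the owner of $v$ and carrying priority $\pri(v)$, so collapsing detours preserves $\io(\pi)$ and winning strategies transfer in both directions. Your version is simply a more careful elaboration of the paper's informal justification.
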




For the rest of this section, we will develop a version of the simulation game
that can be used to solve $G'$. Although we have modified the parity game, we
will not compute a new tree decomposition. We will develop an algorithm that
uses the original tree decomposition and the modified parity game.

\subsection{The functions \texorpdfstring{$\twnext$}{NextTW} and 
\texorpdfstring{$\twhistory$}{HistTW}.}

We now define two functions, $\twnext$ and $\twhistory$. The idea is to use the
same approach as we did in  Section~\ref{sec:time}. That is, if we want to
determine the winner for some vertex $s$ in the parity game, then we find an $i
\in I$ with $s \in X_i$, and we play $\simulate_{G'}(S, \emptyset, s, \twnext,
\twhistory)$.

We start by defining the function $\twnext$. This function will always choose
sets of the form $X_i \cup \{v_i\}$, where $i \in I$ is a tree decomposition
node, and $v_i \in V'$ is a vertex. Suppose that we are required to compute
$\twnext(S, v, \mathcal{F})$, for some set $S$, some vertex $v \in V'$, and some
history $\mathcal{F}$. Since $S$ was either chosen by $\twnext$, or was the set
chosen at the start of the game, we know that $S \subseteq X_i \cup \{v_i\}$ for
some $i \in I$ and some $v_i \in V'$. Let $(u, w)$ be an edge with $u \in S$ and
$w \notin S$. The definition of $G'$ implies that $w$ is one of the extra
vertices added that were added while constructing $G'$. That is, we have $w =
u_j$ for some tree
decomposition edge $(i, j) \in J$. Thus, we know that, if we are required to
compute $\twnext(S, v, \mathcal{F})$, then we must have $v = u_j$ for some $u
\in S$, and some tree decomposition edge $(i, j)$. We define $\twnext(S, u_j,
\mathcal{F}) = X_j \cup \{u_j\}$. 

Next, we define the $\twhistory$ function. It is very similar to the $\history$
function defined in Section~\ref{sec:time}, and it is defined as follows:
\begin{iteMize}{$\bullet$}
\item The first time that $\twhistory(\mathcal{F})$ is called, $\mathcal{F}$
will contain only one record. We do not delete this record, so we set
$\history(\mathcal{F}) = \mathcal{F}$.
\item Every other time that $\history(\mathcal{F})$ is called, $\mathcal{F}$
will contain exactly two records. Due to the definition of $\twnext$, we know
that the two records in $\mathcal{F}$ have the form $(S_1, v_1,
P_1)$ and $(S_2, v_2, P_2)$, where $X_i \subseteq S_1$ and $X_j \subseteq S_2$, for some $i, j \in I$. 
Furthermore, we know that $(i, j) \in J$ is a directed edge in the tree
decomposition. We set $\history(\mathcal{F}) = \{(S_2, v_2, P_2)\}$.
\end{iteMize}

\noindent The next lemma shows that, if we equip the simulation game with the
functions $\twnext$ and $\twhistory$, then the simulation game always
terminates. Since we are following the approach used in Section~\ref{sec:time},
this proof is similar in structure to the proof of Lemma~\ref{lem:terminates}.

\begin{lem}
\label{lem:twterminates}
If $i$ is the root of the tree decomposition, and $s \in X_i$ is a vertex, then
$\simulate_{G'}(X_i, \emptyset, s, \twnext, \twhistory)$ terminates.
\end{lem}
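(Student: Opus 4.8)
The plan is to follow the structure of the proof of Lemma~\ref{lem:terminates} from Section~\ref{sec:time}, exploiting the close analogy between $\twnext$ and $\nextfun$. The key insight is that each rejection by Odd corresponds to taking one step along a directed edge $(i, j) \in J$ in the \emph{rooted} tree decomposition. I would first establish that every simulation game is played on a set $S$ with $X_i \subseteq S$ for some $i \in I$, and that when Odd plays \rej\ in a game on such an $S$, the definition of $\twnext$ forces the next game to be played on $X_j \cup \{u_j\}$ for some child $j$ of $i$, i.e.\ with $(i, j) \in J$. Since the tree decomposition is rooted and all edges point away from the root, repeatedly taking such steps traces a directed path down the tree, which cannot have length exceeding the depth of $T$; in particular it is finite and bounded by $|I| \le |V|$. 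Hence Odd can play \rej\ only finitely many times.

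The next step is to handle termination once we reach a leaf of the tree decomposition. Here I would appeal to the $\twhistory$ function, which ensures that when a simulation game is played on $S \supseteq X_j$, the history $\mathcal{F}$ retains the record $(S', v', P')$ from the parent, with $X_i \subseteq S'$ and $(i, j) \in J$. I would argue that in the modified game $G'$, the only way to leave the current set $S = X_j \cup \{v_j\}$ is along an edge of the form $(u, u_l)$ for some child $l$ of $j$, by the construction of $E'$ in Definition~\ref{def:modg}. At a leaf there are no such children, so every outgoing edge either stays within $S$ or returns to a vertex in $X_i \subseteq S'$, which is remembered in $\mathcal{F}$. Step~\ref{lbl:game2} then routes the play directly to Step~\ref{lbl:game5}, so Odd is never given the opportunity to play \rej.

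Finally, I would reproduce the argument that within any single simulation game, if Odd only plays \acc, the game terminates after at most $|S|+1$ steps: each \acc\ appends a vertex of $S$ to $\Pi$, and once a vertex repeats, $\Pi$ closes a cycle and the game ends in Step~\ref{lbl:game6}. Combining these three observations—finitely many rejections, forced termination at leaves, and bounded-length play within each game—yields termination of $\simulate_{G'}(X_i, \emptyset, s, \twnext, \twhistory)$.

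The main obstacle I anticipate is the leaf-termination step, which is more delicate than in Lemma~\ref{lem:terminates}. In the DAG width case, termination at a sink followed from the guarding property of DAG decompositions. Here the corresponding fact is not a primitive of tree decompositions but a consequence of the specific \emph{edge structure of the modified game} $G'$: I must verify that the gadget construction in Definition~\ref{def:modg} genuinely forces every out-of-set move to pass through a child node $u_l$, so that a leaf has no escaping edges except back into the parent set recorded in $\mathcal{F}$. Establishing this cleanly requires a careful case analysis of the four edge-construction rules in Definition~\ref{def:modg} to confirm that no edge leads from $S$ to a vertex that is neither in $S$ nor in the remembered parent set $X_i$.
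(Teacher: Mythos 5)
Your proposal follows essentially the same route as the paper's proof: finitely many rejections because each \rej\ descends one edge of the rooted tree decomposition, forced termination at a leaf because the edge structure of $G'$ together with the parent record retained by $\twhistory$ means every move either stays in $S$ or lands in the remembered set $F$ (triggering Step~\ref{lbl:game5}), and a bounded number of \acc\ moves within a single game before $\Pi$ closes a cycle. The delicate point you flag (that leaf termination rests on Definition~\ref{def:modg} rather than on a guarding primitive) is exactly the point the paper's proof also handles, via the observation that any vertex of $X_j$ whose edges could escape $X_j$ must already lie in the parent bag $X_k \subseteq F$.
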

\begin{proof}
We begin by arguing that Odd cannot play \rej\ an infinite number of times. This
follows from the definition of $\twnext$: we know that if Odd plays \rej\ in a
simulation game where $X_i \subseteq S$ for some $i \in I$, then the next game
will be played on a set $S'$ with $X_j \subseteq S'$ for some directed edge $(i,
j) \in J$. Thus, after Odd has played \rej\ at most $|I|$ times, we will have
arrived at a leaf in the tree decomposition.

The $\twhistory$ function ensures that, once we reach a leaf in the tree
decomposition, the game must end. Suppose that we have reached a simulation
game on $S$, where $S = X_j \cup \{w_j\}$ where $j \in I$ is a leaf, and $w \in
X_j$ is a vertex. The history function ensures
that $\mathcal{F}$ contains a record $(F, p, P)$ with $X_k \subseteq F$,
where $(k, j) \in J$. We now have two properties:
\begin{iteMize}{$\bullet$}
\item For every edge $(v, u)$ with $v \in X_j \setminus X_k$ and $u \notin X_j$,
we have $u \in X_k$. This follows from the definition of a tree decomposition
given in Definition~\ref{def:treedecomp}.
\item For every edge $(v, u)$ with $v = w_j$, we have $u \in X_j$. This follows
from the definition of $E'$ given in Definition~\ref{def:modg}.
\end{iteMize}
Thus, we have shown that, for every edge $(v, u)$ with $v \in X_j \setminus X_k$ and $u \notin
X_j$, we have $u \in X_k$. Therefore, since $X_k \subseteq F$,
Step~\ref{lbl:game2} of the simulation game ensures that Odd cannot play \rej.

Since $S$ contains at most $k+1$ vertices, we know that, if Odd can no longer
play \rej, then Odd can play \acc\ at most $k + 2$ times before the game is
forced to end in either Step~\ref{lbl:game5} or Step~\ref{lbl:game6}.
\end{proof}

Now that we have established termination, we can apply
Lemma~\ref{lem:correct} to imply the correctness
of this version of the simulation game.

\subsection{Implementation on an alternating Turing machine}

Finally, we show how this version of the simulation game can be implemented on
an alternating Turing machine. As usual, the existential and universal states of
the alternating Turing machine will be used to simulate the moves of Even and
Odd. We have the following theorem.

\begin{thm}
Given a parity game~$G$, a tree decomposition $(T = (I, J), X)$ of width~$k$ for
$G$, and a vertex $s$, we can determine the winner of $s$ in $O(|V| \cdot
(k+1)^{k+5} \cdot (|D|+1)^{3k + 5})$ time.
\end{thm}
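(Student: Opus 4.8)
The plan is to mirror the proof of Theorem~\ref{thm:time}: I would run $\simulate_{G'}(X_i, \emptyset, s, \twnext, \twhistory)$, where $i$ is the root of the tree decomposition and $s \in X_i$, implement it on an alternating Turing machine using the existential states for Even's moves and the universal states for Odd's moves, bound the space, and then determinise. Correctness is essentially already in hand: the (unnamed) lemma immediately following Definition~\ref{def:modg} shows that $s$ has the same winner in $G$ and $G'$, Lemma~\ref{lem:twterminates} shows that this version of the simulation game terminates from the root, and Lemma~\ref{lem:correct} then guarantees that a terminating game reports the correct winner of $s$. Hence essentially the entire content of the theorem lies in the space analysis, and I would split that analysis (as in Theorem~\ref{thm:time}) into the data kept throughout the game and the data kept during a single iteration.

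First I would reuse the bookkeeping of Theorem~\ref{thm:time} verbatim for the quantities whose size only changes by replacing $k$ with $k+1$. Because $\twnext$ only ever produces sets of the form $X_j \cup \{u_j\}$, each of $S$ and the history set $F$ has at most $k+1$ vertices, so the current vertex $c$ costs $\log(k+1)$ bits, each of the profiles $P$ and $P'$ costs $(k+1)\cdot\log(|D|+1)$ bits, the priority $p$ costs $\log(|D|+1)$ bits, and the simulated path $\Pi$ (reset at the start of each game, hence at most $|S|+1 \le k+2$ tuples, with the first-vertex optimisation of Theorem~\ref{thm:time}) costs at most $(k+2)\cdot\log(k+1) + (k+2)\cdot\log(|D|+1)$ bits. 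Rejection in Step~\ref{lbl:game4} needs no extra storage, since $\twhistory$ and $\update$ delete the old record and the new one reuses its space.

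The key step, and the only place where this proof improves on Theorem~\ref{thm:time}, is the joint storage of the tree structure and the edge target $v$. In Theorem~\ref{thm:time} these contributed a $(k+3)\cdot\log|V|$ term; here I would collapse them into a single $\log|V|$ term. By $\twhistory$, the nodes underlying $S$ and $F$ are joined by a tree-decomposition edge $(l,j) \in J$, so $S$ and $F$ are pinned down by that one edge together with their constant-size extra-vertex markers. When Even selects an edge $(c, v)$ in Step~\ref{lbl:game1}, the construction of $G'$ in Definition~\ref{def:modg} forces any $v \notin S$ to have the form $u_i$ for some $u \in S$ and some edge $(j,i) \in J$; thus $v$ is determined by the $\log(k+1)$-bit name of $u$ together with the edge $(j,i)$. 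The two edges $(l,j)$ and $(j,i)$ chain into a directed two-edge path $l \to j \to i$, and by Lemma~\ref{lem:twopath} there are at most $|I| \le |V|$ such paths, so this entire structure fits in $\log|V|$ bits. I expect the main obstacle to be precisely the verification that the two edges we are obliged to remember genuinely form a directed two-edge path: this is exactly the property that lets Lemma~\ref{lem:twopath} replace $2\cdot\log|V|$ by $\log|V|$, and it is the reason the game was modified to $G'$ in the first place. The case $v \in S$ introduces no new edge and is strictly cheaper.

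Summing all contributions gives a bound of $\log|V| + (k+5)\cdot\log(k+1) + (3k+5)\cdot\log(|D|+1)$ bits of alternating space, where the $\log|V|$ term comes from the two-edge path, the $\log(|D|+1)$ terms from $p$, $P$, $P'$ and $\Pi$, and the $\log(k+1)$ terms from $c$, the extra-vertex markers, the source vertex $u$ of $v$, and the vertices stored in $\Pi$ (the precise per-term count follows the template of Theorem~\ref{thm:time} with $k$ replaced by $k+1$). Finally I would invoke the standard deterministic simulation of logarithmic-space alternating Turing machines to convert this space bound into a deterministic running time of $O(|V| \cdot (k+1)^{k+5} \cdot (|D|+1)^{3k+5})$, which is exactly the claimed bound.
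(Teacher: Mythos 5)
Your proposal is correct and follows essentially the same route as the paper's own proof: the same reduction to $\simulate_{G'}$ with $\twnext$ and $\twhistory$, the same correctness chain via the lemma after Definition~\ref{def:modg}, Lemma~\ref{lem:twterminates} and Lemma~\ref{lem:correct}, and the same space analysis in which the two tree-decomposition edges are packed into a single directed two-edge path via Lemma~\ref{lem:twopath}, yielding the identical bound of $\log|V| + (k+5)\cdot\log(k+1) + (3k+5)\cdot\log(|D|+1)$ bits. The point you flag as the main obstacle --- that the two remembered edges genuinely chain into a directed two-edge path --- is exactly the crux the paper relies on, and your per-term accounting matches the paper's.
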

\begin{proof}
For the most part, the proof of this theorem is the same as the proof of
Theorem~\ref{thm:time}. Recall, from the proof of Theorem~\ref{thm:time}, that
we must remember the following variables: $S$, $\mathcal{F} = \{F, p, P\}$,
$\Pi$, $P'$, $c$, and $v$.

Our central claim is that $S$, $F$, and $v$ can be stored in $\log |V| + 2 \cdot
\log (k+1)$ bits. Note that, $\twnext$ and $\twhistory$ ensure that $S \subseteq X_i
\cup \{s_i\}$ and $F \subseteq X_j \cup \{f_j\}$, for some $(j, i) \in J$, and
some vertices $s, f \in V$. By Definition~\ref{def:modg}, there cannot be an
edge $(v, u)$ with $v \in S \setminus F$, and $u = f_j$. Thus, our alternating Turing
machine can represent $F$ as the set $X_j$, while forgetting the vertex $f_j$,
and this cannot affect the correctness of the algorithm.

Now we argue that $S$, $F$, and $v$ can be stored in $\log |V| + 2 \log (k+1)$
bits. There are two cases to consider:
\begin{iteMize}{$\bullet$}
\item If $v \in S$, then $X_i$ and $X_j$ can be represented by pointing to an
edge in $J$. Since $T$ is a tree, we know that $|J| \le |I| \le |V|$. Then, to
store $s_i$ we can simply store the vertex $s$ in $\log k$ bits, because we already know $X_i$.
Since $v \in S$, and since $|S| \le k + 1$, we can store $v$ using $\log (k+1)$
bits. 
\item If $v \notin S$, then we know that $v = u_l$ for some $u \in S$, and some
$l \in I$. Note that $j$, $i$, and $l$ form a directed two-edge path in $T$.
Thus, by Lemma~\ref{lem:twopath}, we can represent $j$, $i$, and $l$, in $\log
|I| \le \log |V|$ bits. As with the previous case, we can represent the extra
vertex $s_i$ using $\log k$ bits. Moreover, since $u \in S$, we can represent
$v$ using at most $\log (k+1)$ bits.
\end{iteMize}
Therefore, in both cases we can store $S$, $F$, and $v$ using $\log |V| + \log
(k+1) + \log k$ bits.

The amount of space for the variables $p$, $P$, $\Pi$, $P'$, and $c$ follow from
the arguments made in the proof of Theorem~\ref{thm:time}. However, there is one
small difference: in this proof, we have $|S| = k+1$ rather than $|S| = k$.
Thus, while $P$ could be stored in $k \cdot \log (D+1)$ bits in
Theorem~\ref{thm:time}, we must use $(k+1) \cdot \log (D+1)$ bits in this proof.
Similar modifications are required for the other variables.

In total, therefore, we require at most $\log |V| + (k+5) \cdot \log(k+1) + (3k
+ 5) \cdot \log (|D| + 1)$ bits of storage in our alternating Turing machine.
This alternating Turing machine can be implemented on a deterministic Turing
machine in $O(|V| \cdot (k+1)^{k+5} \cdot (|D|+1)^{3k + 5})$ time.
\end{proof}

\section{Parity Games With Bounded Treewidth Are In 
\texorpdfstring{NC$^2$}{NC2}}\label{sec:space}

We begin by defining the $\nextfun$ and $\history$ functions that will be used
in by simulation game in this result. Recall, from Section~\ref{sec:outline2},
that this version of the simulation game gives extra strategic choices to player
Odd. We use this idea to define $\ncnext$ and $\nchistory$ as follows:
\begin{iteMize}{$\bullet$}
\item When Step~\ref{lbl:game4} requires a value for $\ncnext(S, v,
\mathcal{F'''})$, we allow player Odd to select the set of vertices that will be
used. More precisely, we allow Odd to select $S' \subseteq V$ such that $|S'|
\le k$ and $v \in S'$. 
\item When Step~\ref{lbl:game4} requires a value for
$\nchistory(\mathcal{F''})$, we allow Odd to to select the set $\mathcal{F'''}$
by deleting records from $\mathcal{F''}$. We insist that, if $|\mathcal{F''}| >
3$, then Odd must delete enough records so that $|\mathcal{F}| \le 3$.
\end{iteMize}
\noindent Thus, we will use 
$\simulate(S, \mathcal{F}, s, \ncnext, \nchistory)$ as the simulation game in
this section.

Note that this simulation game could go on forever, since Odd could, for
example, always play~$\rej$, and remove all records from $\mathcal{F}$. For this
reason, we define a limited-round version of the game, which will be denoted by
$\simulate^r(S, \mathcal{F}, s, \ncnext, \nchistory)$, where $r \in \nats$. This
game is identical to $\simulate^r(S, \mathcal{F}, s, \ncnext, \nchistory)$ for
the first~$r$ rounds. However, if the game has not ended before the start of
round $r+1$, then Even is declared to be the winner.

Let $r \in \nats$ be an arbitrarily chosen bound. The following lemma states
that, no matter what sets Odd chooses, if the simulation game stops before round
$r+1$, then the winner of the parity game is correctly determined. 

\begin{lem}
\label{lem:winspace}
No matter which sets are chosen by Odd for $\ncnext$ and $\nchistory$ in
Step~\ref{lbl:game4}:
\begin{iteMize}{$\bullet$}
\item if $s \in W_0$,
then Even has a strategy to win $\simulate^r(\{s\}, \emptyset, s, \ncnext,
\nchistory)$. 
\item If $s \in W_1$, and if the game ends before round $r+1$, then Odd
has a strategy to win $\simulate^r(\{s\}, \emptyset, s, \ncnext, \nchistory)$.
\end{iteMize}
\end{lem}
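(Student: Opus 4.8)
The plan is to derive this lemma as an essentially immediate corollary of the four ``winning strategy'' results of Section~\ref{sec:wingame} (Lemmas~\ref{lem:evenwinspace} through~\ref{lem:cycle3}), exploiting the fact that each of those lemmas was proved for an \emph{arbitrary} pair of functions $\nextfun$ and $\history$, and so applies verbatim to $\ncnext$ and $\nchistory$. The only genuinely new ingredient is the asymmetric round cut-off built into $\simulate^r$, which awards every non-terminating play to Even; this is exactly what breaks the symmetry between the two bullets. Throughout I rely on the observation that a play of $\simulate^r$ that stops before round $r+1$ is, up to that point, identical to a play of the unrestricted game $\simulate$, and that a play can only stop by declaring a winner in Step~\ref{lbl:game5} or Step~\ref{lbl:game6} (the reject branch of Step~\ref{lbl:game4} recurses rather than terminating).

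For the first bullet I would fix a positional winning strategy $\sigma \in \Sigma_0$ for Even at $s$, which exists by Theorem~\ref{paritydeterminacy}, and let Even play $\follow_0(\sigma)$. Since Even never selects a set, this strategy is well defined no matter how Odd resolves $\ncnext$ and $\nchistory$. If a play consistent with $\follow_0(\sigma)$ terminates before round $r+1$, it stops in Step~\ref{lbl:game5} or Step~\ref{lbl:game6}; Lemma~\ref{lem:evenwinspace} forbids an Even loss in the former, and Lemma~\ref{lem:cycle2}, which requires precisely that $\sigma$ be winning for the starting vertex $s$, forbids one in the latter. As each terminating configuration declares a winner, Even in fact wins every terminating play. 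If instead the play survives to round $r+1$, Even wins by the default rule of $\simulate^r$. Hence $\follow_0(\sigma)$ wins against every behaviour of Odd.

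For the second bullet I would run the symmetric argument for Odd: fix a positional winning strategy $\tau \in \Sigma_1$ for $s$, let Odd play $\follow_1(\tau)$, and resolve $\ncnext$ and $\nchistory$ by any choice meeting their side conditions (for instance $S' = \{v\}$, and deleting records down to three), which is always available. Under the hypothesis that the play ends before round $r+1$, it must end in Step~\ref{lbl:game5} or Step~\ref{lbl:game6}, and Lemmas~\ref{lem:oddwinspace} and~\ref{lem:cycle3} show that Odd does not lose in either; therefore Odd wins. Here the termination hypothesis is essential exactly because $\simulate^r$ hands non-terminating plays to Even, so without it Odd could claim no such guarantee.

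I do not expect a substantial obstacle, since the mathematical content is already packaged in the earlier lemmas; the proof is bookkeeping around the round cut-off. The two points that warrant explicit care are (i) justifying that plugging $\ncnext$/$\nchistory$ into Lemmas~\ref{lem:evenwinspace}--\ref{lem:cycle3} is legitimate, which it is because those lemmas quantify over all $\nextfun$ and $\history$, and (ii) noting that the adversarial freedom Odd enjoys in choosing the sets cannot interfere with Even's $\follow_0(\sigma)$, as set selection is not part of Even's move set. The orthogonal question of whether the game actually does terminate within $r$ rounds is deliberately left open here and will be settled separately by the $\slice$/$\reduce$ analysis.
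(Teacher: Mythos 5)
Your proposal is correct and follows essentially the same route as the paper, which disposes of this lemma in one line by citing Lemma~\ref{lem:correct} (itself the combination of Lemmas~\ref{lem:evenwinspace}--\ref{lem:cycle3}, each proved for arbitrary $\nextfun$ and $\history$). Your version merely unpacks that citation and makes explicit the bookkeeping around the round cut-off, which the paper leaves implicit.
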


\noindent The correctness of this lemma follows from Lemma~\ref{lem:correct}.
Note that this lemma does not guarantee anything if the simulation game
terminates in round $r+1$. Our goal in this section is to show that there is an
$r$ such that Odd can always force the game to end before round $r+1$, which
will then allow us to apply Lemma~\ref{lem:winspace} to prove correctness of
our algorithm.

We will assume that our parity game has a tree decomposition $(T = (I, J), X)$
of width $k$. Our main goal is to show that Odd has a strategy for choosing
$\ncnext$ and $\nchistory$ that forces the game to terminate within the
first~$r$ rounds, for some $r \in \nats$.  In particular, to show that parity
games with bounded treewidth lie in NC$^2$, our strategy must have the following
properties:
\begin{iteMize}{$\bullet$}
\item The game must end in $O(k \cdot \log |V|)$ rounds. That is, we have $r = c
\cdot k \cdot \log |V|$ for some constant $c$.
\item The history may never contain more than three records.
\end{iteMize}
\noindent In the following subsections, we will provide such a strategy.

\subsection{Odd's strategy for \texorpdfstring{$\ncnext$}{NextNC} and \texorpdfstring{$\nchistory$}{HistNC}.}

Recall, from Section~\ref{sec:outline2}, that the strategy for Odd consists of
two sub-strategies: $\slice$ and $\reduce$. The strategy $\slice$ will be
applied whenever $\mathcal{F}$ contains fewer than~$3$ records, and $\reduce$
will be applied whenever $\mathcal{F}$ contains exactly~$3$ records. Both of
these strategies maintain a set of \emph{eligible} tree decomposition nodes $L
\subseteq I$. The set of eligible tree decomposition nodes represents the
portion of the parity game that can still be reached by the simulation game. An
intuitive idea of this is given in Figures~\textsc{\ref{fig:slice}}
and~\textsc{\ref{fig:reduce}}.
In both cases, the set of eligible nodes is the subtree between the nodes $F_1$,
$F_2$, and $F_3$. 

We give a formal definition for how $L$ will be updated. Suppose that Odd
rejects the profile~$P$ for the vertex~$v$ and set $X_i$. We define
$\subtree(X_i, v)$ to be the set of nodes in the subtree of~$i$ rooted at
$\direction(X_i, v)$. We update $L$ as follows.
\begin{iteMize}{$\bullet$}
\item The first time that \rej\ is played, Odd sets $L = I$.
\item In each subsequent iteration, the set~$L$ is updated so that $L := L \cap
\subtree(X_i, v)$.
\end{iteMize}

\noindent Odd uses $L$ to decide which records should be deleted. Formally, Odd will
select $\nchistory(\mathcal{F})$ according to the following rules:
\begin{iteMize}{$-$}
\item The first time that \rej\ is played, Odd removes all records from
$\mathcal{F}$.
\item In each subsequent iteration, Odd uses the following rule. Let $(F, p, P)
\in \mathcal{F}$, and suppose that $X_i \subseteq F$ for some $i \in I$. If
there is no edge $(i, j) \in J$ with $j \in L$, then Odd removes $(F, p, P)$
from $\mathcal{F'}$.
\end{iteMize}

We now define the strategy~$\slice$. Recall that
Lemma~\ref{lem:splittree} shows that every tree with more than three nodes can
be split into pieces, where each piece has at least two-thirds of the nodes in
the tree. We define $\splitgame(I, J)$ to be a function that returns a vertex $i
\in I$ that satisfies the properties given in Lemma~\ref{lem:splittree}.  The
strategy $\slice$ will use the function $\splitgame$ to remove at least one
third of the eligible nodes. Formally, we define $\slice(S, v, L, \mathcal{F})$.
The strategy is defined as follows: 
\begin{iteMize}{$\bullet$}
\item If $|L| \ge 3$ then $\slice(S, v, L, \mathcal{F}) = X_i$ where $i = \splitgame(L, J)$. 
\item If $|L| < 3$ then $\slice(S, v, L, \mathcal{F}) = X_i$ for some
arbitrarily chosen $i \in L$.
\end{iteMize}
The second clause is necessary, because Lemma~\ref{lem:splittree} requires that
the tree should have at least three vertices.

We now introduce our second strategy for choosing $S'$, which is called
$\reduce$. This strategy will be applied whenever $|\mathcal{F}| = 3$. Recall
that the purpose of~$\reduce$ is to reduce the number of records in
$\mathcal{F}$ to at most $2$, so that we can continue to apply~$\slice$. Suppose
that $\mathcal{F} = \{(F_1, p_1, P_1), (F_2, p_2, P_2), (F_3, p_3, P_3)\}$. By
assumption we have that $F_1 = X_i$, $F_2 = X_j$, and $F_3 = X_k$, for some
$i,j,k \in I$. 
Note that $i$, $j$ and
$k$ cannot all lie on the same path, because otherwise one of the records in
$\mathcal{F}$ would have been deleted by $\nchistory$.
 It is a basic property of trees that, if $i$, $j$, and $k$ are not on the same
path, then there exists a unique
vertex $l \in I$ such that, if~$l$ is removed, then $i$, $j$, and $k$, will be
pairwise disconnected.
For each history $\mathcal{F}$ with $|\mathcal{F}| = 3$,
we define $\point(\mathcal{F})$ to be the function that gives the vertex~$l$. We
define the strategy $\reduce(S, v, L, \mathcal{F}) = X_i$, where $i =
\point(\mathcal{F})$. 

Finally, we can give the full strategy for Odd.  Odd selects $\ncnext(S, v,
\mathcal{F})$ according to the following rules:
\begin{iteMize}{$\bullet$}
\item If $|\mathcal{F}| < 3$, then Odd selects $S' = \slice(S, v, L,
\mathcal{F})$.
\item If $|\mathcal{F}| = 3$, then Odd selects $S' = \reduce(S, v, L,
\mathcal{F})$.
\end{iteMize}

\subsection{Correctness of \texorpdfstring{$\sr$}{SR}}

In this subsection, we show that $\sr$ has both of the properties that we
desire. Suppose that Odd uses $\sr$, and that the outcome is a play of the
simulation game:
\begin{equation*}
\langle \mathcal{S}_1 = \simulate(S_1, \mathcal{F}_1, s_1, \ncnext, \nchistory),
\mathcal{S}_2 = \simulate(S_2, \mathcal{F}_2, s_2, \ncnext, \nchistory),
\dots \rangle.
\end{equation*}
Furthermore, let $L_i$ be the contents of $L$ at the start of the game
$\mathcal{S}_i$, and let $v_i = s_{i+1}$ be the final vertex visited in
$\mathcal{S}_i$. The first property that we will prove is that, as long as Odd
plays $\sr$, the set $L$ is guarded by the records in $\mathcal{F}$. Informally,
this means that, in order to move from a tree decomposition node $i \in L$, to a
tree decomposition node $j \notin L$, we must pass through some record in
$\mathcal{F}$. Formally, this can be expressed as the following lemma.

\begin{lem}
Suppose that Odd plays $\sr$.
If $z \ge 2$, then for every edge $(i, j) \in J$ with $i \in L_z$, and $j \notin
L_z$, there exists
a record $(F, p, P) \in \mathcal{F}_{z}$ with $X_z = F$.
\end{lem}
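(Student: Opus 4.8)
The plan is to prove the statement by induction on the round index $z$, after isolating a few structural facts about how $\ncnext$ and $\nchistory$ maintain $L$ and $\mathcal{F}$. The first fact is that $L$ shrinks monotonically: every update has the form $L_{z+1} = L_z \cap \subtree(X_{i_z}, v_z)$, where $S_z = X_{i_z}$ is the set of the game $\mathcal{S}_z$ and $v_z$ is the vertex whose profile Odd rejects, so $L_{z+1} \subseteq L_z$ and a node once dropped from $L$ never returns. The second fact is that $\subtree(X_{i_z}, v_z)$ is exactly the connected component of $T \setminus \{i_z\}$ on the side of $\direction(X_{i_z}, v_z)$; in particular $i_z \notin \subtree(X_{i_z}, v_z)$, hence $i_z \notin L_{z+1}$. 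I would also record the bookkeeping facts that the only record added during the transition $\mathcal{S}_z \to \mathcal{S}_{z+1}$ is the one for $X_{i_z}$, and that $\nchistory$ deletes a record for a node $m$ precisely when $m$ has no neighbour in the updated $L$. Finally, I read the displayed conclusion as $F = X_j$: the guarding record belongs to the endpoint $j$ that lies outside $L_z$.

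For the base case $z = 2$, the first time Odd plays $\rej$ the strategy sets $L = I$ and empties the history, so $L_2 = I$ and $\mathcal{F}_2 = \emptyset$; there is simply no edge $(i,j) \in J$ with $j \notin L_2$, and the claim is vacuous. For the inductive step I would assume the statement for $z$ and fix an edge $(i,j) \in J$ with $i \in L_{z+1}$ and $j \notin L_{z+1}$, splitting on whether $j \in L_z$. If $j \notin L_z$, then because $i \in L_{z+1} \subseteq L_z$ the same edge already witnesses $j$ as a boundary node of $L_z$, so the induction hypothesis provides a record for $X_j$ in $\mathcal{F}_z$; since $i \in L_{z+1}$ is a neighbour of $j$, the deletion rule of $\nchistory$ keeps this record, so it lies in $\mathcal{F}_{z+1}$. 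If $j \in L_z$, then $j$ is dropped at exactly this step, so $j \in L_z \setminus \subtree(X_{i_z}, v_z)$ while its neighbour $i$ lies in $L_{z+1} \subseteq \subtree(X_{i_z}, v_z)$; as $i$ and $j$ are adjacent, the removal of any node other than $i$ or $j$ keeps them in one component, so the separating node $i_z$ must satisfy $i_z \in \{i,j\}$, and since $i \in L_{z+1}$ but $i_z \notin L_{z+1}$ we conclude $i_z = j$. Thus the record for $X_j = X_{i_z}$ is the one created during this transition, and because $i \in L_{z+1}$ is a neighbour of $j = i_z$, $\nchistory$ retains it, placing it in $\mathcal{F}_{z+1}$.

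The step I expect to be the main obstacle is the second case, where I must show that a node leaving $L$ while still adjacent to the new eligible set can only be the node $i_z$ on which the current game was played; this is exactly where the component description of $\subtree$ meets the adjacency of $i$ and $j$, and where I must verify that the freshly created record for $X_{i_z}$ is genuinely kept rather than immediately deleted by $\nchistory$. Setting up monotonicity of $L$ and the precise ``single component of $T \setminus \{i_z\}$'' description of $\subtree$ in advance is what keeps this argument short. For intuition one may also note that $L$ stays connected throughout, by the Helly property for subtrees of a tree, although the separation argument uses only that $\subtree(X_{i_z}, v_z)$ is one component of $T \setminus \{i_z\}$.
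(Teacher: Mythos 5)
Your proof is correct and follows the same inductive strategy as the paper's: a newly created boundary node of $L_{z+1}$ can only be the node $i_z$ on which game $\mathcal{S}_z$ was played (covered by the freshly added record for $S_z$), while previously existing boundary nodes are covered by the induction hypothesis together with the retention rule of $\nchistory$, which keeps any record whose tree node still has a neighbour in $L_{z+1}$. Your reading of the statement's ``$X_z = F$'' as ``$X_j = F$'' is the intended one (it matches how the lemma is invoked later, where $X_j \subseteq F_a$ is needed for boundary edges $(i,j)$), and your component argument showing that a node expelled from $L$ while still adjacent to $L_{z+1}$ must equal $i_z$ is spelled out more carefully than in the paper's own, rather terse, inductive step.
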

\begin{proof}
We will prove this claim by induction on $z$. The base case is vacuously true,
since $\sr$ sets $L_2 = I$, which implies that there cannot exist an edge $(i,
j)$ with $j \notin L_2$. 

For the inductive step, assume that this lemma holds for some $z > 2$.
Since $L_{z+1} = L_{z} \cap
\subtree(S_{z}, v_{z})$, the only possible edge that could be not covered by
a set in $\mathcal{F}_z$ is the edge $(i, j)$, where $X_i = S_{z+1}$, and $j \in
\subtree(S_{z+1}, v_{z+1})$. However, this edge is covered by two properties:
\begin{iteMize}{$\bullet$}
\item In Step~\ref{lbl:game4}, the simulation game always adds the record $(F,
-, P')$ to $\mathcal{F'}$ to create $\mathcal{F''}$. 
\item When choosing $\mathcal{F'''}$, the strategy $\sr$ never removes a record
$(F, p, P)$ from $\mathcal{F''}$ with $F \subseteq X_j$, where $(i, j) \in J$ and $i \in
L_{z+1}$ and $j \notin L_{z+1}$.
\end{iteMize} 
\noindent Thus, a record $(F, p, P)$ with $F = X_i$ must be contained in
$\mathcal{F}_{z+1}$, which completes the proof of the inductive step.
\end{proof}

Next we will show the correctness of $\reduce$. In particular, we show how if it
is used when $|\mathcal{F}_x| = 3$, it enforces that $|\mathcal{F}_{x+1}| \le
2$.

\begin{lem}
Suppose that Odd plays $\sr$, and for some $x \ge 2$ we have $|\mathcal{F}_{x}|
= 3$, and $S_{x} = \reduce(S_{x-1}, v_{x-1}, L_{x-1}, \mathcal{F}_x)$. We have
$|\mathcal{F}_{x+1}| \le 2$.
\end{lem}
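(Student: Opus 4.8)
The plan is to trace how the history evolves during the single game $\mathcal{S}_x = \simulate(S_x, \mathcal{F}_x, s_x, \ncnext, \nchistory)$, exploiting the fact that $\reduce$ plays on the node that separates the three remembered records. First I would dispose of the degenerate case: if $\mathcal{S}_x$ does not end with Odd playing \rej, then no successor game is created and there is nothing to prove, so I may assume $\mathcal{S}_x$ ends when Odd rejects some profile $P'$ for a vertex $v_x$. Writing $\mathcal{F}_x = \{(X_i, p_i, P_i), (X_j, p_j, P_j), (X_k, p_k, P_k)\}$ and $S_x = X_l$ with $l = \point(\mathcal{F}_x)$, the key structural input is the defining property of $\point$: since $i$, $j$, $k$ do not all lie on a common path, $l$ is the unique node of $T$ whose removal leaves $i$, $j$, and $k$ in three distinct connected components of $T \setminus \{l\}$, and in particular $l \notin \{i,j,k\}$.

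Next I would unwind precisely how $\mathcal{F}_{x+1}$ is produced. When Odd plays \rej, the game forms $\mathcal{F}' = \update(\mathcal{F}_x, \maxpri(\Pi))$, which still has three records, appends the new record $(X_l, -, P')$ to obtain a four-record history $\mathcal{F}''$, and then sets $\mathcal{F}_{x+1} = \nchistory(\mathcal{F}'')$. At the same moment $L$ is updated to $L_{x+1} = L_x \cap \subtree(X_l, v_x)$, and I would observe that $\subtree(X_l, v_x)$ is exactly the component of $T \setminus \{l\}$ hanging off $m := \direction(X_l, v_x)$. Being a single component, it contains at most one of $i$, $j$, $k$.

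The heart of the argument is to show that $\nchistory$ deletes the (at least two) records whose nodes lie outside $\subtree(X_l, v_x)$. I would fix such a record $(X_a, p_a, P_a)$, so that $a \in \{i,j,k\}$ sits in a component of $T \setminus \{l\}$ different from the one containing $m$. Every neighbour of $a$ in $T$ lies either in $a$'s own component or is $l$ itself, and neither meets $L_{x+1} \subseteq \subtree(X_l, v_x)$ (in particular $l \notin \subtree(X_l, v_x)$); hence there is no edge $(a,b) \in J$ with $b \in L_{x+1}$, so the deletion rule of $\sr$ removes the record. Since at least two of the three original records are deleted, $\mathcal{F}_{x+1}$ retains at most $4 - 2 = 2$ records, as claimed.

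The step I expect to be the main obstacle is the tree-adjacency claim of the third paragraph: it rests entirely on $l$ being a cut node whose removal isolates $i$, $j$, $k$, so that the only edge joining $\subtree(X_l, v_x)$ to the rest of $T$ is $(l, m)$. A related point I would be careful about is the order of operations when \rej is played: the deletion rule must be read as using the already-updated eligible set $L_{x+1}$ rather than $L_x$, since a node outside $\subtree(X_l, v_x)$ may well have had an edge into the larger set $L_x$. Once these are pinned down, the counting bound of $2$ is immediate.
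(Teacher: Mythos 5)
Your proof is correct and follows essentially the same route as the paper: $\point(\mathcal{F}_x)$ separates the three record nodes, the new eligible set $L_{x+1}$ is confined to a single component of $T\setminus\{l\}$ containing at most one of them, so the deletion rule of $\sr$ removes at least two records and $|\mathcal{F}_{x+1}|\le 4-2=2$. You spell out two details the paper only asserts --- that the deletion rule actually fires for records outside $\subtree(X_l,v_x)$ (via the observation that every neighbour of such a node is either in its own component or is $l$ itself), and that the rule must be read against the already-updated eligible set --- but the underlying argument is the same.
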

\begin{proof}
Since $x \ge 2$, and since Odd is following $\sr$, we know that there must exist
three tree decomposition nodes $i, j, k \in I$, such that, if $(F_1, p_1, P_1)$,
$(F_2, p_2, P_2)$, and $(F_3, p_3, P_3)$ are the three records in
$\mathcal{F}_x$, then we have $F_1 = X_1$, $F_2 = X_j$, and $F_3 = X_k$. 
By definition, $\reduce(S_{x-1}, v_{x-1}, L_{x-1})$ selects a vertex~$l$ such
that, if~$l$ is removed from $T = (I, J)$, then $i$, $j$, and $k$ become
disconnected. Thus, at most one of $i$, $j$, and $k$ can lie in
$\subtree(S_{x-1}, v_{x-1})$, and hence at least two records will be removed by
$\sr$. Therefore, since at most one new record is added to $\mathcal{F}_{x+1}$,
we must have $|\mathcal{F}_{x+1}| \le |\mathcal{F}_x| + 1 - 2 = 2$. Thus, we
have shown that $|\mathcal{F}_{x+1}| \le 2$.
\end{proof}

Finally, we are able to prove that $\sr$ has the property that we
require.

\begin{lem}
\label{lem:oddchooseset}
Suppose that the parity game has a tree decomposition of width $k$. If Odd plays
$\sr$, then the game ends in $O(k \cdot \log |V|)$ rounds. 
\end{lem}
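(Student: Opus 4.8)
The plan is to bound the number of rounds by tracking two quantities: the size of the eligible set $|L|$, and a potential that captures how far the game is from termination. The key idea is that each application of $\slice$ cuts $|L|$ down by a constant factor, while each application of $\reduce$ does not increase $|L|$ and, crucially, cannot occur too often relative to $\slice$.

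First I would argue that whenever $\slice$ is played with $|L| \ge 3$, the eligible set shrinks geometrically. Recall that after Odd rejects a profile for $v$ at a node $X_i$, the eligible set is updated to $L := L \cap \subtree(X_i, v)$, and by the previous guarding lemma every record in $\mathcal{F}$ bounds the subtree that remains reachable. When $\slice$ selects the separator $i = \splitgame(L, J)$, Lemma~\ref{lem:splittree} guarantees that removing $i$ splits $L$ into parts each of size at most $\frac{2}{3}|L|$; since the continuation of the game is confined to one such part (the one in the direction of $v$), the new eligible set satisfies $|L_{z+1}| \le \frac{2}{3}|L_z|$. Starting from $|L| = |I| \le |V|$, this means that at most $O(\log |V|)$ applications of $\slice$ can occur before $|L| < 3$, at which point the game is forced to terminate within a constant number of further rounds.

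Next I would show that $\reduce$ cannot dominate the play. Odd's overall strategy alternates in the sense that $\reduce$ is invoked only when $|\mathcal{F}| = 3$, and the preceding lemma shows that each $\reduce$ step brings $|\mathcal{F}|$ back down to at most $2$. Since every round adds exactly one record before $\nchistory$ trims, $|\mathcal{F}|$ can only reach $3$ again after at least one intervening $\slice$ step. Hence between any two consecutive $\reduce$ steps there is at least one $\slice$ step, so the number of $\reduce$ steps is at most the number of $\slice$ steps. Consequently the total number of rounds is at most twice the number of $\slice$ steps, plus a constant for the terminal phase, giving $O(\log |V|)$ rounds overall. Finally, within each individual simulation game $\mathcal{S}_i$, the set $S_i$ has at most $k$ vertices, so by the same argument as in Lemma~\ref{lem:terminates} at most $k+1$ \acc\ moves can be played before a cycle forms or a record vertex is revisited; thus each round contributes at most $O(k)$ steps, and the game terminates in $O(k \cdot \log |V|)$ rounds in total.

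The main obstacle I anticipate is making the geometric-decrease argument watertight in the presence of $\reduce$: I must verify that $\reduce$ never undoes the progress made by $\slice$, i.e.\ that $|L|$ is non-increasing across a $\reduce$ step. This follows because $\reduce$ selects $\point(\mathcal{F})$, the unique node separating the three remembered records, and the subsequent intersection $L := L \cap \subtree$ can only remove nodes from $L$; but I would spell out carefully that the separator node $l$ lies inside the current $L$ and that the direction chosen confines the continuation to a strict subtree, so that no eligible nodes are ever re-added. Once this monotonicity is established, combining the geometric decrease of $|L|$ under $\slice$ with the one-$\slice$-per-$\reduce$ accounting yields the desired $O(k \cdot \log |V|)$ bound.
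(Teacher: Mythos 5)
Your proposal is correct and follows essentially the same route as the paper's proof: the geometric $\frac{2}{3}$-shrinkage of $L$ under $\slice$ via Lemma~\ref{lem:splittree}, the accounting that each $\slice$ adds at most one record so $\reduce$ is invoked at most once per $\slice$, the constant number of extra rejections once $|L| < 3$ (the paper makes this exactly two, using the guarding property of the remembered records), and the bound of $k+1$ \acc\ moves between consecutive rejections. The monotonicity of $|L|$ across a $\reduce$ step that you flag as the main obstacle is immediate from the update rule $L := L \cap \subtree(X_i, v)$, which the paper relies on implicitly.
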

\begin{proof}
Lemma~\ref{lem:splittree} implies that, if  $\slice$ is invoked at the end of
$\mathcal{S}_i$, then we will have $|L_{i+1}| \le \frac{2}{3}|L_{i}|$. Each time $\slice$
is used, we add at most one new record to $\mathcal{F}$. Therefore, for each use
of $\slice$, we must use $\reduce$ at most~$1$ time, in order to keep
$|\mathcal{F}| \le 3$. Thus, after Odd has played \rej\ at most $2 \cdot
\log_{\frac{3}{2}}(|I|)$ times, we will arrive at a game $\mathcal{S}_c$ with $|L_c| < 3$.

We claim that once we have $|L_c| < 3$, Odd can play \rej\ at most two more
times. This follows from the fact that for every edge $(i, j) \in J$ with $i \in
L$ and $j \notin L$, we must have $X_j \subseteq F_a$ for some~$F_a$ in
$\mathcal{F}_c$.
Therefore, after Odd plays \rej\ two more times, we will arrive at a game
$\mathcal{S}_{c+2} = \simulate_G(S, \mathcal{F}, s, \ncnext, \nchistory)$ such
that, if $X_i \subseteq S$, then for all edges $(i, j) \in J$ we have $X_j = F_a$ for
some set~$F_a$ in
$\mathcal{F}$. This is because the vertex $i \in I$ such that $S = X_i$ is
always removed from $L$, and so it does not matter whether Odd uses $\slice$ or
$\reduce$ at this point. In $\mathcal{S}_{c+2}$, we know that Step~\ref{lbl:game2}
prevents Odd from playing \rej, and therefore the game will end.

So far, we have that Odd can play \rej\ at most $2 \cdot \log_{\frac{3}{2}}(|I|)
+ 2$ times. Note that between each instance of \rej, Odd can play \acc\ at most
$k + 1$ times without triggering Step~\ref{lbl:game5} or Step~\ref{lbl:game6}.
Since $|I| \le |V|$ we can therefore conclude that the game can last at most
$(k+1) \cdot (2 \cdot \log_{\frac{3}{2}}(|V|) + 2)$ times, and this is contained
in $O(k \cdot \log |V|)$.
\end{proof}

\subsection{Implementation on an alternating Turing machine}

Let~$G$ be a parity game, and let $G' = (V, V'_\text{0}, V'_\text{1}, E, \pri')$
be a parity game that swaps the roles of the two players in~$G$. More formally
we have:
\begin{iteMize}{$\bullet$}
\item $V'_0 = V_1$, and $V'_1 = V_0$.
\item For each $v \in V$ we set $\pri'(v) = \pri(v) + 1$.
\end{iteMize}
It should be clear that Even can win from a vertex $s \in V$ in $G$ if and only
if Odd can win from~$s$ in~$G'$.

Suppose that we want to find the winner of a vertex~$s$. We choose $c = O(k \cdot \log |V|)$, where
the constant hidden by the $O(\cdot)$ notation is given in the proof of
Lemma~\ref{lem:oddchooseset}. We then solve $\simulate^c_G(\{s\}, \emptyset,
s, \ncnext, \nchistory)$ and $\simulate^c_{G'}(\{s\}, \emptyset, s, \ncnext,
\nchistory)$. If both games declare that Even
wins $s$, then we declare that the game has treewidth larger than~$k$.
Otherwise, we declare that the winner of 
$\simulate^c_G(\{s\}, \emptyset, s, \ncnext, \nchistory)$ is the winner of $s$.
Lemmas~\ref{lem:evenwinspace},~\ref{lem:oddwinspace}, and~\ref{lem:oddchooseset}
imply that this procedure is correct.

Thus, to prove our main result, we must show that $\simulate^c_G(\{s\},
\emptyset, s, \ncnext, \nchistory)$ can be solved in $O(k^2 \cdot (\log |V|)^2)$
time and $O(k \cdot \log |V|)$ space by an alternating Turing machine. To
implement the game on an alternating Turing machine, we use the existential
states to simulate Even's moves and the universal states to simulate Odd's
moves. We begin by proving the time bound.

\begin{lem}
\label{lem:atime}
A simulation of $\simulate^c_G(\{s\}, \emptyset, s, \ncnext, \nchistory)$ by
an alternating Turing machine uses at most $O(k^2 \cdot (\log |V|)^2)$ time.
\end{lem}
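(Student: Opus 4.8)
The plan is to bound the running time as a product of two quantities: the number of game rounds the alternating Turing machine must simulate, and the time spent per round. First I would appeal to Lemma~\ref{lem:oddchooseset}, which already establishes that the play terminates within $O(k \cdot \log |V|)$ rounds whenever Odd follows $\sr$. Because the machine only simulates the round-bounded game $\simulate^c_G$ with $c = O(k \cdot \log |V|)$, it never needs to run more than $c$ rounds: if the play has not terminated by the start of round $c+1$, the machine simply declares Even the winner. Thus the depth of the alternating computation tree, measured in game rounds, is $O(k \cdot \log |V|)$.

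Second, I would show that each single round (one pass through Steps~\ref{lbl:game1}--\ref{lbl:game6}) can be carried out in $O(k \cdot \log |V|)$ time. The key observation is that every object the machine manipulates during a round has description length $O(k \cdot \log |V|)$: the current set $S$ and the set $S' = \ncnext(S, v, \mathcal{F})$ each contain at most $k$ vertices; the history $\mathcal{F}$ holds at most three records, each a set of at most $k+1$ vertices together with a strategy profile of at most $k+1$ priorities; the sequence $\Pi$ has at most $k+1$ entries; and each profile $P'$ proposed in Step~\ref{lbl:game3} has at most $k$ entries. Since a parity game on $|V|$ vertices uses at most $|V|$ distinct priorities, we have $\log(|D| + 1) = O(\log |V|)$, so each of these objects is stored in $O(k \cdot \log |V|)$ bits. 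I would then walk through the steps: the edge choice and the branching it induces in Step~\ref{lbl:game1} cost $O(\log |V|)$; the membership tests of Step~\ref{lbl:game2} compare $v$ against the $O(k)$ vertices of $S$ and of the records, costing $O(k \cdot \log |V|)$; writing the profile $P'$ in Step~\ref{lbl:game3} costs $O(k \cdot \log |V|)$; and the bookkeeping of Step~\ref{lbl:game4} --- computing $\maxpri(\Pi)$ by scanning the $O(k)$ entries of $\Pi$, applying $\update$ to the $O(1)$ records, appending a record, letting Odd prune records via $\nchistory$, and letting Odd write the new set $S'$ via $\ncnext$ --- is likewise $O(k \cdot \log |V|)$. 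The termination tests in Steps~\ref{lbl:game5} and~\ref{lbl:game6}, which scan $\Pi$ for a repeated vertex and compute the largest priority on the resulting cycle or suffix, also cost $O(k \cdot \log |V|)$. Multiplying the $O(k \cdot \log |V|)$ rounds by the $O(k \cdot \log |V|)$ cost per round yields the claimed $O(k^2 \cdot (\log |V|)^2)$ bound.

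The main obstacle I anticipate is the per-round accounting rather than the round count: I must check that no individual operation secretly costs more than $O(k \cdot \log |V|)$. The two places deserving scrutiny are the history manipulations in Step~\ref{lbl:game4} and the cycle/priority computations over $\Pi$ in Step~\ref{lbl:game6}, since both range over structures of size $\Theta(k)$. I would argue that, because $\mathcal{F}$ holds at most three records and $\Pi$ holds at most $k+1$ entries, each such traversal is linear in $k$ times the $O(\log |V|)$ cost of handling one vertex or priority, and so stays within budget. A secondary point worth stating explicitly is that, since the machine plays the unrestricted round-bounded game in which Odd may pick any $S'$ with $|S'| \le k$ and $v \in S'$, it need not compute or store the auxiliary set $L$ nor enforce the $\sr$ strategy; these appear only in the correctness and termination argument of Lemma~\ref{lem:oddchooseset}, and keeping them out of the implementation is exactly what lets each round stay within $O(k \cdot \log |V|)$ time.
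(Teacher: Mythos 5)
Your proposal is correct and follows essentially the same route as the paper's proof: bound the number of rounds by $O(k \cdot \log |V|)$ via the definition of the round-limited game $\simulate^c_G$, then verify step by step that each round costs $O(k \cdot \log |V|)$ because every manipulated object ($S$, $\Pi$, $P'$, the at most three or four records of $\mathcal{F}$) has size $O(k)$ and each vertex or priority takes $O(\log |V|)$ bits, using $|D| \le |V|$. Your closing remark that the machine need not store $L$ or implement $\sr$ is also made by the paper, though there it appears in the proof of the space bound (Lemma~\ref{lem:aspace}) rather than the time bound.
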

\begin{proof}
By definition we have that $\simulate^c_G(\{s\}, \emptyset, s, \ncnext,
\nchistory)$ allows the
game to continue for at most $O(k \cdot \log |V|)$ rounds. Each step of the game
takes the following time:
\begin{iteMize}{$\bullet$}
\item Since we are only required to guess a vertex in Step~\ref{lbl:game1},
this step can be implemented in $\log |V|$ time.

\item In Step~\ref{lbl:game2} checking whether $v \in S$ can be done in $k
\cdot \log |V|$ time. Note that we can have at most $3$ records in
$\mathcal{F}$. Therefore, the check for whether $v \in F$ for some $(F, p, P)
\in \mathcal{F}$ can also be performed in $3 \cdot k \cdot \log |V|$ time.

\item Generating a strategy profile in Step~\ref{lbl:game3} can take at most $k
\cdot \log(|D| + 1)$ time.

\item If Odd plays \acc\ in Step~\ref{lbl:game4}, then, if we always maintain a
pointer to the end of $\Pi$, appending $(v, P(u), u)$ to~$\Pi$ will take $2
\cdot \log |V| + \log |D|$ time.

\item If Odd plays \rej\ in Step~\ref{lbl:game4} then:
\begin{iteMize}{$-$}
\item Since computing $\maxpri(\Pi)$ can take at most $k \cdot \log |D|$ time,
and since $|\mathcal{F}| \le 3$, we have that $\mathcal{F'}$ can computed in $k
\cdot \log |D|$ time.
\item Appending $(F, -, P')$ to $\mathcal{F'}$ when creating $\mathcal{F''}$
will take $k \cdot \log |V| + \log(|D|+1) + k \cdot \log(|D| + 1)$ time.
\item Implementing Odd's strategy for $\nchistory$ can be done in $4 \cdot k
\cdot \log |V| + \log (|D| + 1) + k \cdot \log(|D| + 1)$ time. This is because
Odd is only allowed to remove records from $\mathcal{F''}$, and because
$|\mathcal{F''}| \le 4$.
\item Implementing Odd's strategy for $\ncnext$ involves picking a subset of
vertices with size at most $k$. This can be done in $k \cdot \log |V|$ time.
\end{iteMize}

\item In Step~\ref{lbl:game5}, since $|\mathcal{F}_i| \le 3$, we can check
whether $c \in F$, for some~$(F, p, P) \in \mathcal{F}_i$, in $3 \cdot k \cdot
\log |V|$ time. If the game stops here, then computing~$p'$, which requires us
to find $\maxpri(\Pi)$, must take at most $k \cdot \log |D|$ time.

\item In Step~\ref{lbl:game6}, determining if $\Pi$ forms a cycle can be done
by checking for each tuple $(v, p, u) \in \Pi$ if $c = v$. Since there can be at
most~$k$ tuples in $\Pi$, this can be done in $k \cdot \log |V|$ time. Finding
the highest priority on the cycle then takes at most $k \cdot \log |D|$ time.
\end{iteMize}
Since $|D| \le |V|$, we have that each round takes at most $O(k \cdot \log |V|)$
time. Since there are at most $O(k \cdot \log |V|)$ rounds, the machine must
terminate in $O(k^2 \cdot (\log |V|)^2)$ time. 
\end{proof}

Next, we prove the space bounds for the alternating Turing machine.

\begin{lem}
\label{lem:aspace}
A simulation of $\simulate^c_G(\{s\}, \emptyset, s, \ncnext, \nchistory)$ by an alternating Turing
machine uses at most $O(k \cdot \log |V|)$ space.
\end{lem}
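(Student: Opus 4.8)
The plan is to follow the accounting used in the proof of Theorem~\ref{thm:time}, adapting it to the variables maintained by $\simulate^c_G(\{s\}, \emptyset, s, \ncnext, \nchistory)$. Throughout a play, the machine must keep track of the current set $S$, the history $\mathcal{F}$, the simulated path $\Pi$, the current vertex $c$, and, during a single round, the vertex $v$ selected in Step~\ref{lbl:game1} and the profile $P'$ given in Step~\ref{lbl:game3}; in addition it needs a counter to implement the round bound $c$. The two structural facts that make everything fit are that $|S| \le k$ --- because $\ncnext$ only ever returns sets $S'$ with $|S'| \le k$ --- and that $\mathcal{F}$ contains at most three records, since $\nchistory$ is required to delete records whenever $|\mathcal{F}|$ would exceed three (with a single transient fourth record appearing when Step~\ref{lbl:game4} appends $(S, -, P')$ before $\nchistory$ is applied). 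I would bound each variable against these two facts.

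Next I would carry out the accounting term by term. The set $S$, and the set $F$ of each record $(F, p, P)$, are sets of at most $k$ vertices, so each costs $k \cdot \log |V|$ bits; the priority $p$ costs $\log |D|$ bits, and the profiles $P$ and $P'$ map at most $k$ vertices to $D \cup \{-\}$, costing $k \cdot \log(|D| + 1)$ bits each. Since at most four records are present at any instant, $\mathcal{F}$ fits in $O(k \cdot \log |V|)$ bits once we use $|D| \le |V|$. The path $\Pi$ is reset at the start of each game and contains at most $k + 1$ tuples, which, by the same argument as in Theorem~\ref{thm:time} (storing one priority and one vertex per tuple), also fits in $O(k \cdot \log |V|)$ bits. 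The single vertices $c$ and $v$ cost $\log |V|$ bits, and the round counter, which ranges up to $c = O(k \cdot \log |V|)$, needs only $O(\log(k \cdot \log |V|)) \subseteq O(\log |V|)$ bits. Summing these contributions gives the claimed $O(k \cdot \log |V|)$ bound.

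The step I expect to require the most care is the treatment of recursion and of the eligible-node set $L$. One might worry that the machine must maintain a stack for the recursive calls generated when Odd plays \rej\ in Step~\ref{lbl:game4}, or that it must store $L$, which can contain $\Theta(|V|)$ tree-decomposition nodes. Neither is the case, and I would argue this explicitly. The recursion is handled exactly as in Theorem~\ref{thm:time}: on \rej, the machine overwrites $\Pi$ and $c$ and updates the bounded history $\mathcal{F}$ and set $S$ in place, so no call stack grows --- the bounded size of $\mathcal{F}$ is precisely what licenses this reuse. The set $L$ and the strategy $\sr$ never appear in the implementation at all: they are analytic devices used only in Lemma~\ref{lem:oddchooseset} to show that the game halts within $c$ rounds. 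The universal states of the machine simply branch over every legal move of Odd, and a move is legal exactly when it respects the cardinality constraints imposed by $\ncnext$ ($|S'| \le k$ and $v \in S'$) and $\nchistory$ ($|\mathcal{F}| \le 3$); checking these constraints requires no knowledge of $L$. Since one of these branches corresponds to $\sr$, Lemma~\ref{lem:oddchooseset} guarantees that that branch terminates in time, while Lemma~\ref{lem:correct} guarantees correctness on any terminating branch. Having established that only the variables enumerated above must be stored, the $O(k \cdot \log |V|)$ space bound follows.
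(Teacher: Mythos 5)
Your proposal is correct and follows essentially the same route as the paper: enumerate the variables $S$, $\mathcal{F}$, $\Pi$, $c$, $v$, $P'$, bound each using $|S| \le k$ and $|\mathcal{F}| \le 3$, and observe that the eligible-node set $L$ and the strategy $\sr$ are purely analytic devices that the machine never stores. Your explicit remarks about the absence of a recursion stack and the small round counter are sound additions that the paper leaves implicit, but they do not change the argument.
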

\begin{proof}
In order to simulate the game, we must remember the set $S$, the vertex $c$ and
the vertex $v$, the strategy profile $P'$, the path $\Pi$, and the history
$\mathcal{F}$. It is important to note that we do not need to remember the set
$L$ used by $\sr$. In Lemma~\ref{lem:oddchooseset} we showed that there
\emph{exists} an Odd strategy that forces the game to terminate in $O(k \cdot
\log |V|)$. The existence of such a strategy is sufficient to ensure that the
alternating Turing machine computes the correct answer, and we do not need to
actually implement the strategy on the alternating Turing machine.

Therefore, the amount of space used by the alternating Turing machine is as
follows.
\begin{iteMize}{$\bullet$}
\item Since we always have $|S| \le k$, we know that $S$ requires at most $k
\cdot \log |V|$ bits to store. 
\item The vertices $c$ and $v$ each require $\log |V|$ bits to store. 
\item The strategy profile $P'$ contains $|S|$ mappings $u \rightarrow p$, where
$u \in S$ and $p \in D \cup \{ - \}$. Therefore, $P'$ can be stored using
$k (\log |V| + \log(|D| + 1))$ bits.
\item Since $|S| \le k$, the path $\Pi$ contains at most $k$ tuples of the form
$(v, p, u)$, where $v, u \in V$ and $p \in D \cup \{-\}$. Therefore, storing the
path~$\Pi$ requires at most $3 \cdot k \cdot (\log |V| + \log(|D| + 1))$ bits.
\item For each record $(F, p, P) \in \mathcal{F}$, the set $F$ requires $k \cdot
\log |V|$ bits to store, the priority $p$ requires $\log(|D| + 1)$ bits to
store, and we have already argued that the strategy profile~$P$ requires at most
$k \cdot (\log |V| + \log(|D| + 1))$ bits to store. Since $|\mathcal{F}| \le 3$,
we have that $\mathcal{F}$ requires at most $3 \cdot (2 \cdot k \cdot \log |V| +
(k + 1) \cdot \log(|D| + 1))$ bits to store.
\end{iteMize}
Since $|D| \le |V|$, we have shown that the alternating Turing machine requires
at most $O(k \cdot \log |V|)$ space. 
\end{proof}

Having shown these two properties, we now have the main result of this section.

\begin{thm}
\label{thm:space}
Let~$G$ be a parity game and~$k$ be a parameter. There is an alternating Turing
machine that takes $O(k^2 \cdot (\log |V|)^2)$ time, and uses $O(k \cdot \log
|V|)$ space, to either determine the winner of a vertex $s \in V$, or correctly
report that~$G$ has treewidth larger than~$k$.
\end{thm}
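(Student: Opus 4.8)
The plan is to assemble the theorem from the machinery already built: the algorithm is precisely the one described just before Lemma~\ref{lem:atime}. On an alternating Turing machine I would use existential states to realise Even's choices and universal states to realise Odd's choices, so that the machine decides whether Even wins a given simulation game. First I would construct, on the fly, the dual game $G'$ (swap $V_0,V_1$ and set $\pri'(v)=\pri(v)+1$); this needs no auxiliary storage beyond a constant, so it affects neither bound. Then, with $c$ chosen as the $O(k\cdot\log|V|)$ bound from Lemma~\ref{lem:oddchooseset}, I would run $\simulate^c_G(\{s\},\emptyset,s,\ncnext,\nchistory)$ and $\simulate^c_{G'}(\{s\},\emptyset,s,\ncnext,\nchistory)$ and combine their outcomes: report treewidth larger than $k$ exactly when Even wins both, and otherwise read the winner of $s$ off the first game.

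The correctness argument is the heart of the proof, and I would carry it out by a case split on which player owns $s$ in $G$. Writing $a_G$ for the event ``Even wins $\simulate^c_G$'' and $a_{G'}$ likewise, the first observation is that $s\in W_0$ forces $a_G$: a winning $\sigma$ for Even, played via $\follow_0(\sigma)$, never loses in Step~\ref{lbl:game5} or Step~\ref{lbl:game6} by Lemmas~\ref{lem:evenwinspace} and~\ref{lem:cycle2}, and the round-$c{+}1$ default hands Even any non-terminating play. By the duality of $G'$ the symmetric statement gives that $s\in W_1$ forces $a_{G'}$. Thus at least one of the two games is always won by Even, and when only one is, the reported winner of $\simulate^c_G$ is correct. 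The remaining point is soundness of the ``treewidth $>k$'' verdict: if both $a_G$ and $a_{G'}$ hold yet the treewidth were at most $k$, then for the game in which the true winner is Odd, Lemma~\ref{lem:oddchooseset} would let Odd force termination within $c$ rounds, whereupon Lemmas~\ref{lem:oddwinspace} and~\ref{lem:cycle3} (applied to $G$ or to $G'$) would make Odd win, contradicting that Even won both. Hence ``Even wins both'' occurs only when the treewidth exceeds $k$, and the verdict is correct.

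For the resource bounds I would simply invoke Lemma~\ref{lem:atime} and Lemma~\ref{lem:aspace}, which give $O(k^2\cdot(\log|V|)^2)$ time and $O(k\cdot\log|V|)$ space for a single game. Running two games and forming a fixed Boolean combination of the two verdicts costs only constant factors: complementing an alternating computation swaps accepting with rejecting and existential with universal states, and a conjunction or disjunction of boundedly many instances is obtained by prefixing a single universal or existential branch, so both the simultaneous time and the space stay within the same asymptotic class.

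I expect the main obstacle to be the asymmetry introduced by the round bound $c$: because Even is declared the winner of any play that survives to round $c{+}1$, the two games $\simulate^c_G$ and $\simulate^c_{G'}$ are \emph{not} exact complements, and indeed Even can win both when the treewidth is large. The delicate step is therefore pinning down exactly when ``Even wins both'' can happen and showing it coincides with ``treewidth $>k$'', which is where Lemma~\ref{lem:oddchooseset} must be combined with the no-loss Lemmas~\ref{lem:oddwinspace} and~\ref{lem:cycle3} in just the right way.
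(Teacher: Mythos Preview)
Your proposal is correct and follows essentially the same approach as the paper: the paper also runs $\simulate^c$ on both $G$ and the dual game $G'$, declares treewidth $>k$ when Even wins both, and otherwise outputs the winner of the first game, with correctness attributed to Lemmas~\ref{lem:evenwinspace}, \ref{lem:oddwinspace}, and~\ref{lem:oddchooseset} and the resource bounds to Lemmas~\ref{lem:atime} and~\ref{lem:aspace}. If anything, your correctness case-split is more explicit than the paper's one-line citation, and your invocation of Lemmas~\ref{lem:cycle2} and~\ref{lem:cycle3} for the Step~\ref{lbl:game6} case fills in a detail the paper leaves implicit.
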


A simple corollary of the time and space results for our alternating Turing
machine is that our problem lies in the class NC$^2$~\cite[Theorem
22.15]{Allender}. 

\begin{cor}
The problem of solving a parity game with bounded treewidth lies in $NC^2$.
\end{cor}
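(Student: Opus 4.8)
The plan is to obtain the corollary directly from Theorem~\ref{thm:space} by invoking the standard machine characterization of NC$^2$. The tool I would use is the result cited as~\cite[Theorem 22.15]{Allender}, which states that a decision problem lies in NC$^2$ exactly when it can be decided by an alternating Turing machine that \emph{simultaneously} uses $O(\log n)$ space and $O((\log n)^2)$ alternating time, where $n$ is the size of the input. It therefore suffices to exhibit such a machine for the problem of solving a parity game of bounded treewidth.

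First I would fix the treewidth parameter $k$ to be a constant, which is precisely what the phrase ``bounded treewidth'' means. With $k$ held constant, the bounds of Theorem~\ref{thm:space} simplify: the running time $O(k^2 \cdot (\log |V|)^2)$ collapses to $O((\log |V|)^2)$, and the space usage $O(k \cdot \log |V|)$ collapses to $O(\log |V|)$. Since $|V|$ is at most the input size $n$, we have $\log |V| \le \log n$, so these are exactly the $O(\log n)$ space and $O((\log n)^2)$ time bounds demanded by the characterization of NC$^2$.

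The one point that requires a short argument is the disjunctive guarantee in Theorem~\ref{thm:space}: the machine either outputs the winner of $s$, or \emph{correctly} reports that $G$ has treewidth larger than $k$. Because we have restricted attention to games whose treewidth is at most the fixed bound $k$, the machine can never truthfully report that the treewidth exceeds $k$, and hence it must output the winner of $s$. Thus the machine genuinely decides the problem within the stated resource bounds, and~\cite[Theorem 22.15]{Allender} yields membership in NC$^2$.

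I expect the only genuinely delicate issue to be ensuring that no auxiliary computation --- in particular, the construction of a tree decomposition --- is smuggled in outside the claimed NC$^2$ bounds, since it is not known whether a tree decomposition can be produced in NC$^2$. This is exactly where the design of Section~\ref{sec:space} pays off: the version of the simulation game underlying Theorem~\ref{thm:space} delegates the choice of the sets~$S$ to player Odd (via $\ncnext$ and $\nchistory$) rather than reading them off a precomputed decomposition, so the alternating machine never needs a tree decomposition as input or as a preprocessing step. Consequently the entire procedure fits within $O(\log n)$ space and $O((\log n)^2)$ time, completing the reduction to the characterization of~\cite{Allender}.
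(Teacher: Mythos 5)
Your proposal is correct and follows essentially the same route as the paper, which likewise obtains the corollary by applying the alternating-machine characterization of NC$^2$ from~\cite[Theorem 22.15]{Allender} to the simultaneous $O(k^2\cdot(\log|V|)^2)$-time, $O(k\cdot\log|V|)$-space bounds of Theorem~\ref{thm:space}. Your additional remarks --- fixing $k$ as a constant, observing that the ``treewidth larger than $k$'' branch cannot fire on promised inputs, and noting that no precomputed tree decomposition is required --- are all consistent with, and merely make explicit, what the paper leaves implicit in its one-sentence derivation.
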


\section{Conclusion}

We have seen three results: a 
$O(|V|^{k+3} \cdot k^{k + 2} \cdot (|D| + 1)^{3k + 2})$ time algorithm for
parity games with bounded DAG width, a 
$O(|V| \cdot (k+1)^{k + 5} \cdot (|D| + 1)^{3k + 5})$ time algorithm for parity
games with bounded treewidth, and a proof that parity games with bounded
treewidth lies in NC$^2$.

It is worth noting that the running time of our algorithm for parity games with
bounded DAG width includes a factor of $|V|^{k+2}$, because that is the best
known upper bound for the number of edges that can appear in a DAG
decomposition. If, in the future, better upper bounds are derived, then the
running time of our algorithm will see a corresponding improvement.

An interesting open problem is: are there \emph{fixed parameter tractable}
algorithms for solving parity games with bounded treewidth? An algorithm is
fixed parameter tractable if its running time can be expressed as
$O(\mathit{poly}(V) \cdot f(k))$, where the degree of the polynomial is
independent of the treewidth. Neither of the algorithms that we have presented
satisfy this property, because they include the terms $|V|^{k+3}$ and $(|D| +
1)^{3k + 5}$, respectively. In the case of DAG width, if the upper bounds on the
size of the DAG decomposition are tight, then it seems unlikely that there exist
fixed parameter tractable algorithms. However, for the case of treewidth, it is
entirely possible that such algorithms could exist.

\bibliographystyle{alpha}
\bibliography{references}

\end{document}